\let\hat\widehat
\newtheorem{thm}{Theorem}
\newtheorem{lem}[thm]{Lemma}
\theoremstyle{remark}
\newcommand\E{\mathbb{E}}
\newcommand\cL{{\cal L}}
\renewcommand\P{\mathsf{P}}
\newcommand\p{\mathsf{p}}
\newcommand\B{\mathsf{B}}
\newskip\beforeproofvskip
\newskip\afterproofvskip
\def\prooftag{Proof}
\def\proofskip{\enspace}
\def\proof{\@ifnextchar[{\@@proof}{\@proof}}  
\def\@startproof{\par\vskip\beforeproofvskip\leavevmode}
\def\@proof{\@startproof{\scshape\prooftag.}\proofskip}
\def\@@proof[#1]{\@startproof {\scshape\prooftag #1.}\proofskip}
\let\hat\widehat
\DeclareMathOperator*{\di}{\mathrm{d}\!}
\begin{document}

\begin{frontmatter}

\title{Measuring Human Activity Spaces from GPS Data with Density Ranking and Summary Curves}
\runtitle{Measuring Human Activity Spaces}

\begin{aug}
  \author{\fnms{Yen-Chi}
    \snm{Chen}\ead[label=e1]{yenchic@uw.edu}}\and
  \author{\fnms{Adrian}
    \snm{Dobra}\ead[label=e2]{adobra@uw.edu}}
  \affiliation{Department of Statistics\\University of Washington}
  \runauthor{Y.-C. Chen and A. Dobra}
  \address{Department of Statistics\\University of Washington\\
Box 354322\\ Seattle, WA 98195 \\
          \printead{e1}}
    \address{Department of Statistics,\\ Department of Biobehavioral Nursing and Health Informatics,\\ and Center for Statistics and the Social Sciences\\University of Washington\\
Box 354322\\ Seattle, WA 98195 \\
          \printead{e2}}      
        \today
\end{aug}

\begin{abstract}
 Activity spaces are fundamental to the assessment of individuals' dynamic exposure to social and environmental risk factors associated with multiple spatial contexts that are visited during activities of daily living.  In this paper we survey existing approaches for measuring the geometry, size and structure of activity spaces based on GPS data, and explain their limitations. We propose addressing these shortcomings through a nonparametric approach called density ranking, and also through three summary curves: the mass-volume curve, the Betti number curve, and the persistence curve. We introduce a novel mixture model for human activity spaces, and study its asymptotic properties. We prove that the kernel density estimator which, at the present time, is one of the most widespread methods for measuring activity spaces is not a stable estimator of their structure. We illustrate the practical value of our methods with a simulation study, and with a recently collected  GPS dataset that comprises the locations visited by ten individuals over a six months period.
\end{abstract}

\begin{keyword}[class=MSC]
\kwd[Primary ]{62G07}
\kwd[; secondary ]{62P25, 91C99}
\end{keyword}
\begin{keyword} 
\kwd{activity space} 
 \kwd{global positioning systems (GPS)}
 \kwd{human mobility}
 \kwd{kernel density estimation} 
 \kwd{space-time geography}
 \kwd{topological data analysis}
\end{keyword}

\end{frontmatter}

\section{Introduction}

Collecting and statistical modeling of data on human movement in time and space is an important research endeavor in many fields, such as spatial epidemiology, demography and population science, urban design and planning, transportation research and environmental psychology \citep{RN156,RN143,entwisle-2007,hurvitz-2014,chen-et-2016,dobra-tanser-aids}. Mapping individuals is difficult because a person's residence does not reflect their interaction with the physical and social environment \citep{kwan-2009}. Individuals spend considerable time away from their residences and traverse multiple administrative boundaries in their daily activities \citep{RN151,RN147}. For this reason, it is paramount to trace an individual through multiple spatial contexts to study environmental risk factors for disease \citep{cummins-et-2007}. Statistical analyses that connect individuals to places by focusing on residential neighborhoods or administrative boundaries (e.g., census tracts) cannot capture short term but repetitive exposures to neighborhood-based risk factors (e.g., risk of violence or density of alcohol outlets). Going beyond the residential neighborhood of a person by collecting fine-grained positional data about where people actually spend time is especially relevant in studies that relate individual health to locally variable environmental factors \citep{basta-et-2010}.

As human beings are inherently mobile, data about their spatiotemporal trajectories of travel are needed to construct relevant representations of their activity spaces. The notion of activity space has been introduced in the social sciences \citep{golledge-stimson-1997}, and has its roots in the space-time-travel geography in which an individual's movements in time and space are conceptualized as space-time prisms \citep{hagerstrand-1963,hagerstrand-1970}. Activity spaces measure individual spatial behavior, and capture individuals' experience of place in the course of their daily living through their observed location choices \citep{golledge-1999}. They have been used to study the influence of the built environment on individuals' healthcare accessibility \citep{sherman-et-2005}. Activity spaces play a role in the study of social exclusion of individuals with low use of physical space which are less likely to be engaged in society \citep{schonfelder-axhausen-2003}. Questions of interest relate to whether such individuals concentrate spatially, or are randomly scattered in the population. Are these individuals socially excluded from certain parts of the physical environment which could lead, for example, to lower chances of securing a job or higher costs of living? Activity spaces have also been used, among many applications, to assess segregation \citep{wong-shaw-2011}, to measure exposure to food environments \citep{kestens-et-2010,christian-2012}, and to understand the geographic mobility patterns of older adults \citep{hirsch-2014}.

Until about 15 years ago, research on activity spaces relied on locational data from travel diaries in which participants shared information about the trips they took in the past \citep{schonfelder-axhausen-2003,schonfelder-axhausen-2004}. However, places outside the home neighborhood that are not socially significant are harder to be remembered, and consequently they will be more likely to be missing from surveys. Smartphone-based location traces have recently become available for the study of human mobility and have proven particularly interesting, by providing the possibility of recording movements over time of individual people and aggregate movements of whole populations \citep{dobra-et-2015,williams-et-2015}. This exciting new type of data holds immense promise for studying human behavior with a precision and accuracy never before possible with surveys or other data collection techniques \citep{RN143}. Many high-resolution smartphone-based GPS location datasets have already been successfully collected, and subsequently employed to assess human spatial behavior and spatiotemporal contextual exposures \citep{RN144,RN146,RN149}, to characterize the relationship between geographic and contextual attributes of the environment (e.g., the built environment) and human energy balance (e.g., diet, weight, physical activity) \citep{RN145,RN151}, to study segregation, environmental exposure, and accessibility in social science research \citep{RN147}, or to understand the relationship between health-risk behavior in adolescents (e.g., substance abuse) and community disorder \citep{RN148,basta-et-2010,RN153}. The wide array of completed and ongoing GPS studies provide key evidence that many people feel comfortable having their movements tracked \citep{RN155}.

In this paper, we survey existing approaches for measuring the geometry, size and structure of activity spaces based on GPS data such as ellipses, shortest-path spanning trees, and kernel density estimation, and explain the disadvantages of their use. To correct their shortcomings, we put forward a set of tools for measuring human activity spaces that comprise a nonparametric approach called density ranking and three types of summary curves. These curves fall within the broader domain of topological data analysis which is a flexible framework for detecting the structure and creating lower-dimensional summaries of distributions of complex or high-dimensional datasets \citep{kaczynski-et-2004,edelsbrunner2008persistent,edelsbrunner-harer-2009,carlsson2009topology,lum-et-2013,ghrist-2014,chazal2017introduction,wasserman2016topological,wasserman-2018}. The summary curves we discuss are based on level sets of density ranking, which is closely related to level sets of a probability density function and the minimum volume set \citep{polonik1997minimum,garcia2003level,scott2006learning,cadre2013estimation}. 

The structure of the paper is as follows. In Section \ref{sec::GPS} we describe the GPS data we use to motivate and illustrate our developments. This is a never before analyzed dataset that comprises the spatiotemporal trajectories of daily living over a six months period of ten individuals from a rural area in sub-Saharan Africa. In Section \ref{sec::activity} we present background on human activity spaces, and describe existent methods for measuring them.  In Section \ref{sec::DR} we present density ranking, and in Section \ref{sec::tda} we discuss three types of summary curves: the mass-volume curve, the Betti number curve, and the persistence curve. In Section \ref{sec::model} we introduce a novel mixture model for activity spaces, and study its asymptotic properties. In Section \ref{sec::sim} we present a simulation study. In Section \ref{sec::DA} we apply density ranking and summary curves to the GPS data described in Section~\ref{sec::GPS}.  Finally, in Section \ref{sec::discussion} we comment on the relevance of the proposed set of tools in the context of health research. We provide R scripts that implement our proposed methods  at \url{https://github.com/yenchic/density_ranking}.

\section{GPS data}	\label{sec::GPS}

We employ data from a GPS pilot study that involved three men and seven women that reside in a rural region of sub-Saharan Africa. The study took place in 2016 with the approval of the local biomedical research ethics committee. These data have not been analyzed before. Each study participant was provided with a GPS-enabled Android smartphone for a period of six months. The smartphones together with their voice and data plans whose costs have been covered by the pilot study served as an effective incentive for study participation and adherence to the data collection protocol. The participants were asked to carry the smartphones with them at all times, and also to keep them operational by regularly charging them. All ten participants have been compliant with the protocol of the study, and have returned their devices at the end of the study period.

The Android smartphones employ an assisted GPS system which produces accurate coordinate data with less battery power (allowing a phone to remain charged for at least 48 hours) than traditional GPS devices (e.g., GPS trackers). The positional data that were recorded contain timestamps, smartphone unique identifiers, latitude and longitude coordinates, and information related to the accuracy of the reported coordinates (e.g. satellite connectivity). The smartphones were registered with a Mobile Device Management (MDM) software that allowed the study personnel to manage, secure, monitor and track the smartphones from an easy to use online dashboard. The positional data were securely transmitted to a study database residing on a secure server over cellular or wireless networks using state of the art encryption techniques every time the smartphones had a data connection. The data were deleted from the smartphones immediately after transmission. This protocol guarantees that no confidential positional information could be accessed if a smartphone was lost or stolen.

The ages of the study participants were between 34 and 48 years. They share the same place of work. Their residences are located within a short commute of a couple of kilometers. The rural study area has a township in which most stores and markets are located. The local road network comprises a major primary road that traverses the township and several secondary roads. There are additional unpaved roads about which we did not have GIS data.  The data comprise between 3,500 and 8,500 GPS locations for each of the ten study participants. The MDM software installed on the phones was set to transmit a new location every time a device moved more than 250 meters. For this reason, more locations were recorded for those participants that traveled more. Figure \ref{fig::gps1} shows the GPS locations recorded for one of the study participants who was most active in the rectangular area shown in red in the left panel, but also took several trips to more distant locations.

\begin{figure}[!ht]
\includegraphics[width=1.6in]{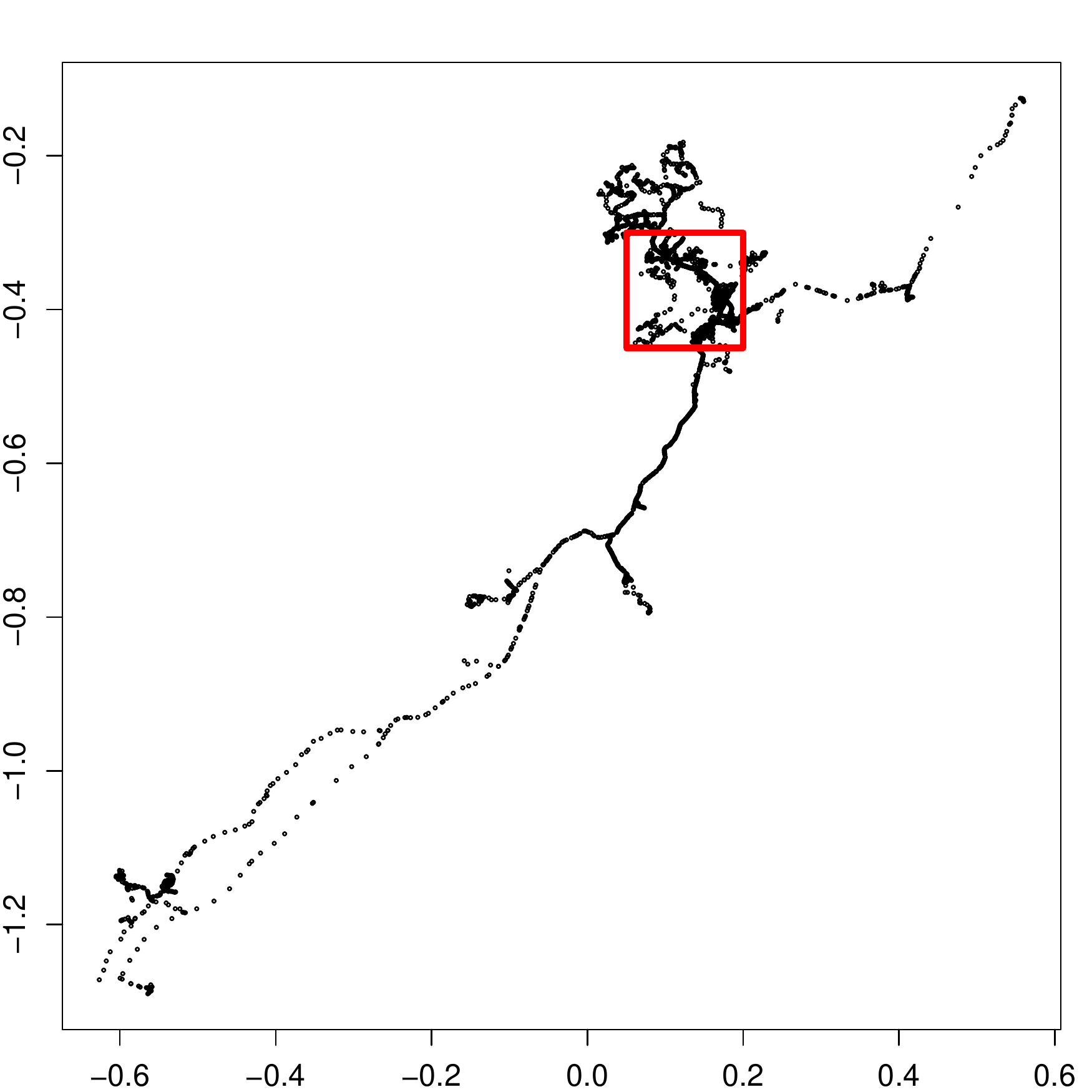}
\includegraphics[width=1.6in]{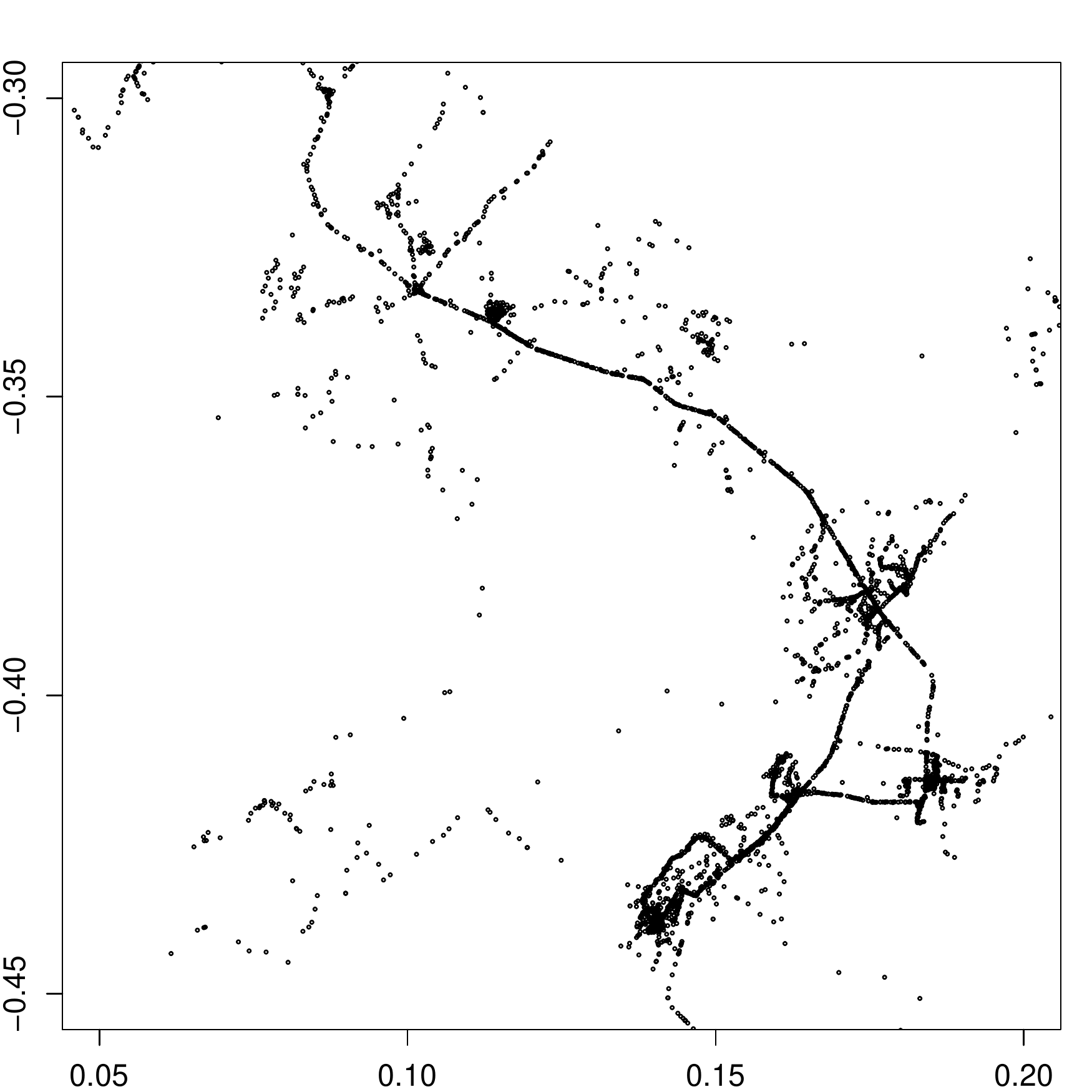}
\includegraphics[width=1.6in]{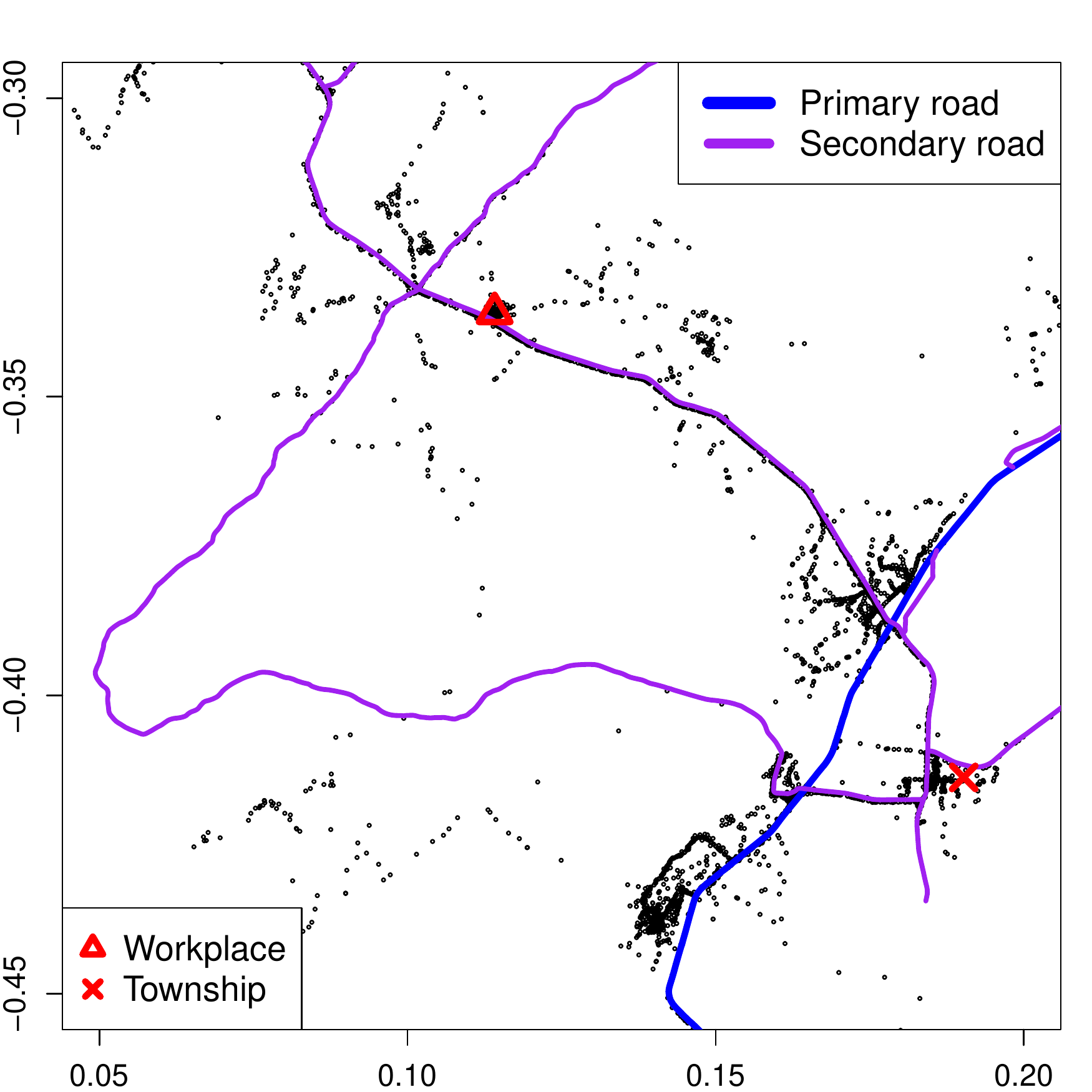}
\caption{Plots of the GPS data from one study participant. The left panel displays the complete GPS records of this individual.  The latitude (x-axis) and longitude (y-axis) coordinates were shifted and scaled to preserve the privacy of the study participants.
The middle panel represents the zoom-in area of the rectangular region shown in red in the left panel.
In the right panel, the relevant GIS data was superimposed on the GPS locations: primary (blue) and secondary (purple) roads, the workplace (red triangle) of the study participants, and the location of the township of the study area (red cross).}
\label{fig::gps1}
\end{figure}

\section{Existent approaches for measuring human activity spaces}	\label{sec::activity}

Activity spaces represent the spatial areas within which an individual has direct contact during their daily travels. However, people do not move randomly in space. Due to the various preferences, needs, knowledge, constraints and limitations of movement, the areas visited by an individual are concentrated around one, two or more anchor locations that serve as origin and destination hubs of the routes followed by an individual. The anchor locations have key material or symbolic meaning for an individual: they include their home and work locations, together with, for example, the location of a child's school, favorite market and grocery store, a preferred entertainment venue or an airport. The probability of visiting a particular spatial location decreases as a function of its distance to the anchor locations, and depends on its relative position with respect to the most frequent directions of daily travel. The shape, structure and spatial extent of activity spaces are a function of the spatial configuration of the anchor locations, and of the routes travelled between and around them \citep{schonfelder-axhausen-2003}.

Characterizing the activity space of an individual involves: (i) determining the number and the spatial configuration of the anchor locations; (ii) identifying the places the individual is most likely to experience in addition to the anchor locations, and differentiating these places from other places which the individual is considerably less likely to come in direct contact with; (iii) mapping the spatial configuration of these locations; and (iv) developing measures that quantify the geometry and spatial structure of the individual's activity space. Such measures capture the individual's degree of mobility while accounting for the underlying preferences for certain travel routes. Activity spaces are not designed to capture the maximal area in which an individual is active. Instead, they consist of one, two or more spatially contiguous areas structured around the anchor location in which an individual regularly engages in activities of daily living, together with the routes used by the individual to travel between these areas.

We denote by $\mathcal{T} = \{ T_1,T_2,\ldots,T_n\}$ the GPS positional data of an individual. The $j$-th location is $T_j=(x_j,y_j,t_j)$ where $x_j$ denotes latitude, $y_j$ denotes longitude, and $t_j$ denotes the time when the location $(x_j,y_j)$ was visited. We assume that $t_1< t_2 < \ldots < t_n$. The set of visited locations $\mathcal{X} = \{ X_1,X_2,\ldots,X_n\}$ where $X_i=(x_i,y_i)$ is the projection of $\mathcal{T}$ onto the latitude and longitude coordinates. The times when the locations were visited together with the order in which locations were visited are lost through this projection. Information about the routes travelled by an individual are comprised in $\mathcal{T}$, but are absent in $\mathcal{X}$. The set of anchor locations are denoted by $\mathcal{A} = \{ A_1,A_2,\ldots,A_{n_0}\}$. We note that $\mathcal{A}$ is not necessarily a subset of $\mathcal{X}$ since some anchor locations might need to be inferred from possibly noisy GPS measurements.

The existent literature has introduced several approaches for characterizing activity spaces. We describe them below, together with their advantages and limitations.

\subsection{Ellipses} This appears to be one of the earliest and most popular method for measuring activity spaces which has been concurrently developed in several research domains such as biological habitat research, transportation research, and human geography \citep{schonfelder-axhausen-2003,schonfelder-axhausen-2004}. Ellipses are fit to the set of visited locations $\mathcal{X}$ based on knowledge of the most relevant anchor locations in $\mathcal{A}$ such as residence and workplace. 

There are two kinds of ellipses: the standard deviational or confidence ellipse, and the home-work ellipse \citep{chaix-et-2012}. The standard deviational ellipse is determined based on the assumption that the locations $\mathcal{X}$ follow a bivariate normal distribution. This distribution can be centered around a central location determined as the arithmetic mean of the unique coordinates in $\mathcal{X}$, or the weighted average by the frequency of visits at some locations. Since these averages might not designate an actual real-world address, the central location can be an anchor location \--- typically the home location which is recognized as the focal point of the lives of most people. The major axis of the standard deviational ellipse is the regression line of the latitude on the longitude coordinates, thus the orientation of the ellipse reflects the sign of the correlation between coordinates. It is customary to report one and two standard deviational ellipses corresponding to 68\% and 95\% coverage probabilities \citep{sherman-et-2005}.

The home-work ellipses differ from the standard deviational ellipses in that they are defined with respect to two anchor locations which become the two focal points of the ellipse. Typically the focal points of the ellipse are selected to be the home and work locations. This defines the major axis of the ellipse. Its minor axis is determined by selecting one additional visited location which could be another anchor location, or the most distant location in $\mathcal{X}$ from the two focal points \citep{newsome-et-1998}. 

Measures that describe an activity space represented through the space inside an ellipse are the area of the ellipse which expresses the extent of the activity space, and the ratio of the length of the major and minor axes which represents the relative extent to which an individual deviates from its most frequently used route (e.g., home to work and back) \citep{newsome-et-1998}.

One major disadvantage of representing activity spaces through ellipses are their relatively inflexible geometry: the spatial distribution of  activity locations is constrained to the shape of the ellipse. Locations inside the ellipse are considered to be likely places of daily activities, while the locations outside the ellipse are viewed as unlikely travel locations. This is a problem because the actual shape of activity spaces could be quite different than that of an ellipse, and could comprise non-overlapping spatial regions. Moreover, ellipses could suggest larger activity spaces since they capture the underlying variability of locations and are not robust to outliers. In addition, an ellipse imposes a symmetry of the activity space around its center even if half of the area covered by the ellipse does not contain any locations in $\mathcal{X}$. To get around these issues, \citet{schonfelder-axhausen-2004} proposed using amalgamations of ellipses constructed around two or more anchor locations (e.g, one ellipse having home location as its center, and another ellipse having the work location as its center), while \citet{rai-et-2007} have shown how to fit three other curved geometrical shapes: the Cassini oval, the bean curve and the superellipse which comprises a circle and an ellipse. Selecting one of these shapes is based on particular assumptions about the form of the activity space: one, two, three or four clusters of locations with or without intermediate locations between them. Nevertheless, determining which (if any) of these assumptions is appropriate for a certain spatial pattern of locations $\mathcal{X}$ cannot be done without performing a visual inspection which is problematic for applications that involve a large number of mobility profiles.

\subsection{Minimum convex polygons} In this approach, the activity space of an individual is defined as the area delimited by the smallest convex polygon that contains all the locations in $\mathcal{X}$. This method has been applied to study both animal and human activity spaces \citep{worton1987,buliung-kanaroglou-2006,fan-et-2008,lee-et-2016}. Although the determination of minimum convex polygons is computationally straightforward, they cannot properly capture the shape of an individual's activity space which is typically irregular due to certain areas in the proximity of the locations in $\mathcal{X}$ being very unlikely to be visited (e.g., inaccessible or undesirable locations). As such, they identify activity spaces as being spatially larger than other approaches \citep{hirsch-2014}. Other shortcomings of minimum convex polygons include: (i) the anchor locations $\mathcal{A}$ and other most frequently visited locations are not represented or even identified; (ii) they imply that an individual is active in only one contiguous spatial area; and (iii) outlier locations in $\mathcal{X}$ can significantly change the coverage and the shape of the resulting activity spaces. The spatial extent of minimum convex polygons is typically measured using their area and perimeter, while their shape is measured through their compactness \citep{manaugh-et-2012,harding-et-2013}. This is a measure of how circular a polygon is defined as the ratio between the area and the perimeter squared, multiplied by $4\pi$. Its values range from near 1 (a polygon very close to a circle) to near 0 (an elongated polygon close to a line). The shape of ellipses can also be measured using their compactness scores.

\subsection{Shortest-path spanning trees} This method employs a more realistic representation of human travel: individuals most often move via road networks instead of by apparition or ``as crow flies'' from one place to another. As opposed to the other three approaches which employ only the locations $\mathcal{X}$, the shortest-path spanning trees are constructed with respect to a road network that spans the reference area, and also with respect to the order in which the locations in $\mathcal{X}$ were visited. The routes followed by an individual during their daily travels are approximated by projecting the locations in $\mathcal{X}$ on the road network, then by connecting each pair of consecutive locations (seen as an origin-destination trip) by the shortest path on the road network between them \citep{schonfelder-axhausen-2003,schonfelder-axhausen-2004}. \citet{golledge-1999} argues that road networks affect the individuals' perception and knowledge of places, therefore activity spaces should be based on the paths followed by the travelers. As such, the activity space of an individual is represented by the spanning tree that covers the part of the network defined by the union of the shortest road network paths that connect consecutive visited locations. The spanning tree can be measured using its length, or using the total area of buffers with a fixed length (e.g., 200 meters) around the road network segments. These buffers attempt to capture the space around the road network segments that might be known to an individual by walking around \citep{kim-et-2015}. Anchor locations and segments that are more intensely used on the road network can be determined based on the visitation frequencies.

An advantage of the shortest-path spanning trees is that this approach moves away from the assumption that individuals have a continuous knowledge about the space around and between the locations they visit \--- ellipses and minimum convex polygons are based on this assumption. Their shortcomings come from their dependence on the availability of road network data. Such data might not have been collected at all or have lower quality in rural areas or in low resource countries. Moreover, if the visited locations are recorded at larger time intervals, approximating the route followed by an individual by the shortest path between two consecutive locations might be crude: the individual might have traveled significantly more than the shortest path would indicate.

\subsection{Kernel density estimation}\label{sec::kde} This approach considers a raster grid cells that partitions a wider area that includes the set of visited locations $\mathcal{X}$ which is seen as a point pattern. An activity surface over this wider area is generated by assigning a value to each cell in the raster based on the distances from the center of the cell to the locations $\mathcal{X}$ \citep{kwan-2000,buliung-2001}. The probability that the individual that visited the locations $\mathcal{X}$ was also active in a particular cell is proportional with the value assigned to that cell. The kernel density estimator (KDE) is the sum of ``bumps'' centered at the locations $\mathcal{X}$. The estimate of the bivariate density at grid point $x$ (also referred to as the intensity at $x$) is given by \citep{silverman-1986}:
\begin{eqnarray}\label{eq:kde}
\hat{\sf p}(x) = \frac{1}{nh^2} \sum_{i=1}^n K\left(\frac{d_i(x)}{h}\right).
\end{eqnarray}
Here $K(\cdot)$ is a kernel, $h$ is the bandwidth or smoothing parameter, and $d_i(x)$ is the distance between the grid point $x$ and the $i$-th visited location $X_i=(x_i,y_i)\in \mathcal{X}$. The most usual choice for $K(\cdot)$ is a radially symmetric unimodal probability density function such as the bivariate normal density. However, since the number of visited locations $\mathcal{X}$ could be very large for some GPS studies, it is preferable to employ a kernel that does not require evaluating the value of the kernel at all points in $\mathcal{X}$ for every grid point. For example, consider the quartic (biweight) kernel function \citep{silverman-1986}:
$$
 K_2(x) = \left\{ \begin{array}{cc} \frac{3}{\pi}\left(1-x^2\right)^2, & \mbox{if } |x|<1,\\ 0, & \mbox{otherwise.}\end{array}\right.
$$
With this choice, the KDE from Eq. (\ref{eq:kde}) becomes:
\begin{eqnarray}\label{eq:kdequart}
 \hat{\sf p}(x) = \frac{3}{\pi h^2} \sum\limits_{d_i(x) < h} \left( 1 - \left(\frac{d_i(x)}{h}\right)^2\right)^2.
\end{eqnarray}
Thus, locations in $\mathcal{X}$ outside a circle with radius $h$ centered at $x$ are dropped in the evaluation of $\hat{\sf p}(x)$. The probabilities of visiting grid cells that are at larger distances from the most frequently visited areas will be smaller compared to the probabilities of visiting grid cells that are at smaller distances. The choice of bandwidth $h$ is very important as larger bandwidths give more smoothing. However, for the KDE (\ref{eq:kdequart}), $h$ also represents the maximum distance of spatial interaction between locations. Therefore the choice of $h$ for a particular application could reflect the understanding of proximity and neighborhood in daily travel for the area in which the location data was collected \citep{schonfelder-axhausen-2003}. 

In the KDE approach, the activity space of an individual comprises all the grid cells with an estimated probability (density) of visitation above a certain threshold $\tau_1>0$. The anchor locations can be identified as those grid cells with an estimated probability density of visitation above a second threshold $\tau_2\in (\tau_1,\infty)$. Kernel density estimation can identify activity spaces of any shape, and can also estimate the corresponding anchor locations which is something the ellipse and the minimum convex polygon methods cannot do. The shortest-path spanning trees rule out locations that are not on the road network they were defined on. For this reason, the KDE approach seems to be the most flexible existent approach for activity space determination. Measuring the resulting activity spaces can be done by calculating the area covered by the grid cells included in them. It is possible to eliminate some of these areas if they are known to be unfavorable to activities of daily living (e.g., heavy industrial and utility areas), thereby refining the shape of the activity spaces \citep{schonfelder-axhausen-2003,schonfelder-axhausen-2004}.

\section{Density ranking}	\label{sec::DR}
Despite its flexibility in measuring activity spaces, kernel density estimation sometimes fails to yield adequate results when applied to GPS datasets. 
Consider the left panel of Figure~\ref{fig::alpha} in which we show the KDE of locations from the region in the middle panel of Figure~\ref{fig::gps1}. 
Although it correctly identifies two peaks with the highest concentration of locations, the KDE does not capture much of the underlying structure of the GPS data. In this section, we discuss an alternative to KDE called density ranking that captures much more of the underlying mobility patterns of this individual \--- see the middle and right panels of Figure~\ref{fig::alpha}. It is apparent that many finer structures are not discernible using KDE, but they can be easily recognized when using density ranking. The KDE map only shows two grid cells that have high intensity: the workplace and another location that might be the home of this individual. On the other hand, the density ranking maps show the existence of numerous other grid cells located on the spatial trajectory followed by this individual. These regions represent the location of the township, road intersections or road segments. 

\begin{figure}[!ht]
\includegraphics[width=1.6in]{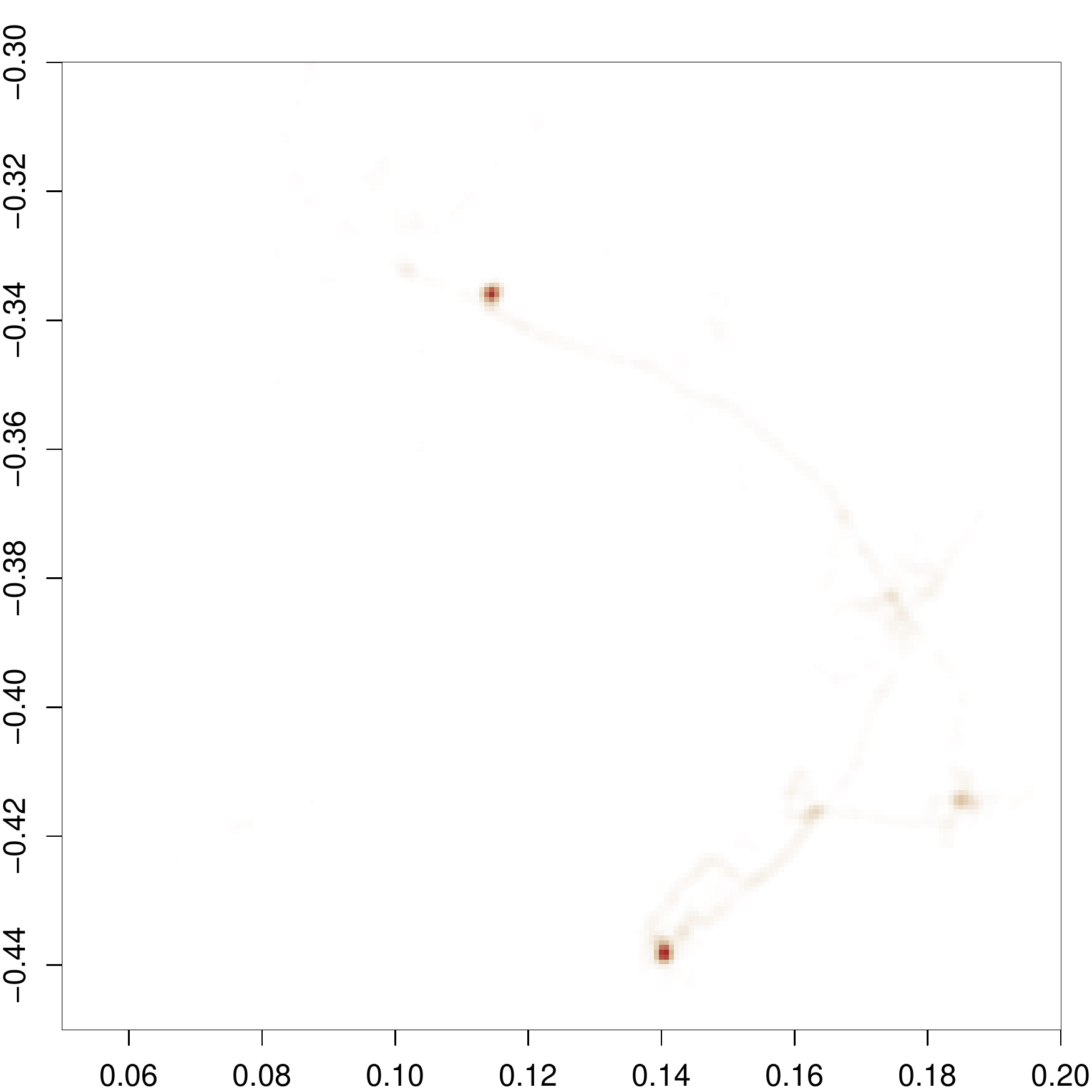}
\includegraphics[width=1.6in]{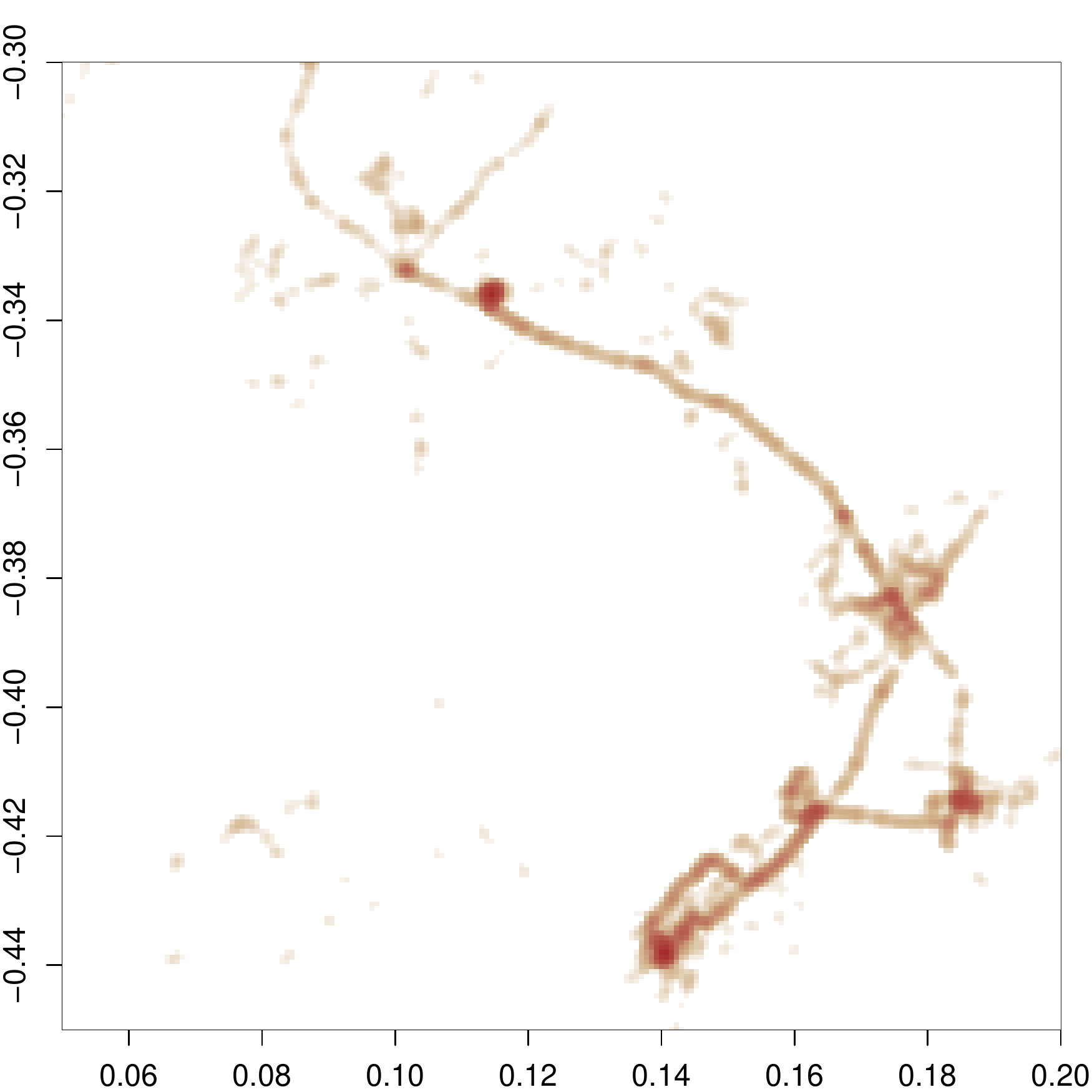}
\includegraphics[width=1.6in]{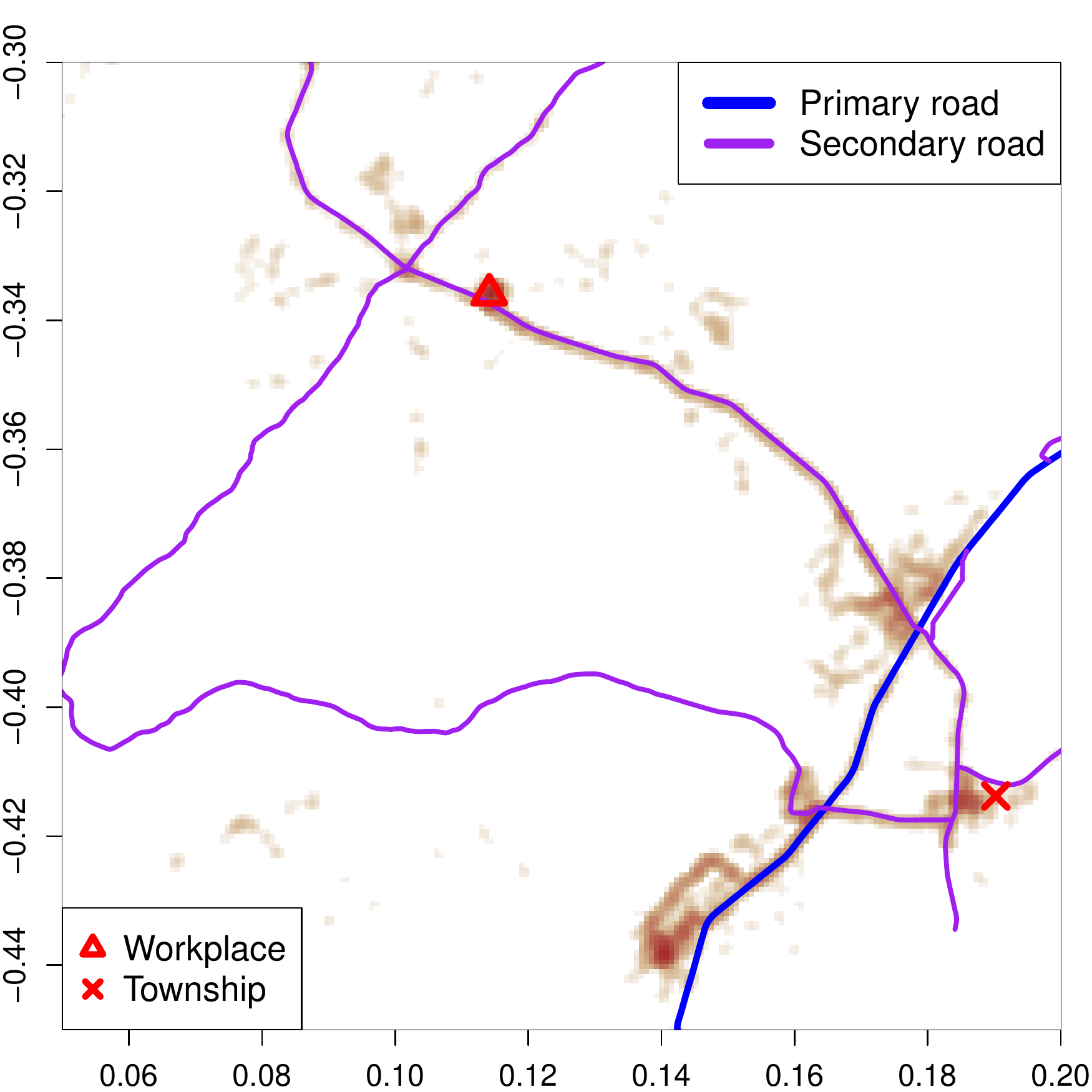}
\caption{
A comparison between KDE and density ranking. In the left panel, we display the density contours from the KDE associated with the locations shown in the middle and right panels of Figure~\ref{fig::gps1}. In the middle panel, we show the contours identified by density ranking. In the right panel, we superimpose GIS data to the density ranking contours.
}
\label{fig::alpha}
\end{figure}

Density ranking is a quantity derived from the KDE defined as
$$
\hat{\alpha}(x) = \frac{1}{n}\sum_{i=1}^n {\sf I}(\hat{\sf p}(X_i)\leq \hat{\sf p}(x)),
$$
where ${\sf I}(\Omega)$ is the indicator function. The density ranking function $\hat{\alpha}(x)$ is the fraction of observations in $\mathcal{X}=\{X_1,X_2,\ldots,X_n\}$ whose estimated density is lower than the estimated density of the given point $x$.
This function was called the $\alpha$-function in \citet{chen2016generalized}. The density ranking function $\hat{\alpha}(x)$ is a probability-like quantity that takes values between $0$ and $1$. It has a natural relationship with the rank of the data points with respect to the KDE $\hat{\sf p}$. Let $R_i = \sum_{j=1}^n {\sf I}(\hat{\sf p}(X_j)\leq \hat{\sf p}(X_i))$ be the rank of $X_i$ with respect to $\hat{\sf p}$. We have $R_i=1$ if $X_i$
has the lowest density, and $R_i=n$ if $X_i$ has the highest density. 
Then 
$$
\hat{\alpha}(X_i) = \frac{R_i}{n},
$$
which implies that the density ranking at each observed data point  is just the relative ranking of that point. 


Density ranking has a straightforward interpretation related to the locations visited by an individual:  for a point $x$ with $\hat{\alpha}(x)=0.8$, the probability density (measured by the KDE $\hat{\sf p}$) at point $x$ is higher than the probability density of $80\%$ of all observed GPS locations. We say that $x$ is in the region of the \emph{top $20\%$} activity. Given a level $\gamma\in[0,1]$, the level set of density ranking 
$$
\hat{A}_\gamma =\{x: \hat{\alpha}(x)\geq 1-\gamma\},
$$
can be interpreted as
\emph{the area of the top $\gamma\times100\%$ activities}. 
The set $\hat{A}_\gamma$ is the region within the contours of level $1-\gamma$.
Note that  $\hat{A}_\gamma$ is related to the \emph{minimum volume set} \citep{polonik1997minimum,garcia2003level,scott2006learning}
and can be interpreted as a density level set with a probability content of $\gamma$ \citep{cadre2013estimation}.  
In this view, $\hat{A}_\gamma$ can be interpreted as an estimator of the smallest (in terms of volume) area covering at least $\gamma\times100\%$ activities.

As explained in Section \ref{sec::kde}, given two pre-specified levels $\tau_1,\tau_2$ with $\tau_1<\tau_2$, the activity space based on the KDE $\hat{\sf p}$ comprises the grid cells $x$ with $\hat{\sf p}(x)\geq \tau_1$. The anchor locations are those grid cells $x$ with $\hat{\sf p}(x)\geq \tau_2$. Similar definitions of activity spaces and anchor locations can be given based on density ranking. We choose two levels $\gamma_1,\gamma_2\in(0,1)$ with $\gamma_2<\gamma_1$. We define $\hat{A}_{\gamma_1}$ to be the \emph{top $\gamma_1\times100\%$ activity space or $\gamma_1$-activity space}. Then anchor locations are defined as the $\gamma_2$-activity space. Choosing particular levels for the determination of human activity spaces or anchor locations can be done by examining the top $90, 80,\ldots, 10$\% activity spaces. In Section \ref{sec::sim} we show that the summary curves we introduce in the next section can be used to guide the choice of levels.

\section{Summary curves}\label{sec::tda}

Based on density ranking, we obtain a two dimensional function (a map) of human activity spaces. However, comparing maps associated with the activity patterns of multiple individuals is not straightforward without adequate summaries of the shape of these functions. To define such summaries, we use tools from topological data analysis \citep{edelsbrunner2008persistent,wasserman2016topological,chazal2017introduction}. 
Specifically, we describe three types of summary curves that quantify the shape of a two dimensional function. These curves provide additional information about the geometry, size and structure of human activity spaces. 

\subsection{Mass-volume curves}

Given a level $\gamma$, the size of the region $\hat{A}_\gamma$ can be used to quantify an individual's mobility in terms of the spatial extent of the $\gamma$-activity space. We measure size with the mass-volume function \citep{garcia2003level,clemenccon2013scoring,clemenccon2017mass} which is defined as
$$
\hat{V}(\gamma) = {\sf Vol}(\hat{A}_\gamma),
$$
\noindent where ${\sf Vol}(A) = \int_A dx$ is the volume of the set $A$.  For example, if an individual has $\hat{V}(0.2) = 3\mbox{ km}^2$, we say that the top $20$\% activities of this individual occur within a region of size $3\mbox{ km}^2$.

We subsequently define the \emph{mass-volume curve} $\hat{V}=\{(\gamma, \hat{V}(\gamma)): \gamma \in [0,1]\}$ which describes how the volume of the $\gamma$-activity space $\hat{A}_\gamma$ evolves when we vary the level $\gamma$. Mass-volume curves can be used to compare the degree of mobility of two individuals. Consider two example individuals with mass-volume curves $\hat{V}_1$ and $\hat{V}_2$ such that $\hat{V}_1(\gamma_0)>\hat{V}_2(\gamma_0)$ for some level $\gamma_0\in [0,1]$. We say that first individual has a higher mobility than the second individual  in terms of the top $\gamma_0\times100\%$ activities.

\subsection{Betti number curves}

The mass-volume curve quantifies the activity space in terms of its size, but it does not provide any information about the shape of the activity space. Key concepts from topological data analysis 
\citep{edelsbrunner2008persistent,wasserman2016topological,chazal2017introduction} turn out to be very useful for this purpose. Two points in a set $\mathcal{S}$ are connected if and only if there exists a curve inside $\mathcal{S}$ that connects them. The set $\mathcal{S}$ is connected if any two points in the set are connected. The connected components of $\mathcal{S}$ are the induced partition from this relation. The connected components of $\mathcal{S}$ are a partitioning of $\mathcal{S}$ into sub-regions. Each connected component must not overlap with other connected components, and must be connected. Two points that belong to two different connected components of $\mathcal{S}$ cannot be connected with a curve inside $\mathcal{S}$.

We consider the connected components of the $\gamma$-activity space $\hat{A}_\gamma$. We define the Betti number function
$$
\hat{\beta}(\gamma) = \mbox{number of connected components of } \hat{A}_\gamma,
$$
and the Betti number curve $\hat{\beta} = \{(\gamma, \hat{\beta}(\gamma)): \gamma \in [0,1]\}$. This curve captures how the number of connected components of the $\gamma$-activity space changes with the level $\gamma$. Note that the Betti number function is related to the number of local maxima of $\hat{\alpha}(x)$ and $\hat{\sf p}(x)$. In the cluster analysis literature, the value $\hat{\beta}(\gamma)$  is interpreted as the number of clusters \citep{hartigan1975clustering,rinaldo2010generalized}.
The Betti number curve is closely related to the barcode plot in topological data analysis 
\citep{ghrist2008barcodes,wasserman2016topological}.

We provide an example in  Figure~\ref{fig::TP}. The left panel shows a univariate function with four local maxima that correspond to levels $b_1,b_2,b_3,b_4$, and four local minima  that correspond to levels $d_1,d_2,d_3,d_4$. A level set $\hat{A}_\gamma$ comprises those regions where the function has a value above $1-\gamma$.  A new connected component is created when, as $\gamma$ increases from $0$ to $1$, it passes one of $b_1,b_2,b_3,b_4$. An existing connected component disappears when $\gamma$ passes one of $d_1,d_2,d_3,d_4$. Each of the orange vertical line segments represents a connected component: its upper and lower ends  correspond to the birth and death time of this connected component, respectively. The middle panel shows the Betti number curve that corresponds to the function in the left panel. The Betti number curve goes up when the level $\gamma$ hits the density ranking value of a local maximum, and may drop when passing through the density ranking value of a local minimum or a saddle point. In this example, the Betti number curve increases whenever it passes $b_1,b_2,b_3,b_4$, and decreases when it passes $d_1,d_2,d_3,d_4$. For a given value $\gamma$, the Betti number function $\hat{\beta}(\gamma)$ tells us the number of connected components in the top $\gamma\times 100\%$ activity region $\hat{A}_\gamma$. For example, if $\hat{\beta}(0.2)= 2$, the region of top $20\%$ activities has two disjoint components. This implies that the individual's top $20\%$ activities are concentrated around two areas that could correspond with the locations of this person's home and workplace. Individuals that record higher values of the Betti number function are those who tend to repeatedly visit a larger number of spatially distinct locations. 

\begin{figure}[!ht]
\center
\includegraphics[width=1.6in]{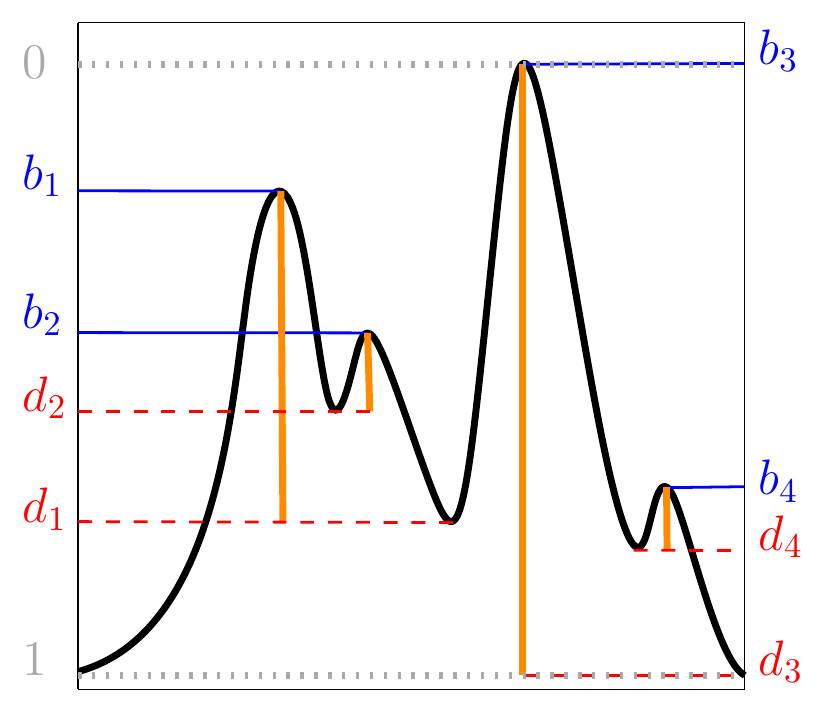}
\includegraphics[width=1.6in]{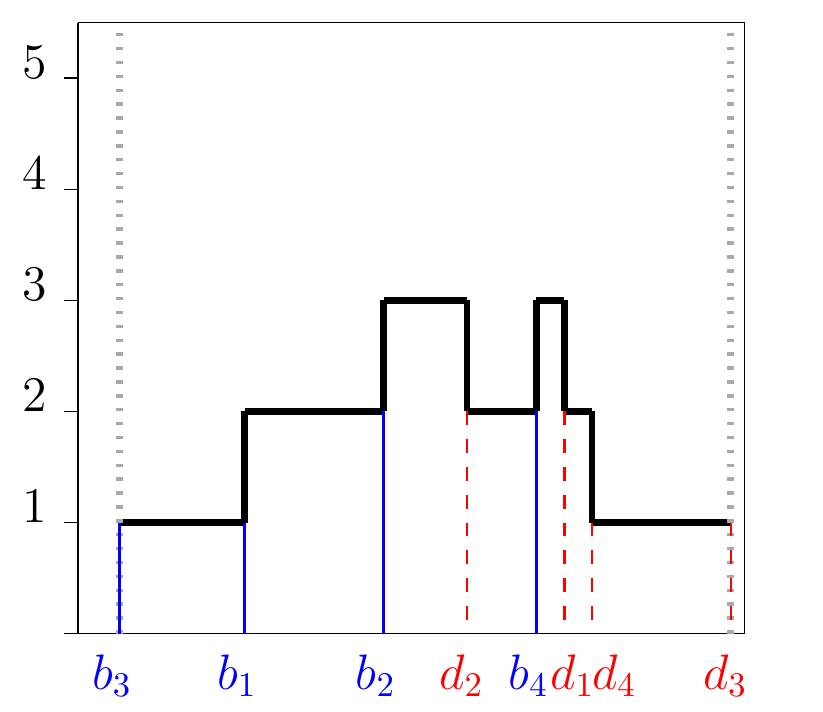}
\includegraphics[width=1.6in]{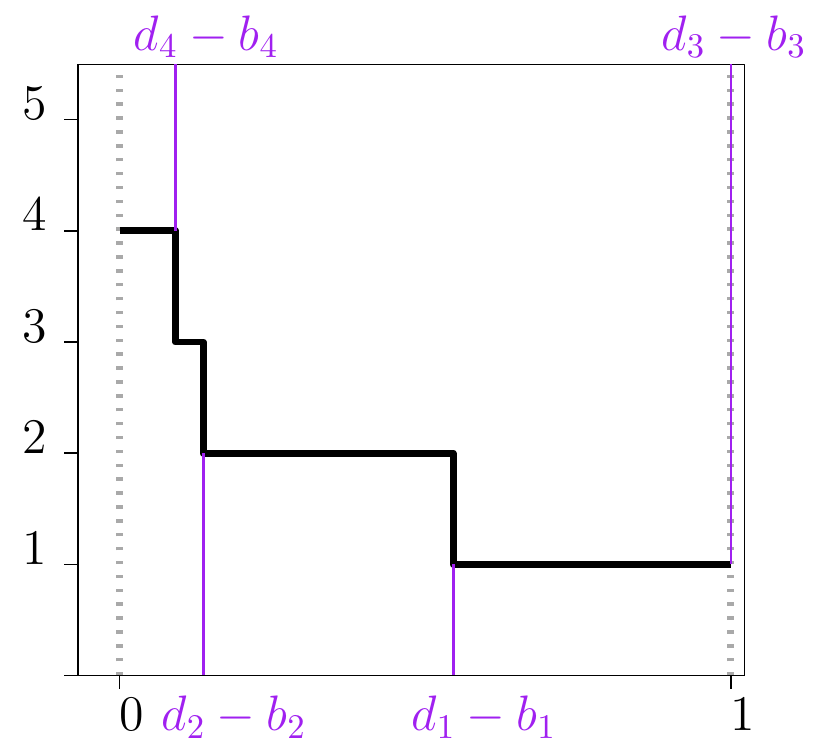}
\caption{
An illustration of how the Betti number curve (middle panel) and the persistence curve (right panel) are computed for the function shown in the left panel.}
\label{fig::TP}
\end{figure}

\subsection{Persistence curves}


The previous two types of curves focus refer to specific $\gamma$-activity spaces. Next we define a third type of curve called a persistence curve that simultaneously consider all levels $\gamma\in [0,1]$. The concept of persistence plays a key role in persistent homology, a branch of topological data analysis \citep{edelsbrunner2008persistent,wasserman2016topological,chazal2017introduction}. The persistence curve is related to the accumulated persistence function \citep{biscio2016accumulated}. The Betti number curve and the persistence curve can be viewed as functional summaries
of topological features \citep{berry2018functional}. 

We first define the persistence of a connected component.  When we vary the level $\gamma$, new connected components may be created
and existing connected components may disappear by merging with other connected components. We define the birth time of a connected component to be the level when this component is created, and its death time to be the level at which this component disappears. A connected component is created at the level of the density ranking of a local mode, and is often
eliminated at the level of the density ranking of a local minimum or a saddle point.  When two connected components merge into one, we apply the following seniority rule \citep{wasserman2016topological,chazal2017introduction}: the older one (created at a lower level) stays alive while the younger one (created at a higher level) is eliminated. We define the death time of the connected components at level $\gamma=0$ to be $0$.

In the left panel of Figure~\ref{fig::TP}, the two end points of an orange line segment correspond to the birth (creation) and 
death (elimination) of a connected component.  The corresponding levels of the end points, $b_\ell$ and $d_\ell$, are the birth time and death time of that connected component. 
There is a direct relationship between birth and death times and the Betti number curve: the Betti number increases by 1 whenever it passes the birth time of a connected component,
and it decreases by 1 when it passes the death time of a connected component.
In Figure~\ref{fig::TP}, two connected components are created at levels $b_1$ and $b_2$, and are eliminated at levels $d_1$ and $d_2$. At $b_1$ and $b_2$, the Betti number increases by 1, and it decreases by 1 at $d_1$ and $d_2$. Since $b_1<b_2$, the connected component created at $b_1$ is older than the connected component created at $b_2$. When the two connected components merge at level $d_2\in (b_2,d_1)$, the connected component created at $b_1$ remains, while the connected component created at $b_2$ is eliminated.

For each connected component, its persistence (also called life time) 
is the difference between the birth and the death time. 
In Figure~\ref{fig::TP}, the length of an orange line segment is the persistence
of that connected component. We define the persistence function
$$
\hat{\rho}(t) = \mbox{number of connected components whose persistence }\geq t,
$$
and the persistence curve $\hat{\rho}=\{(t, \hat{\rho}(t)): t \in [0,1]\}$. An example persistence curve is shown in the right panel of Figure~\ref{fig::TP}. The persistence curve is a non-increasing curve since we are thresholding on the life time of connected components. There will always be a connected component with a life time close to $1$ because, by definition, the data point with the highest KDE value has rank equal to the sample size $n$, making its density ranking equal to 1. Due to the resolution of the underlying grid used, it is possible to see a connected component with life time close to but less than 1.


The persistence curve provides new information about the spatial distribution of the activity space. Unlike the mass-volume curve or Betti number curve that describes characteristics of level sets $\hat{A}_\gamma$ at particular levels $\gamma$, the persistence curve characterizes the collection of all level sets $\left\{\hat{A}_\gamma: \gamma \in[0,1]\right\}$. This is because in order to compute the persistence of each connected component, we need to consider various levels to determine its persistence. An individual has a high persistence curve when the corresponding density curve has many highly persistent connected components. These are regions this individual repeatedly visits: most likely, these represent their anchor locations. This type of information is not directly related to a particular $\gamma$-activity space. Instead, it is a quantity describing patterns across activity spaces at different levels.

\section{A mixture model for human activity spaces}	\label{sec::model}

In this section we propose a statistical model that captures the most significant features of human activity spaces. We denote by $\P_{\sf GPS}$ the probability distribution that defines the activity space of an individual. The observed locations $\mathcal{X}=\{X_1,\ldots,X_n\}$ are independent samples from $\P_{\sf GPS}$. We write this distribution as a mixture with three components
\begin{equation}
\P_{\sf GPS}(x) = \pi_0 \P_0(x) + \pi_1 \P_1(x) + \pi_2 \P_2(x),
\label{eq::GPSfn}
\end{equation}
where $\P_0(x)$ is an atomic distribution, $\P_1(x)$ is a one-dimensional distribution, $\P_2(x)$ is a two dimensional distribution, and $\pi_0+\pi_1+\pi_2 = 1$ with $\pi_j\geq 0$ are proportions. The three components of the mixture \eqref{eq::GPSfn} represent the key elements of the activity space represented by $\P_{\sf GPS}$: $\P_0$ is a distribution that puts probability on the anchor locations $\mathcal{A}$; $\P_1$ is a distribution describing the roads $\mathcal{R}$ used by an individual when traveling between anchor locations; and $\P_2$ is a distribution describing the areas $\mathcal{O}$ around the anchor locations in which an individual moves. 
Although $\P_0$ and $\P_1$ do not have conventional probability density functions, they admit a generalized density function called the Hausdorff density \citep{preiss1987geometry,mattila1999geometry}. Let $\B(x,r)$ be the ball centered at $x$ with a radius $r>0$. For a positive integer $s$, the $s$-dimensional Hausdorff density ($s$-density) at $x$ given $\P_{\sf GPS}$ is
$$
\mathcal{H}_s(x) = \lim_{r\rightarrow0}\frac{P_{\sf GPS}(B(x,r))}{C_s \cdot r^s},
$$
where $C_j$ is the volume of an $j$-dimensional unit ball ($C_0=1$, $C_1=2$, and $C_2 = \pi$).  We denote by $\p_0$, $\p_1$ and $\p_2$ the $0$, $1$, and $2$-dimensional Hausdorff densities given $\P_0$, $\P_1$ and $\P_2$. Namely,  $\p_0(x)$ is a mass at point $x$, $\p_1(x)$ is a one dimensional density value at $x$, and $\p_2(x)$ is a two dimensional density value at $x$. Furthermore, $\mathcal{A}, \mathcal{R},$ and $\mathcal{O}$ represent the support of $\p_0$, $\p_1$ and $\p_2$, respectively. 


We define the dimension $\omega(x)$ of a point $x$ with respect to $\P_{\sf GPS}$ as follows:
\begin{equation}
\omega(x) = \begin{cases}
0,&\mbox{ if } x\in \mathcal{A},\\
1,&\mbox{ if } x\in \mathcal{R}\backslash\mathcal{A},\\
2,&\mbox{ if } x\notin\mathcal{A}\cup\mathcal{R}.
\end{cases}
\label{eq::omega}
\end{equation}
The dimension of an anchor location is $0$. The dimension of a location on a road inside the activity space that is not an anchor location is 1. The dimension of all the other locations is 2.

The following result shows that there are two equivalent ways to define the dimension $\omega(x)$.

\begin{thm}
Given assumption (S) from Appendix \ref{sec::assumption}, the definition of $\omega(x)$ in equation \eqref{eq::omega} is equivalent with the following two definitions:
$$
\omega(x) = \max \{s: \mathcal{H}_s(x)<\infty, s=0,1,2\} ,
$$
and
$$
\omega(x) =
\begin{cases}
0,&\quad \mbox{ if } \p_0(x)>0,\\
1,&\quad \mbox{ if } \p_0(x)=0, \p_1(x)>0,\\
2,&\quad \mbox{ if } \p_0(x)=0, \p_1(x)=0, \p_2(x)\geq 0.
\end{cases}
$$
\label{lem::equiv}
\end{thm}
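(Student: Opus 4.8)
The plan is to reduce the statement to a direct computation of the three Hausdorff densities $\mathcal{H}_0(x),\mathcal{H}_1(x),\mathcal{H}_2(x)$ at an arbitrary point $x$, from which all three descriptions of $\omega(x)$ can be read off at once. Since $\P_{\sf GPS}=\pi_0\P_0+\pi_1\P_1+\pi_2\P_2$, for every $s\in\{0,1,2\}$ and $r>0$ we have
\[
\frac{\P_{\sf GPS}(B(x,r))}{C_s\,r^s}=\sum_{j=0}^{2}\pi_j\,\frac{\P_j(B(x,r))}{C_s\,r^s},
\]
and, each summand being nonnegative and convergent in $[0,\infty]$ as $r\to 0$, the limit $\mathcal{H}_s(x)$ is the sum of the three componentwise limits. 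Writing $d_j\in\{0,1,2\}$ for the dimension of the $j$-th component and using the factorization
\[
\frac{\P_j(B(x,r))}{C_s\,r^s}=\frac{\P_j(B(x,r))}{C_{d_j}\,r^{d_j}}\cdot\frac{C_{d_j}\,r^{d_j}}{C_s\,r^s},
\]
whose first factor tends to $\p_j(x)$ by definition of the Hausdorff density, the whole problem comes down to bookkeeping in $r$.

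The core is a small case lemma for a single component $\P_j$: (i) if $x$ lies outside the (closed) support of $\P_j$, then $B(x,r)$ misses that support for all small $r$, so $\P_j(B(x,r))=0$ eventually and the contribution to every $\mathcal{H}_s(x)$ is $0$; (ii) if $x$ is in the support and $s=d_j$, the contribution is $\pi_j\p_j(x)\in(0,\infty)$; (iii) if $x$ is in the support and $s<d_j$, the second factor above tends to $0$ while $\p_j(x)<\infty$, so the contribution is $0$; (iv) if $x$ is in the support, $s>d_j$, and $\p_j(x)>0$, the second factor tends to $+\infty$, so the contribution is $+\infty$. Summing over $j$ and using that $d_0,d_1,d_2$ are distinct gives $\mathcal{H}_0(x)=\pi_0\p_0(x)$ always, $\mathcal{H}_1(x)=\pi_1\p_1(x)$ when $\p_0(x)=0$ and $\mathcal{H}_1(x)=+\infty$ otherwise, and $\mathcal{H}_2(x)=\pi_2\p_2(x)$ when $\p_0(x)=\p_1(x)=0$ and $\mathcal{H}_2(x)=+\infty$ otherwise.

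The three equivalences then follow by inspection. For the $\max$ characterization: if $x\in\mathcal{A}$ then $\mathcal{H}_0(x)<\infty$ but $\mathcal{H}_1(x)=\mathcal{H}_2(x)=\infty$, so the maximum is $0$; if $x\in\mathcal{R}\setminus\mathcal{A}$ then $\mathcal{H}_0(x)=0$, $\mathcal{H}_1(x)<\infty$, $\mathcal{H}_2(x)=\infty$, so the maximum is $1$; if $x\notin\mathcal{A}\cup\mathcal{R}$ then all three are finite, so the maximum is $2$ --- which is \eqref{eq::omega} in each case. For the $\p_j$ characterization one only needs that $\p_0(x)>0$ iff $x\in\mathcal{A}$, and that when $\p_0(x)=0$ one has $\p_1(x)>0$ iff $x\in\mathcal{R}$; these say precisely that the positivity sets of the Hausdorff densities coincide with the supports $\mathcal{A},\mathcal{R}$, which is what assumption (S) is there to supply, after which the three lines of the third display are immediate from \eqref{eq::omega} (the condition $\p_2(x)\geq 0$ on the last line being vacuous).

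The main obstacle is hidden in step (iv) and in those support identifications: everything depends on assumption (S) ensuring that the Hausdorff-density limits $\p_j(x)$ genuinely exist, that $\p_0$ is strictly positive on all of $\mathcal{A}$ and $\p_1$ strictly positive on all of $\mathcal{R}$ (otherwise a $0\cdot\infty$ indeterminacy appears in the second factor and a point of a road could masquerade as two-dimensional), that $\p_1$ and $\p_2$ are finite/bounded (otherwise $\mathcal{H}_2(x)$ could be infinite at a genuinely two-dimensional point and the $\max$ characterization would fail), and that $\mathcal{A},\mathcal{R},\mathcal{O}$ are closed (so that a small ball around a point outside them carries no mass from the corresponding component). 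I therefore expect most of the rigorous write-up to consist of checking that the regularity built into (S) --- e.g.\ $\mathcal{A}$ finite, $\mathcal{R}$ a finite union of compact $C^1$ arcs with a continuous, strictly positive $\p_1$, and a bounded $\p_2$ --- rules out the degenerate configurations (road endpoints, cusps, the topological boundary of $\mathcal{O}$, zero-mass atoms) at which the clean dichotomy above would otherwise break down.
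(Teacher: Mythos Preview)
Your proposal is correct and rests on exactly the same ingredients as the paper's proof: the mixture decomposition of $\P_{\sf GPS}$, the lower bound $\P_{\sf GPS}(B(x,r))\geq \pi_j\P_j(B(x,r))$ forcing divergence of $\mathcal{H}_s$ whenever a strictly lower-dimensional component carries mass at $x$, closedness of the supports to kill contributions from components whose support misses $x$, and the two-sided bounds $a_0\leq\p_j\leq A_0$ from assumption~(S) to identify positivity sets with supports and to rule out the $0\cdot\infty$ indeterminacy you flag in step~(iv).

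The organization differs somewhat. The paper runs a case split on the value of $\omega(x)\in\{0,1,2\}$ and, within each case, verifies both alternative characterizations by separate forward and backward arguments (with a contradiction argument for $\omega(x)=2$). You instead compute $\mathcal{H}_0(x),\mathcal{H}_1(x),\mathcal{H}_2(x)$ once and for all via your component lemma and then read off all three equivalences simultaneously. Your route is a bit more economical---the single lemma replaces several repeated divergence/finiteness checks---while the paper's case-by-case layout makes each implication more self-contained. Neither approach needs anything the other does not; the content is the same.
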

The proof of Theorem \ref{lem::equiv} is given in Appendix \ref{sec::proofs}.
Using $\omega(x)$ and $\p_j(x)$, $j=0,1,2$, we define a ranking comparison between two points $x_1$ and $x_2$. We write
$
x_1 \succ x_2
$
if
$$
\omega(x_1)<\omega(x_2), \qquad \mbox{or}\quad \omega(x_1)=\omega(x_2), \quad \p_{\omega(x_1)}(x_1)>\p_{\omega(x_2)}(x_2).
$$
To compare two points, we first compare their dimension. The point that has the higher dimension is ranked higher.  If both points have the same dimension, we compare their density value in that dimension. Then the population quantity that density ranking is approximating is
\begin{equation}
\alpha(x) = \P_{\sf GPS} (x\succeq X_1),
\label{eq::GPS_DR}
\end{equation}
where $X_1$ is a random variable with distribution function $\P_{\sf GPS}$. Theorem~\ref{lem::equiv} is needed to prove the next result that shows that $\hat{\alpha}(x)$ is a consistent estimator of $\alpha(x)$, which explains why density ranking yields stable results in measuring human activity spaces.

\begin{thm}
Given assumptions (K1-2), (S) and (P1-2) from Appendix \ref{sec::assumption}, and $\frac{nh^6}{\log n}\rightarrow \infty, h\rightarrow0$, we have
\begin{align*}
\int |\hat{\alpha}(x)-\alpha(x)|^2 \di \P_{\sf GPS}(x) &\overset{P}{\rightarrow}0, \\
\int |\hat{\alpha}(x)-\alpha(x)|^2 \di x &\overset{P}{\rightarrow}0 .
\end{align*}
\label{thm::alpha}
\end{thm}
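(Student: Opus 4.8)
The plan is to split, for each $x$,
\begin{align*}
\hat{\alpha}(x)-\alpha(x)
&=\underbrace{\frac1n\sum_{i=1}^n\Big[{\sf I}(\hat{\sf p}(X_i)\le\hat{\sf p}(x))-{\sf I}(x\succeq X_i)\Big]}_{=:E_n(x)}
\;+\;\underbrace{\Big[\frac1n\sum_{i=1}^n{\sf I}(x\succeq X_i)-\P_{\sf GPS}(x\succeq X_1)\Big]}_{=:L_n(x)} .
\end{align*}
For $L_n$, the summands are i.i.d., bounded by $1$, with variance $\le\frac14$, so by Fubini $\E\int|L_n(x)|^2\,d\mu(x)\le\frac{\mu(\mathrm{supp}\,\P_{\sf GPS})}{4n}\to0$ for $\mu$ equal to $\P_{\sf GPS}$ (a probability measure) or Lebesgue measure restricted to the (bounded, by (S)) support; Markov's inequality then gives $\int|L_n|^2\,d\mu\overset{P}{\to}0$. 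Since $|E_n(x)|\le\frac1n\#\{i:\ {\sf I}(\hat{\sf p}(X_i)\le\hat{\sf p}(x))\ne{\sf I}(x\succeq X_i)\}\le1$, it remains to show, for both choices of $\mu$,
\begin{equation*}
\E\int\frac1n\#\{i:\text{the two indicators disagree at }x\}\,d\mu(x)\longrightarrow0 .
\end{equation*}

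\textbf{Step 1: multiscale behaviour of the KDE.} By (S), $\mathcal A$ is a finite set of atoms, $\mathcal R$ a finite union of rectifiable curves with finitely many endpoints and self-intersections, and $\mathcal O$ carries the two-dimensional part; by Theorem~\ref{lem::equiv}, $\omega(x)$ equals $0,1,2$ according to whether $\p_0(x)>0$, or ($\p_0(x)=0,\p_1(x)>0$), or ($\p_0=\p_1=0$). I would construct an event $\mathcal G_n$ with $\P(\mathcal G_n)\to1$ on which, for all small $h$: (i) $\hat{\sf p}(x)\asymp h^{-2}$ for $x$ within $h$ of $\mathcal A$, while $\hat{\sf p}(x)=O(h^{-1})$ for $x$ at distance $>h$ from $\mathcal A$ (obtained by isolating the $N_k\sim\mathrm{Bin}(n,\pi_0\p_0(A_k))$ observations sitting exactly at each anchor, whose aggregate contribution near $A_k$ is $\asymp h^{-2}$ by Bernstein's inequality, all other observations contributing only $O(h^{-1})$ in aggregate once the mass of $\mathcal R$ and $\mathcal O$ in a ball of radius $h$ is controlled); (ii) $h\,\hat{\sf p}(x)\to c_1\pi_1\p_1(x)$ uniformly over $x\in\mathcal R$ bounded away from $\mathcal A$ and from the endpoints/self-intersections of $\mathcal R$, with $c_1=\int K(|t|)\,dt$ a universal constant (write the one-dimensional part of $h\,\hat{\sf p}$ as a Hausdorff integral over $\mathcal R$, straighten $\mathcal R$ locally to its tangent line so that $\frac1h\int_{\mathcal R}K(d(x,u)/h)\,d\mathcal H^1(u)\to c_1$ with an $O(h)$ curvature correction, and bound the fluctuation by a uniform empirical-process inequality), and moreover $\hat{\sf p}(x)=O(1)$ for $x$ at distance $>h$ from $\mathcal A\cup\mathcal R$; (iii) $\hat{\sf p}(x)\to\pi_2\p_2(x)$ uniformly on compact subsets of $\mathcal O$ bounded away from $\mathcal A\cup\mathcal R$ (the standard uniform consistency of the two-dimensional KDE, rate $\sqrt{\log n/(nh^2)}+h^2$, valid since $nh^2/\log n\to\infty$). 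The strengthened bandwidth assumption $nh^6/\log n\to\infty$ is used precisely to make all the deviation terms on the lower-dimensional strata $\mathcal A$ and $\mathcal R$, and in the thin transition zones around them, negligible uniformly.

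\textbf{Step 2: ranking agreement off a negligible set.} On $\mathcal G_n$ the three strata live at the separated scales $h^{-2}\gg h^{-1}\gg1$, so for any pair $(x,X_i)$ of \emph{different} dimension the sign of $\hat{\sf p}(X_i)-\hat{\sf p}(x)$ agrees with $\succeq$ once $h$ is small. For pairs of the \emph{same} dimension $s$, (i)--(iii) give $\hat{\sf p}(X_i)\le\hat{\sf p}(x)\iff\p_s(X_i)\le\p_s(x)$, matching $\succeq$, \emph{except} when $x$ or $X_i$ lies in the ``transition set'' $Z_n$ — a point of dimension $2$ within $h$ of $\mathcal A\cup\mathcal R$, or within a shrinking neighbourhood of an endpoint/self-intersection of $\mathcal R$ — or when $|\p_s(X_i)-\p_s(x)|<\epsilon_n$, $\epsilon_n$ being the (vanishing) uniform KDE error, so that the error can flip the comparison. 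Bounding the disagreement count by $n$ when $x$ itself is in the exceptional $x$-set $B_n$ (union of $Z_n$ with an $\epsilon_n$-neighbourhood of the level sets of $\p_s$) and otherwise by $\#\{i:X_i\in Z_n\}+\#\{i:\omega(X_i)=\omega(x),\,|\p_{\omega(X_i)}(X_i)-\p_{\omega(x)}(x)|<\epsilon_n\}$, taking expectations gives an upper bound of order $\mu(B_n)+\P_{\sf GPS}(Z_n)+\sup_x\P_{\sf GPS}\big(\{y:\omega(y)=\omega(x),\,|\p_{\omega(y)}(y)-\p_{\omega(x)}(x)|<\epsilon_n\}\big)+\mu(\mathrm{supp})\,\P(\mathcal G_n^c)$. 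Under (P1--2) (no flat level sets of $\p_1,\p_2$, non-degenerate anchor masses) the third term is $O(\epsilon_n)\to0$, while $\mu(B_n)\to0$ and $\P_{\sf GPS}(Z_n)\to0$ because $Z_n$ shrinks with $h$ and level sets are $\mu$-null; this yields $\int|E_n|^2\,d\mu\overset{P}{\to}0$ and completes the proof.

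\textbf{Main obstacle.} The routine pieces are $L_n$ and claim (iii). The real work is Step 1: the KDE of $\P_{\sf GPS}$ converges to no ordinary function — it blows up at the incompatible rates $h^{-2}$ and $h^{-1}$ on $\mathcal A$ and $\mathcal R$ — so one cannot invoke off-the-shelf KDE consistency and must instead establish stratum-by-stratum uniform limits (and identify the universal constant $c_1$ via local straightening of the roads), with deviations controlled uniformly including inside the transition zones; this is where Theorem~\ref{lem::equiv} and the condition $nh^6/\log n\to\infty$ enter. The second delicate point is measure-theoretic: one must verify that the bad set $B_n$ — level-set boundaries within a stratum plus $O(h)$-tubes around $\mathcal A\cup\mathcal R$ and neighbourhoods of road endpoints/crossings — is \emph{simultaneously} negligible under $\P_{\sf GPS}$ and under Lebesgue measure, which is exactly the role of the non-degeneracy assumptions (P1--2).
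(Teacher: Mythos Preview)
Your approach is substantively different from the paper's. The paper's proof is almost entirely a reduction to an external result: it first uses Theorem~\ref{lem::equiv} to identify $\alpha(x)$ with the population density ranking of \citet{chen2016generalized}, then verifies that assumptions (S) and (P1--2) imply the manifold assumption (S$'$) and density assumption (P$'$) required there, and finally invokes Theorem~10 of that reference. The only genuinely new step is a short excision argument: because (S) allows $\mathcal{R}$ to have finitely many self-intersections (so $\mathcal{R}$ is not a smooth $1$-manifold), the paper restricts to $(\mathbb{I}\oplus r_n)^C$ for a shrinking $r_n$, applies the cited theorem on that set, and absorbs the $O(r_n)$ discrepancy in $\P_{\sf GPS}$-measure.

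What you outline is essentially a self-contained reconstruction of that cited theorem: the $E_n/L_n$ decomposition, the stratum-by-stratum multiscale analysis of $\hat{\sf p}$ at the separated scales $h^{-2},h^{-1},1$, and the excision of transition tubes $Z_n$ and near-level-set slabs $B_n$. This is a valid program and considerably more transparent than a bare citation; your inclusion of curve endpoints and crossings in $Z_n$ exactly parallels the paper's excision of $\mathbb{I}\oplus r_n$. One point to sharpen: your explanation of where $nh^6/\log n\to\infty$ enters (``deviation terms \ldots\ in the thin transition zones'') is vague. In the paper's framework (cf.\ assumption (K2), which tracks kernel derivatives up to order two) this rate is the uniform-consistency rate for the \emph{Hessian} of a two-dimensional KDE, and is used for Morse-theoretic stability of the ranking near critical points of $\p_1,\p_2$. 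Your direct rank-comparison argument may well not need derivative control at all---in which case a weaker bandwidth condition would suffice---but as written the sketch neither uses $nh^6$ explicitly nor shows it can be relaxed.
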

The proof of Theorem~\ref{thm::alpha} is given in Appedix \ref{sec::proofs}. 




The collection of anchor locations $\mathcal{A}$ and roads connecting anchor points $\mathcal{R}$ are of key interest, and must be properly recovered from GPS data. Next we show that the level set of $\hat{\alpha}(x)$, under suitable choices of level, 
will be a consistent estimator of $\mathcal{A}$ and $\mathcal{R}$. Recall that $\hat{A}_\gamma = \{x: \hat{\alpha}(x)\geq 1-\gamma\}$ is the level set of density ranking.  
\begin{thm}
Given assumptions (K1-2), (P) and (S0) from Appendix \ref{sec::assumption}, and $ \frac{ nh^2}{\log n}\rightarrow \infty, h\rightarrow 0$, we have
\begin{align*}
\P_{\sf GPS}\left(\hat{A}_{\pi_0}\triangle \mathcal{A}\right) &\overset{P}{\rightarrow } 0,
\end{align*}
where for sets $A$ and $B$, $A\triangle B = (A\backslash B) \cup (B\backslash A)$ is their set difference
and $\P_{\sf GPS}(A) = P(X_1\in A)$, where $X_1$ is has distribution $\P_{\sf GPS}$. Moreover, if we further assume (S1) from Appendix \ref{sec::assumption}, we have
\begin{align*}
\P_{\sf GPS}\left(\hat{A}_{\pi_0+\pi_1}\triangle (\mathcal{A}\cup\mathcal{R})\right) &\overset{P}{\rightarrow } 0.
\end{align*}
\label{thm::conv_set}
\end{thm}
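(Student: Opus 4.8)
\emph{Proof strategy.} The plan is to prove the two statements in two stages: first identify the population level sets $\{x:\alpha(x)\ge 1-\pi_0\}$ and $\{x:\alpha(x)\ge\pi_2\}$ with $\mathcal{A}$ and $\mathcal{A}\cup\mathcal{R}$ (up to $\P_{\sf GPS}$-null sets), and then transfer this identification to the empirical $\hat\alpha$ by a direct analysis of the three scales of the kernel density estimator $\hat{\sf p}$ induced by the mixture \eqref{eq::GPSfn}. The reason the weaker bandwidth condition $nh^2/\log n\to\infty$ suffices here --- in contrast to the $nh^6/\log n\to\infty$ needed for Theorem~\ref{thm::alpha} --- is that recovering a level set only requires the ordering by dimension carried by $\hat{\sf p}$ to be asymptotically correct, and this follows from the classical uniform deviation rate $\sup_x|\hat{\sf p}(x)-\E\hat{\sf p}(x)|=O_P(\sqrt{\log n/(nh^2)})$ (valid under (K1-2)) together with elementary binomial concentration, rather than from pointwise $L^2$-closeness of $\hat\alpha$ to $\alpha$.

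\emph{Population identification.} Fix $x$ and apply Theorem~\ref{lem::equiv}. If $x\in\mathcal{A}$ then $\omega(x)=0$ is maximal, so $x\succeq X_1$ whenever $\omega(X_1)\ge1$ and also when $X_1$ is an atom of no larger mass; hence $\alpha(x)=1-\pi_0+\pi_0\,\P_0(\{a:\p_0(a)\le\p_0(x)\})\ge 1-\pi_0+\pi_0\,\p_0(x)>1-\pi_0$, since every atom has positive mass. If $x\in\mathcal{R}\setminus\mathcal{A}$, then using $\P_1(\mathcal{A})=0$ and $\P_2(\mathcal{A}\cup\mathcal{R})=0$ (lower-dimensional sets are null for the higher-dimensional mixture components) one gets $\alpha(x)=\pi_2+\pi_1\,\P_1(\{r:\p_1(r)\le\p_1(x)\})\le 1-\pi_0$, with equality only where $\p_1$ attains its essential supremum; and if $x\notin\mathcal{A}\cup\mathcal{R}$ then $\alpha(x)=\pi_2\,\P_2(\{z:\p_2(z)\le\p_2(x)\})<\pi_2<1-\pi_0$ off the essential-supremum set of $\p_2$. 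Under the regularity in (P), (S0), (S1) these essential-supremum sets are $\P_{\sf GPS}$-null, so $\{\alpha\ge1-\pi_0\}=\mathcal{A}$ and $\{\alpha\ge\pi_2\}=\mathcal{A}\cup\mathcal{R}$ up to $\P_{\sf GPS}$-null sets; moreover $\alpha$ jumps across the level $1-\pi_0$ at each anchor and across the level $\pi_2$ along $\mathcal{R}$, so both target level sets are stable.

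\emph{Three scales and transfer.} Combining the uniform rate above with a direct computation of $\E\hat{\sf p}$ and the separation hypotheses, one shows that with probability tending to one, uniformly in $x$: $\hat{\sf p}(x)\asymp h^{-2}$ on $\mathcal{A}$ (driven by the atom); $\hat{\sf p}(x)=(1+o_P(1))\,c_K\pi_1\p_1(x)/h\asymp h^{-1}$ on $\mathcal{R}$ at distance $\gg h$ from $\mathcal{A}$, for a kernel constant $c_K$; and $\hat{\sf p}(x)=(1+o_P(1))\,\pi_2\p_2(x)=O(1)$ in the two-dimensional bulk at distance $\gg h$ from $\mathcal{A}\cup\mathcal{R}$, with $\hat{\sf p}$ interpolating between consecutive scales over tubes of width of order $h$ (up to $\log$ factors if the kernel has unbounded support). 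Since $x\in\hat A_\gamma$ exactly when at most $\gamma n$ of the observations have $\hat{\sf p}$ strictly larger than $\hat{\sf p}(x)$, it suffices to locate the relevant empirical quantile $\hat t_\gamma$ of $\{\hat{\sf p}(X_i)\}_i$. For $\gamma=\pi_0$ the $\approx\pi_0 n$ largest KDE values are carried, up to an $O(hn)$ error, by the anchor observations, and because $hn\to\infty$ dominates the binomial fluctuations $\hat t_{\pi_0}$ is pinned at the atom scale --- essentially $\hat{\sf p}$ at the lowest-mass anchor, of order $h^{-2}$ --- hence well above the road scale; therefore $\mathcal{A}\subseteq\hat A_{\pi_0}$ w.h.p.\ (the ties among the observations sitting \emph{at} an atom supply the needed count), all of $\mathcal{R}$ at distance $\gg h$ from $\mathcal{A}$ is excluded because there $\hat{\sf p}=O(h^{-1})\ll\hat t_{\pi_0}$, the deep bulk is excluded as well, and $\hat A_{\pi_0}\setminus\mathcal{A}$ is contained in an $O(h)$-tube around $\mathcal{A}$, of $\P_{\sf GPS}$-mass $O(h)$; this gives $\P_{\sf GPS}(\hat A_{\pi_0}\triangle\mathcal{A})\overset{P}{\rightarrow}0$. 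For $\gamma=\pi_0+\pi_1$, only the deep-bulk observations ($\approx\pi_2 n$ of them, up to an $O(hn)$ correction) lie below the threshold, so $\hat t_{\pi_0+\pi_1}\rightarrow\pi_2 M_2$, with $M_2$ the essential supremum of $\p_2$ under $\P_2$, i.e.\ at the top of the bulk scale; consequently $\hat{\sf p}$ exceeds the threshold throughout $\mathcal{A}$ and throughout $\mathcal{R}$ away from the locus where $\p_1\to0$, so $\mathcal{A}\cup\mathcal{R}\subseteq\hat A_{\pi_0+\pi_1}$ up to a set of vanishing $\P_{\sf GPS}$-mass, while $\hat A_{\pi_0+\pi_1}\setminus(\mathcal{A}\cup\mathcal{R})$ lies in the union of an $O(h)$-tube around $\mathcal{A}\cup\mathcal{R}$ (mass $O(h)$) and the set $\{x:\p_2(x)\ge M_2-\epsilon_n\}$, $\epsilon_n\to0$, of vanishing $\P_2$-mass by the no-flat-maximum part of (P); this yields $\P_{\sf GPS}(\hat A_{\pi_0+\pi_1}\triangle(\mathcal{A}\cup\mathcal{R}))\overset{P}{\rightarrow}0$, the regularity of $\mathcal{R}$ and $\p_1$ used throughout being the content of (S1).

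\emph{Main obstacle.} The delicate step is the uniform two-sided control of $\hat{\sf p}$ at the one-dimensional road scale: since $\mathcal{R}$ is curved, the arclength-to-distance linearisation giving $\hat{\sf p}(x)=(1+o_P(1))c_K\pi_1\p_1(x)/h$ must be uniform in $x\in\mathcal{R}$ and uniform away from the finitely many anchors on $\mathcal{R}$, and it is the matching \emph{lower} bound that forces the separation hypotheses (S0), (S1) and controls the mass accumulating near the anchor--road junctions. Coupled to this is the margin analysis showing that the excess and deficit of the estimated level sets near the thresholds --- the $O(h)$-tube around $\mathcal{A}\cup\mathcal{R}$, the region where $\p_1$ is tiny, and the essential-supremum neighbourhood of $\p_2$ --- all carry $\P_{\sf GPS}$-measure tending to zero, which is precisely the role of the density-regularity conditions in (P). By contrast, once Theorem~\ref{lem::equiv} is available, the population identification and the ``anchors dominate'' inclusions are routine.
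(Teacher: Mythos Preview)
Your overall architecture --- separation of the KDE into three scales $h^{-2},h^{-1},O(1)$, tube containments of width $O(h)$, and empirical/VC concentration at rate $\sqrt{\log n/(nh^2)}$ --- is exactly the skeleton the paper uses (its Lemmas~\ref{lem::inside} and~\ref{lem::Wn} are precisely the tube-containment and road-inclusion steps). Your population identification via Theorem~\ref{lem::equiv} is a clean addition that the paper leaves implicit.

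There is, however, a genuine error in your treatment of the second assertion. You claim that $\hat t_{\pi_0+\pi_1}\to\pi_2 M_2$, i.e.\ that the empirical threshold sits at the top of the bulk scale. This is not correct. The number of observations with $\hat{\sf p}(X_i)$ at or above the road scale is $N_0+N_1$ plus the bulk observations lying in the $O(h)$-tube around $\mathcal{R}$; the latter contribute a positive $O(nh)$ term, and since $nh\gg\sqrt{n}$ under $nh^2/\log n\to\infty$, this tube term dominates the binomial fluctuation of $N_0+N_1-(\pi_0+\pi_1)n$. Hence with probability tending to one there are \emph{more} than $(\pi_0+\pi_1)n$ observations with road-scale KDE, so $\hat t_{\pi_0+\pi_1}$ lands at the road scale $\asymp h^{-1}$, not at $\pi_2 M_2$. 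In particular your claim that $\hat A_{\pi_0+\pi_1}\setminus(\mathcal{A}\cup\mathcal{R})$ contains the set $\{\p_2\ge M_2-\epsilon_n\}$ is false --- it is contained in the $O(h)$-tube alone, as the paper's Lemma~\ref{lem::inside} establishes directly.

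The consequence for the other inclusion is that, because the threshold sits near the road minimum, a small portion of $\mathcal{R}$ around the global minimizer $m_0$ of $\p_1$ may be excluded from $\hat A_{\pi_0+\pi_1}$. Your hedge ``away from the locus where $\p_1\to 0$'' does not cover this, since under (S) the density $\p_1$ is bounded below by $a_0>0$. The paper handles this by working with $W_n=(\mathcal{A}\cup\mathcal{R})\setminus(m_0\oplus r_n)$ for a suitable $r_n\to 0$, proving $W_n\subset\hat A_{\pi_0+\pi_1}$ via the uniform rescaled-KDE convergence on roads, and then noting $\P_{\sf GPS}((\mathcal{A}\cup\mathcal{R})\setminus W_n)=O(r_n)\to 0$; this is the substance of Lemma~\ref{lem::Wn}. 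Your sketch can be repaired along these lines, but as written the threshold-location step does not go through.
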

The proof of Theorem \ref{thm::conv_set} is given in Appendix \ref{sec::proofs}. The set difference $\triangle$ is a conventional measure of the difference between two sets. Applying the probability $\P_{\sf GPS}$ to the set difference is a common measure of the convergence of a set estimator \citep{mason2009asymptotic, rigollet2009optimal,qiao2017asymptotics,doss2018bandwidth}. Theorem~\ref{thm::conv_set} shows that $\hat{A}_{\pi_0}$ and $\hat{A}_{\pi_0+\pi_1}$ are consistent estimators of $\mathcal{A}$ and $\mathcal{A}\cup\mathcal{R}$, respectively. With this fact, we can use the difference $\hat{A}_{\pi_0+\pi_1}\backslash\hat{A}_{\pi_0}$ as an estimator of $\mathcal{R}$.  Namely, $\hat{A}_{\pi_0}$ can be used to recover the anchor locations and $\hat{A}_{\pi_0+\pi_1}\backslash\hat{A}_{\pi_0}$ can be used to reconstruct the sections of the roads covered by an individual's activity space. 

Finally, we show that under the mixture model in Eq. \eqref{eq::GPSfn}, the KDE $\hat{\sf p}(x) $ diverges with a probability tending to $1$
at any anchor location or any point on a road connecting two anchor points.

\begin{thm}
Under assumptions (K1--2) from Appendix \ref{sec::assumption} and $h\rightarrow0$, we have $\E(\hat{\sf p}(x))\rightarrow \infty$
for any $x\in \mathcal{A}\cup \mathcal{R}$. 
\label{thm::kde}
\end{thm}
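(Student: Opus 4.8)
The plan is to bound $\E(\hat{\sf p}(x))$ from below by retaining only the component of the mixture $\P_{\sf GPS}=\pi_0\P_0+\pi_1\P_1+\pi_2\P_2$ that carries the low-dimensional concentration of mass at $x$, and then to show that this single term already diverges. Since $X_1,\dots,X_n$ are i.i.d.\ from $\P_{\sf GPS}$,
$$
\E(\hat{\sf p}(x))=\frac{1}{h^2}\E\!\left[K\!\left(\frac{\|x-X_1\|}{h}\right)\right]=\frac{1}{h^2}\int K\!\left(\frac{\|x-y\|}{h}\right)\di\P_{\sf GPS}(y),
$$
and since $K\ge 0$ and the mixing weights and component measures are nonnegative, $\E(\hat{\sf p}(x))\ge\frac{\pi_j}{h^2}\int K(\|x-y\|/h)\di\P_j(y)$ for each $j\in\{0,1,2\}$. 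Assumptions (K1--2) will be used to ensure that $K$ is continuous with $K(0)>0$, so that there exist constants $c_0>0$ and $\delta_0>0$ with $K(u)\ge c_0$ for $0\le u\le\delta_0$.

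If $x\in\mathcal{A}$, I would take $j=0$. Because $\P_0$ is atomic with support $\mathcal{A}$, retaining only the atom at $x$ itself, of mass $\P_0(\{x\})=\p_0(x)>0$, yields $\int K(\|x-y\|/h)\di\P_0(y)\ge K(0)\,\p_0(x)$, so that
$$
\E(\hat{\sf p}(x))\ \ge\ \frac{\pi_0\,K(0)\,\p_0(x)}{h^2}\ \longrightarrow\ \infty \quad\text{as } h\to0,
$$
the numerator being a fixed positive constant. If instead $x\in\mathcal{R}$ (the subcase $x\in\mathcal{A}$ being already handled), I would take $j=1$ and restrict the integral to the ball $\B(x,\delta_0 h)$, on which $K(\|x-y\|/h)\ge c_0$, giving $\int K(\|x-y\|/h)\di\P_1(y)\ge c_0\,\P_1(\B(x,\delta_0 h))$. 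By the defining limit of the one-dimensional Hausdorff density, $\P_1(\B(x,r))=2r(\p_1(x)+o(1))$ as $r\to0$, with $\p_1(x)>0$ on $\mathcal{R}$; hence $\P_1(\B(x,\delta_0 h))\ge\delta_0 h\,\p_1(x)$ for $h$ small enough, and
$$
\E(\hat{\sf p}(x))\ \ge\ \frac{\pi_1\,c_0\,\delta_0\,\p_1(x)}{h}\ \longrightarrow\ \infty \quad\text{as } h\to0.
$$

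The one point that needs care is the road case: I must be sure that the local $\P_1$-mass in a ball of radius $\asymp h$ around $x$ is genuinely of order $h$ and not $o(h)$ --- equivalently, that $\P_1$ possesses a strictly positive (and finite) one-dimensional Hausdorff density at $x$. This is exactly what the specification of the mixture model in Section~\ref{sec::model} provides, and what Theorem~\ref{lem::equiv} encodes as $\p_1(x)>0$ whenever $\omega(x)=1$; the only mild caveat concerns boundary points of the support $\mathcal{R}$ at which $\p_1$ could vanish, which are excluded by the structural assumption. Everything else --- the i.i.d.\ reduction, the two nonnegativity bounds, and the kernel lower bound from (K1--2) --- is routine, and no rate condition on $h$ is needed: the blow-up arises purely from the $h^{-2}$ (anchor) or $h^{-1}$ (road) prefactor multiplying a positive constant, which is precisely the instability of the KDE on $\mathcal{A}\cup\mathcal{R}$ asserted by the theorem.
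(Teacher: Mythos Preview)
Your proposal is correct and follows essentially the same strategy as the paper: lower-bound the kernel by a positive constant times an indicator of a small ball, take expectations, and then observe that the small-ball probability at a point of $\mathcal{A}\cup\mathcal{R}$ does not decay like $h^2$. The only cosmetic difference is that you split into the two cases $x\in\mathcal{A}$ and $x\in\mathcal{R}$ via the mixture decomposition $\P_{\sf GPS}=\pi_0\P_0+\pi_1\P_1+\pi_2\P_2$, whereas the paper works directly with $\P_{\sf GPS}$ and simply notes that $\frac{1}{\pi(h/2)^2}\P_{\sf GPS}(B(x,h/2))\to\infty$ on $\mathcal{A}\cup\mathcal{R}$ (i.e., $\mathcal{H}_2(x)=\infty$ there), which collapses your two cases into one line.
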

The proof of Theorem~\ref{thm::kde} is given in Appendix \ref{sec::proofs}. This result shows why the KDE does not give a stable estimator of $\mathcal{A}$ and $\mathcal{R}$ which explains why for the activity space in Figure~\ref{fig::alpha}, the KDE does not properly detect its structure. 

\section{Simulation study} \label{sec::sim}
 
We consider an example individual whose activity space has three anchor locations $\mathcal{A}$: home (located at $(0,0)$), office (located at $(0,2)$), and gym (located at $(2,0)$) \--- see Figure \ref{fig::simExam}. We assume that the anchor locations $\mathcal{A}$ are connected by three straight segments of road $\mathcal{R}$. The individual spends $60\%$ of their time in the anchor locations, and $30\%$ of their time traveling on the roads. In the rest of their time, this individual walks in the neighborhoods around their office and home, but never walks in the vicinity of their gym. When the individual is at an anchor location, they spend $50\%$ of their time at home, $30\%$ of their time at work, and $20\%$ of their time at the gym. For this individual's activity space, the mixture model in Eq. \eqref{eq::GPSfn} is written as ($\pi_0=0.6$, $\pi_1=0.3$, $\pi_2=0.1$): 
\begin{equation}\label{eq::GPSfnExam}
 \P_{\sf GPS}(x) = 0.6 \P_0(x) + 0.3 \P_1(x) + 0.1 \P_2(x),
\end{equation}
with
$$
 \P_0(x)  = 0.5 \delta_{(0,0)}(x) + 0.3\delta_{(0,2)} +0.2\delta_{(2,0)}.
$$
Here $\delta_{(a,b)}(x)$ is a function that puts a point mass at $(a,b)$. The time in which the individual travels between the anchor locations is divided as follows: $30\%$ on the road between home and gym, $20\%$ on the road between gym and office, and $50\%$ on the road between home and office. We assume that the individual travels with the same speed on all road segments. For $70\%$ of their total walk time the individual moves  uniformly within the square $[-0.5,0.5]\times [-0.5,0.5]$ centered at their home, and for remaining $30\%$ the individual moves uniformly within the square $[1.6,2.4]\times[-0.4,0.4]$ centered at their office. With these assumptions, the distributions $\P_1$ and $\P_2$ in Eq. \eqref{eq::GPSfnExam} are completely specified. 

\begin{figure}[!ht]
\center
\includegraphics[width=3in]{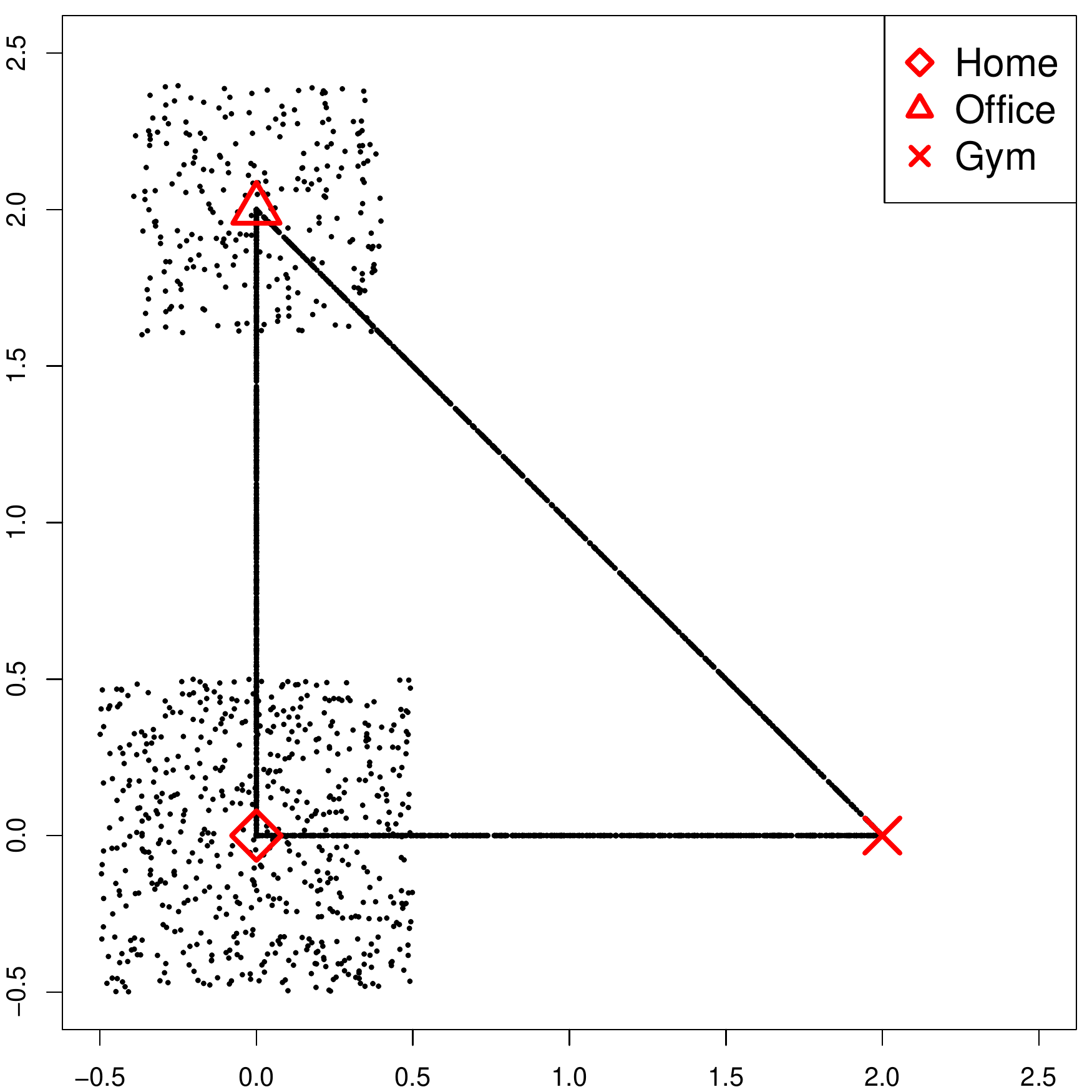}
\caption{Activity space for the simulation study, and scatter plot of $n=8,000$ locations sampled from the mixture model \eqref{eq::GPSfnExam}. The anchor locations are shown as follows: home (red diamond), office (red triangle), and gym (red cross).}
\label{fig::simExam}
\end{figure}

We generate $n=8,000$ samples from the mixture model \eqref{eq::GPSfnExam} \--- see Figure \ref{fig::simExam}. We use a smoothing bandwidth of  $0.5$ to compute the density ranking. The corresponding contours (top left panel of Figure \ref{fig::sim}) show a very good agreement with the anchor locations and the road segments (top right panel of Figure \ref{fig::sim}). We determine two level sets of density ranking (see the bottom panels of Figure \ref{fig::sim}): $\hat{A}_{0.6}$ ($\pi_0=0.6$) and $\hat{A}_{0.9}$ ($\pi_0+\pi_1=0.9$) corresponding with the mixture weights in Eq. \eqref{eq::GPSfnExam}. We see that $\hat{A}_{0.6}$ recovers all three anchor locations $\mathcal{A}$, while $\hat{A}_{0.9}$ recovers the anchor locations and the road segments $\mathcal{A}\cup\mathcal{R}$. This is consistent with our theoretical results, in particular, with Theorem~\ref{thm::conv_set}.

\begin{figure}[!ht]
\center
\includegraphics[width=1.6in]{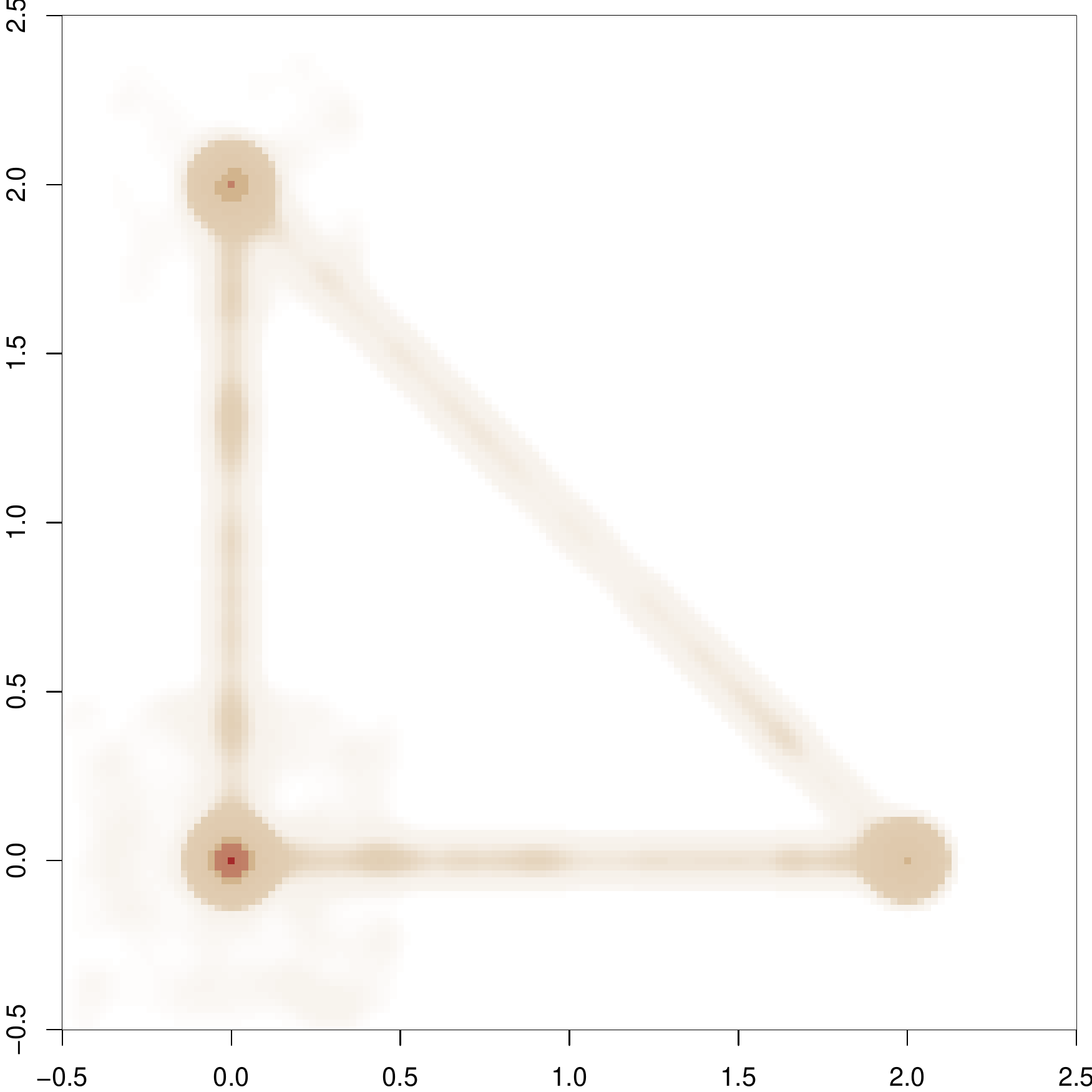}
\includegraphics[width=1.6in]{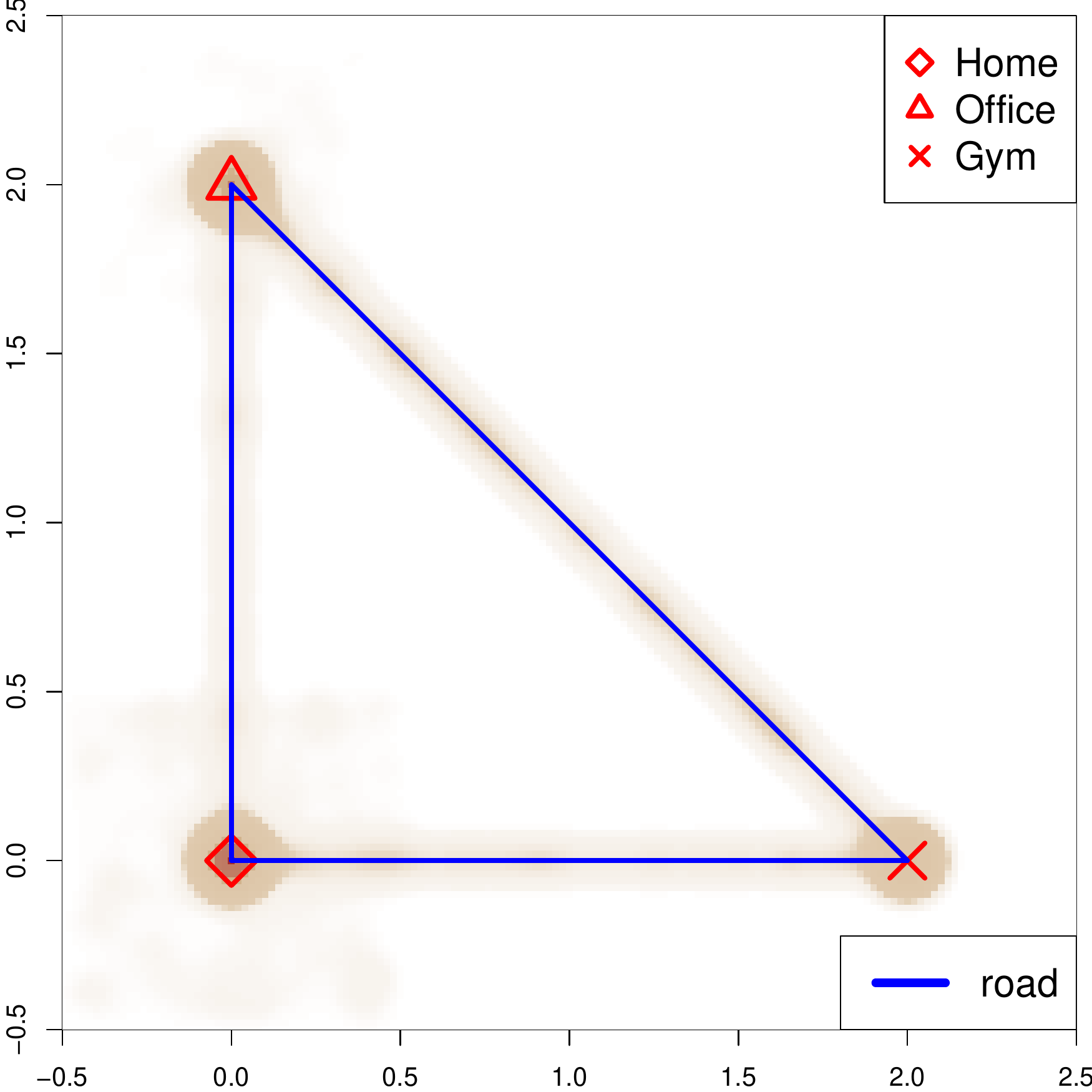}\\
\includegraphics[width=1.6in]{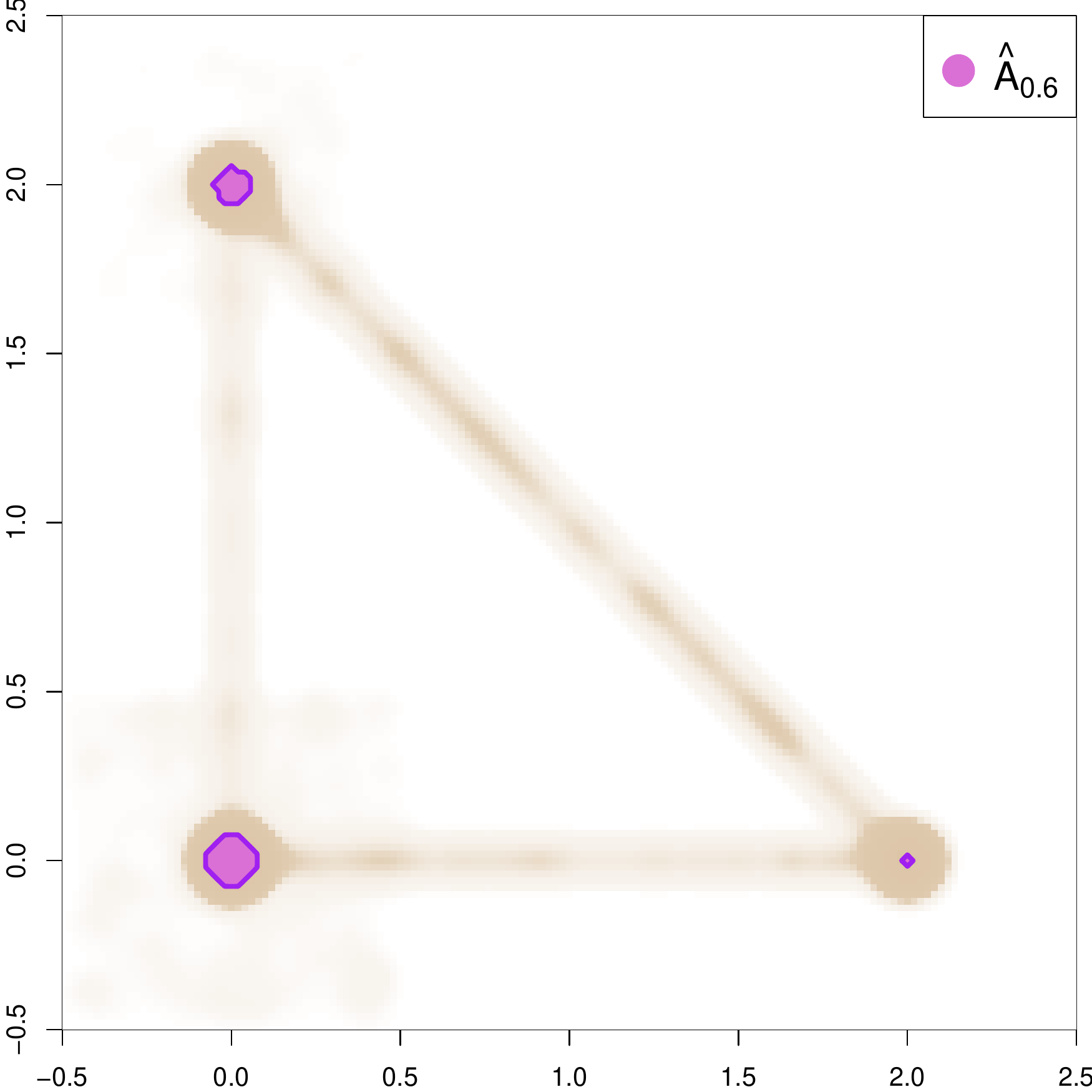}
\includegraphics[width=1.6in]{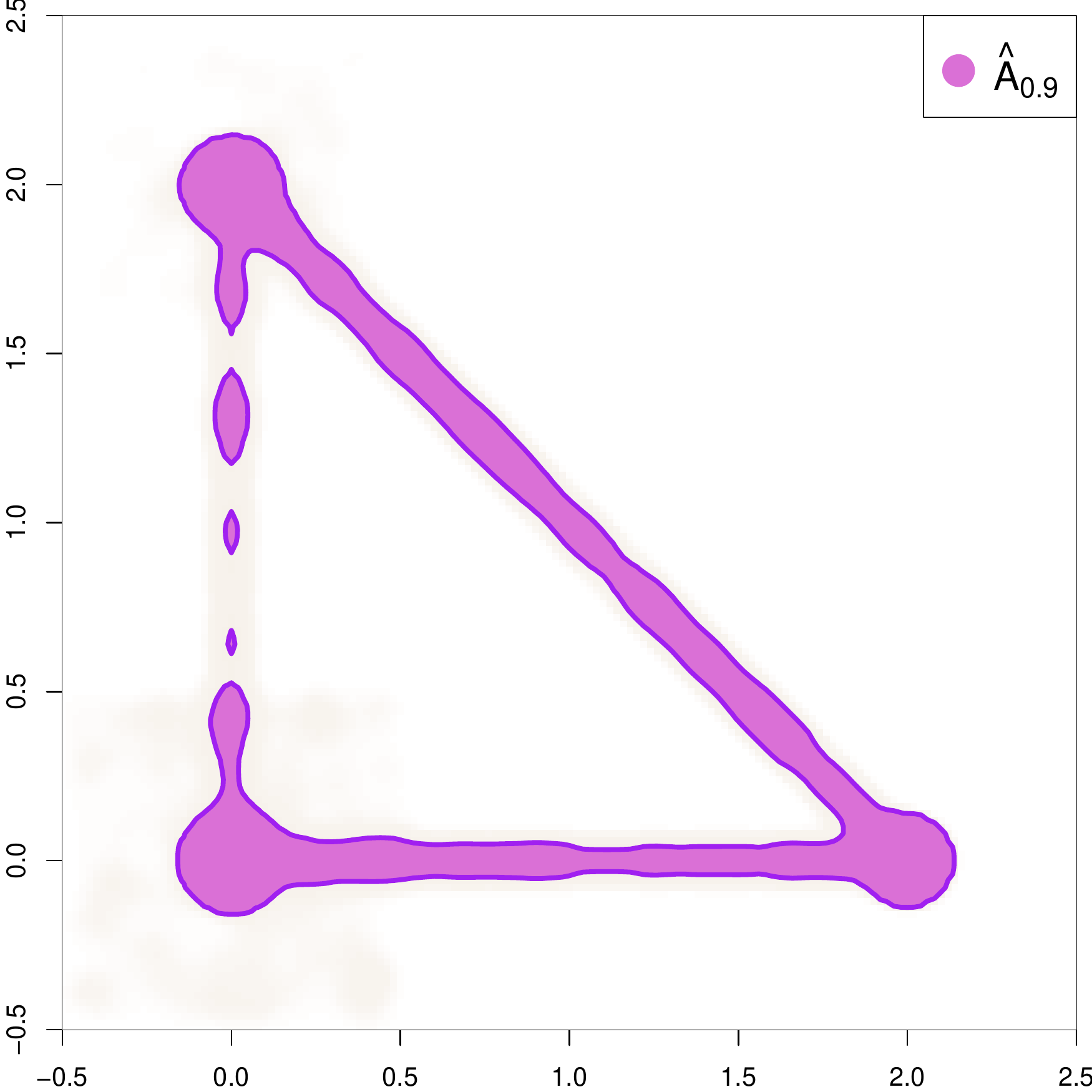}
\caption{Analysis of simulated data. Top left panel: contours of density ranking. Top right panel: the anchor locations and the road segments superimposed on the density ranking contours. Bottom left panel: the level set $\hat{A}_{0.6}$ of density ranking. Bottom right panel: the level set $\hat{A}_{0.9}$ of density ranking.}
\label{fig::sim}
\end{figure}


We compare the relative performance of kernel density estimation and density ranking by simulating $n=8,000$ samples from the mixture model \eqref{eq::GPSfnExam} 100 times. For density ranking, we determine the level sets $\{\hat{A}_\gamma:\gamma=0.05,0.10,\ldots, 0.95\}$. For kernel density estimation, we determine the level sets $\{\hat{A}_{\gamma\cdot \max_x\hat{\sf p}(x)}:\gamma=0.05,0.10,\ldots, 0.95\}$. For each simulation experiment and each of level set $A$, we calculate the distances $\P_{\sf GPS}(A\triangle \mathcal{A})$ and $\P_{\sf GPS}(A\triangle(\mathcal{A}\cup\mathcal{R}))$. These distances represent the error of estimating the anchor locations $\mathcal{A}$ and the combined anchor locations and road segments $\mathcal{A}\cup\mathcal{R}$ with the level set $A$. The average estimation errors are displayed in Figure \ref{fig::sim2}. The standard errors of the curves in Figure \ref{fig::sim2}
are extremely small ($0.003-0.005$), and have been omitted.

\begin{figure}[!ht]
\center
\includegraphics[width=2.2in]{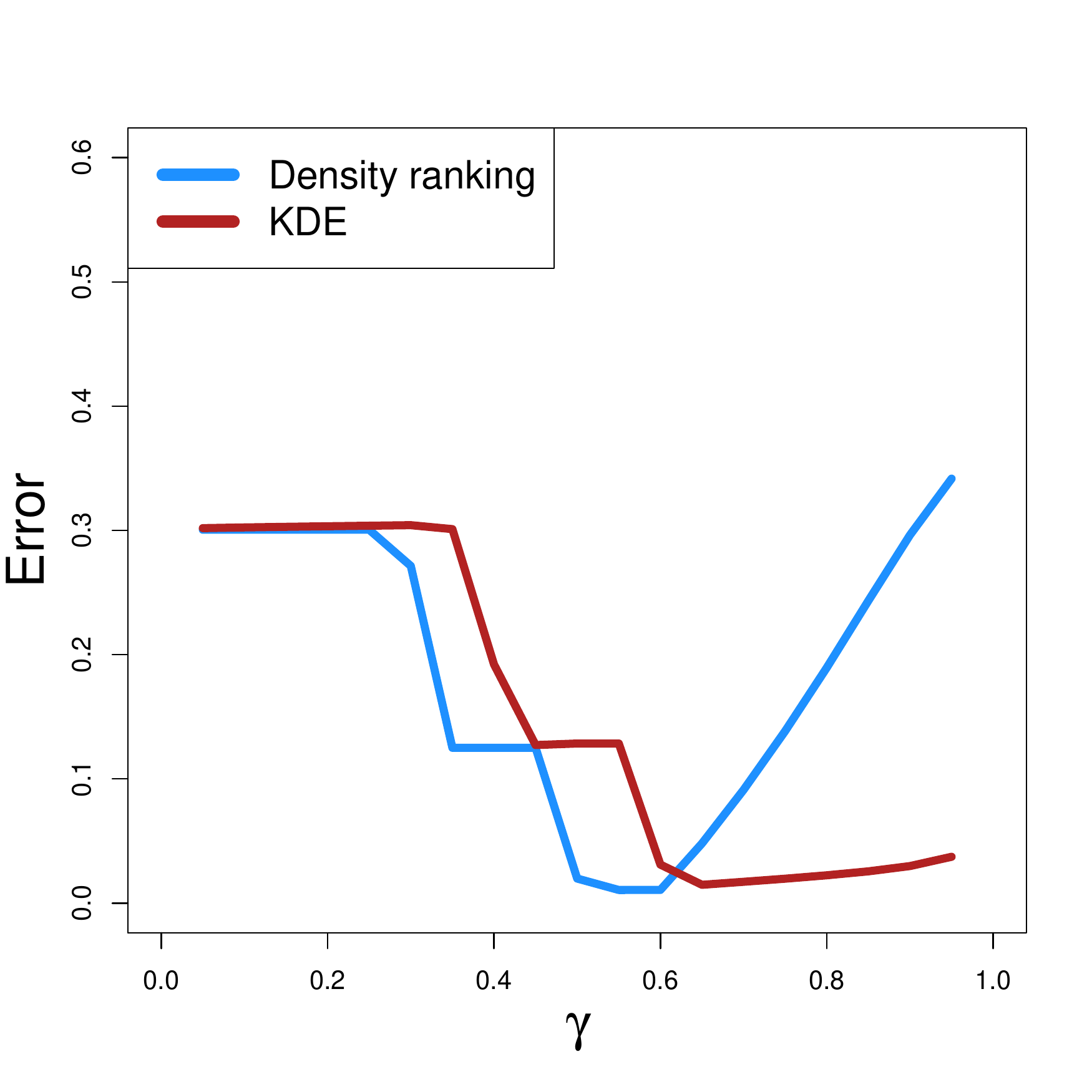}
\includegraphics[width=2.2in]{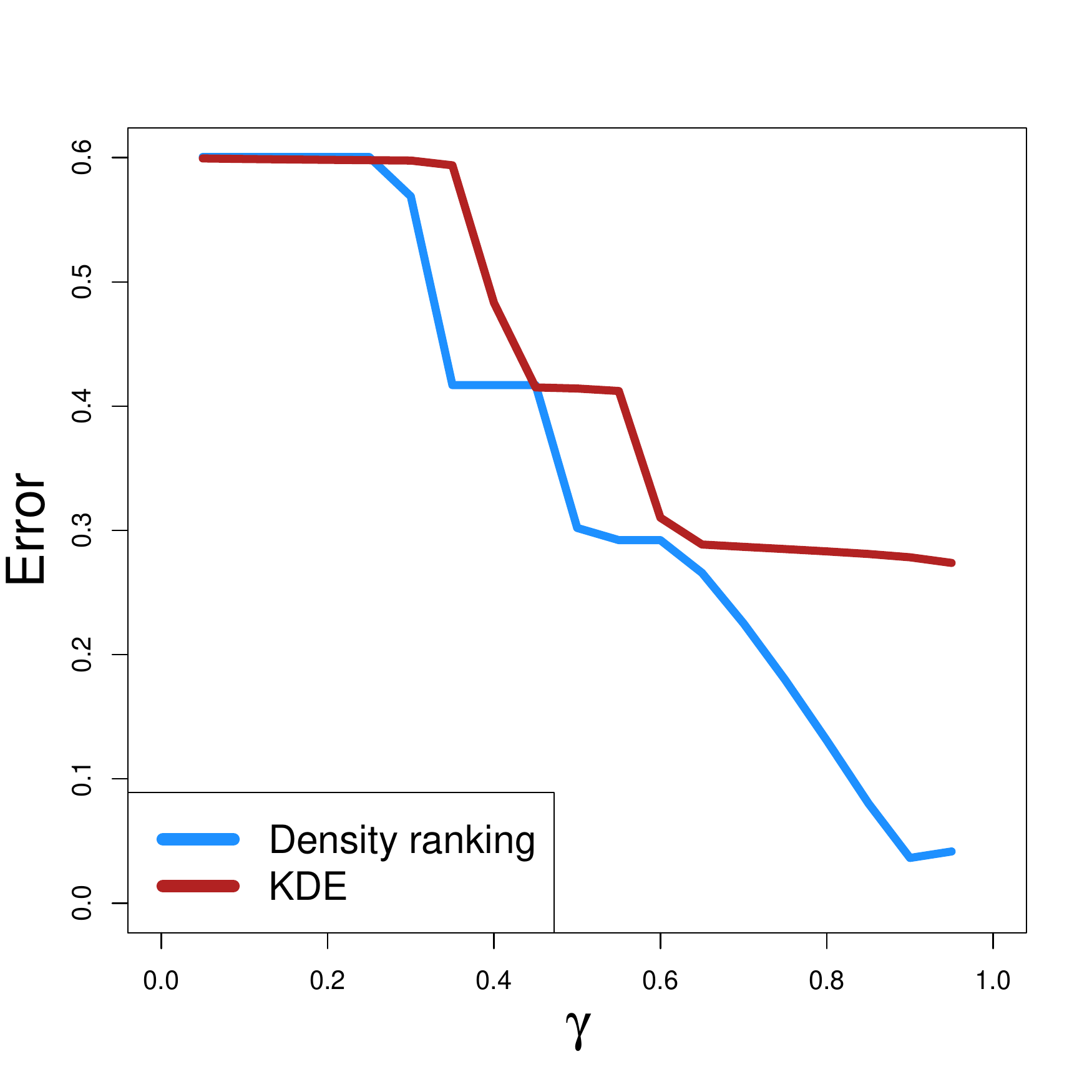}
\caption{Errors for detecting the anchor locations $\mathcal{A}$ (left panel) and the combined anchor locations and road segments $\mathcal{A}\cup\mathcal{R}$ (right panel) in the simulation study. The x-axis shows the level $\gamma$. In the left panel, the y-axis shows the error $\P_{\sf GPS}(A_{\gamma}\triangle \mathcal{A})$. In the right panel, the y-axis shows the error $\P_{\sf GPS}(A_{\gamma}\triangle(\mathcal{A}\cup\mathcal{R}))$.}
\label{fig::sim2}
\end{figure}

For the purpose of estimating the anchor locations $\mathcal{A}$, the left panel of Figure \ref{fig::sim2} shows that the level sets from both kernel density estimation and density ranking work well, although the level sets from density ranking achieve a smaller error for levels below $\pi_0=0.6$ which represents the true percentage of time spent in the anchor locations by the example individual. However, for the purpose of recovering the combined anchor locations and road segments $\mathcal{A}\cup\mathcal{R}$, the level sets from density ranking are significantly more accurate compared to the level sets from kernel density estimation. 

\begin{figure}[!ht]
\center
\includegraphics[width=1.5in]{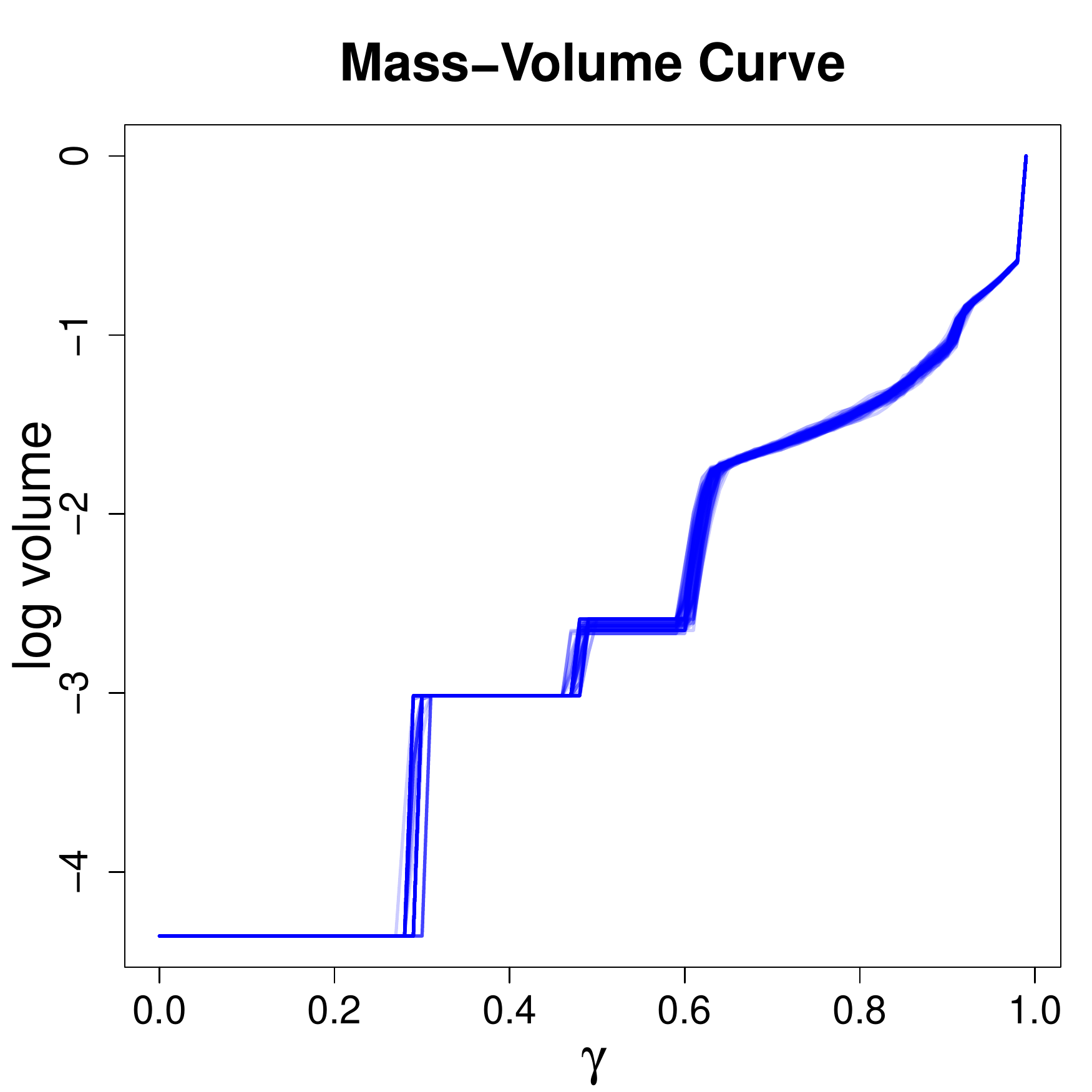}
\includegraphics[width=1.5in]{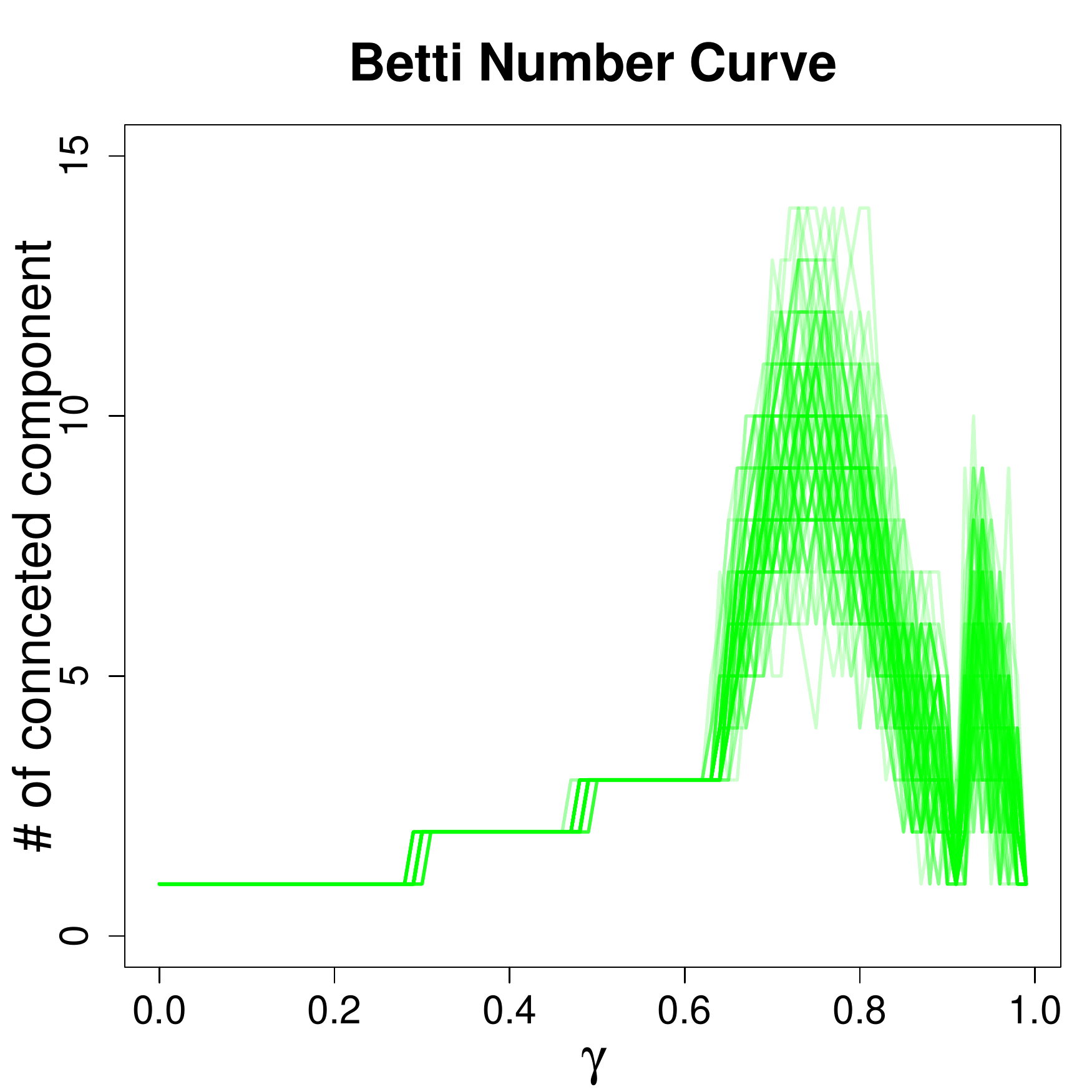}
\includegraphics[width=1.5in]{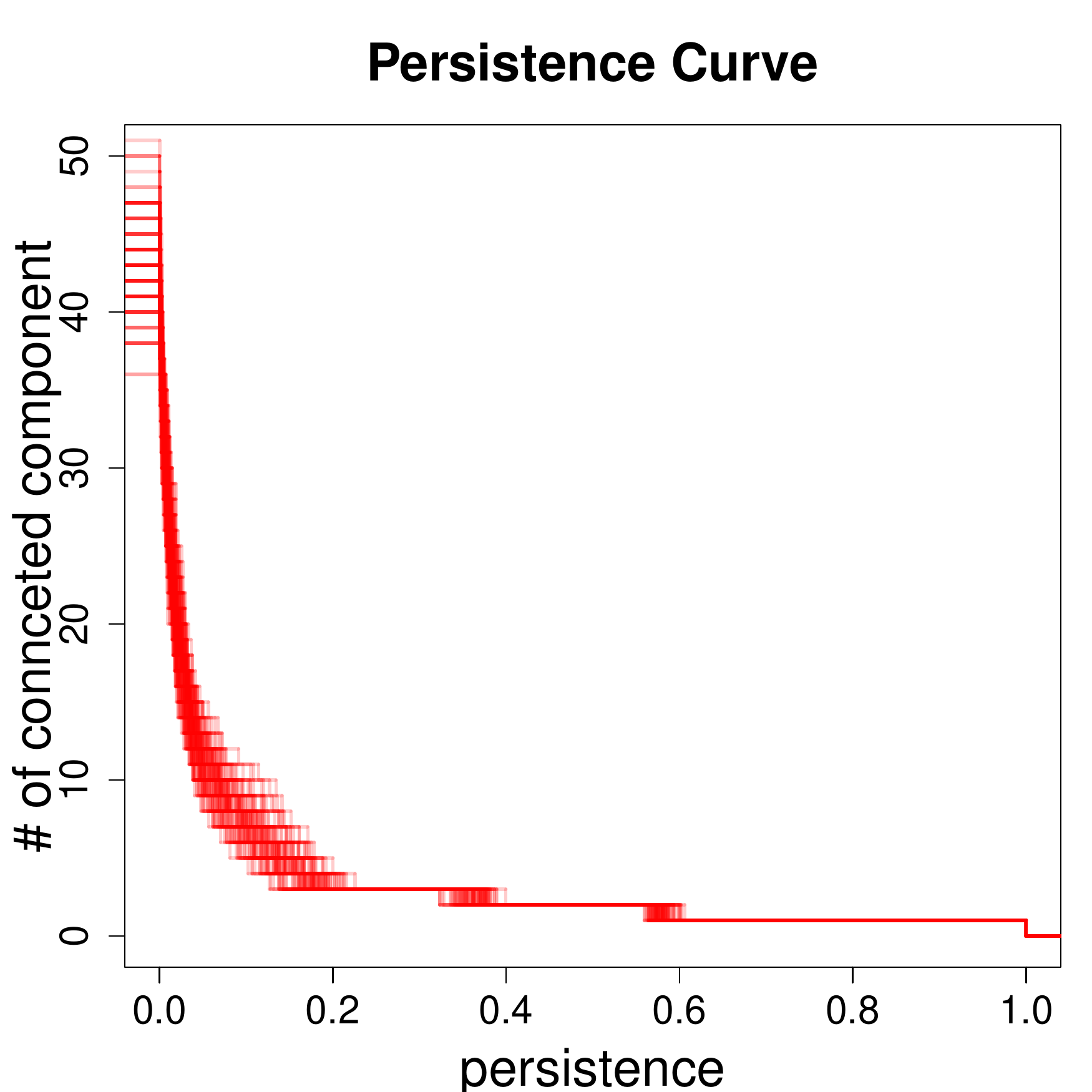}
\caption{Summary curves from the $100$ simulated datasets. The left panel shows the mass-volume curve, the middle panel shows the Betti number curve, and the right panel shows the persistence curve. The three flat regions in the left and middle panels correspond to the three anchor points $\mathcal{A}$. 
}
\label{fig::sim3}
\end{figure}

In Figure~\ref{fig::sim3}, we display the three summary curves presented in Section \ref{sec::tda}. Each panel contains $100$ curves corresponding with each simulation replicate, but many of these curves overlap. The mass-volume curve and the Betti number curve (left and middle panels) are flat around the intervals $[0, 0.3]$, $[0.3,0.5]$, and $[0.5, 0.6]$. These flat regions provide insight about the existence of anchor locations: $[0, 0.3]$ is for home, $[0.3,0.5]$ is for work, $[0.5,0.6]$ is for gym. The persistence curve (right panel) also indicates that there are three connected components with a high persistence. Each connected component is associated with an anchor point.

The summary curves allow us to choose the density ranking level to recover the anchor points. For example, the flat region at $[0.5,0.6]$ of the mass-volume curve and the Betti number curve correspond to the valley in the error curves in the left panel of  Figure~\ref{fig::sim2}. As such, using level sets $\hat{A}_\gamma$ with $\gamma\in [0.5,0.6]$ to estimate $\mathcal{A}$ yields the smallest estimation errors. Thus the summary curves are very informative about the choice of ranking thresholds to employ in the identification of anchor points.

\section{Analysis of GPS data}	\label{sec::DA}

We illustrate the application of our methodology to the GPS data from the pilot study described in Section~\ref{sec::GPS}.

\subsection{Density ranking}

We apply density ranking based on the KDE in Eq. (\ref{eq:kdequart}) with a smoothing bandwidth $h$ of $200$ meters. This choice implies that every observed GPS location will affect its neighborhood up to a distance of $200$ meters. In Appendix \ref{sec:app} we compare several smoothing bandwidths: $h=200$ seems to give an appropriate amount of smoothing for these data.

We consider GPS locations that belong to the zoom-in area shown in Figure~\ref{fig::gps1} since this area contains most locations of the 10 individuals in the pilot study. The density ranking of each individual is given in Figure~\ref{fig::17h200}.  The pattern of density ranking varies from individual to individual. Individuals 2, 4 and 6 have more widespread GPS location distributions, while individuals 1 and 5 recorded GPS locations that seem to be more clustered. There are two key locations shared by all 10 individuals: the workplace and the location of the center of the township. The density ranking of all 10 individuals is high at the locations of the workplace and the township, along the road that connects them.

\begin{figure}[!ht]
\center
\includegraphics[width=1.6in]{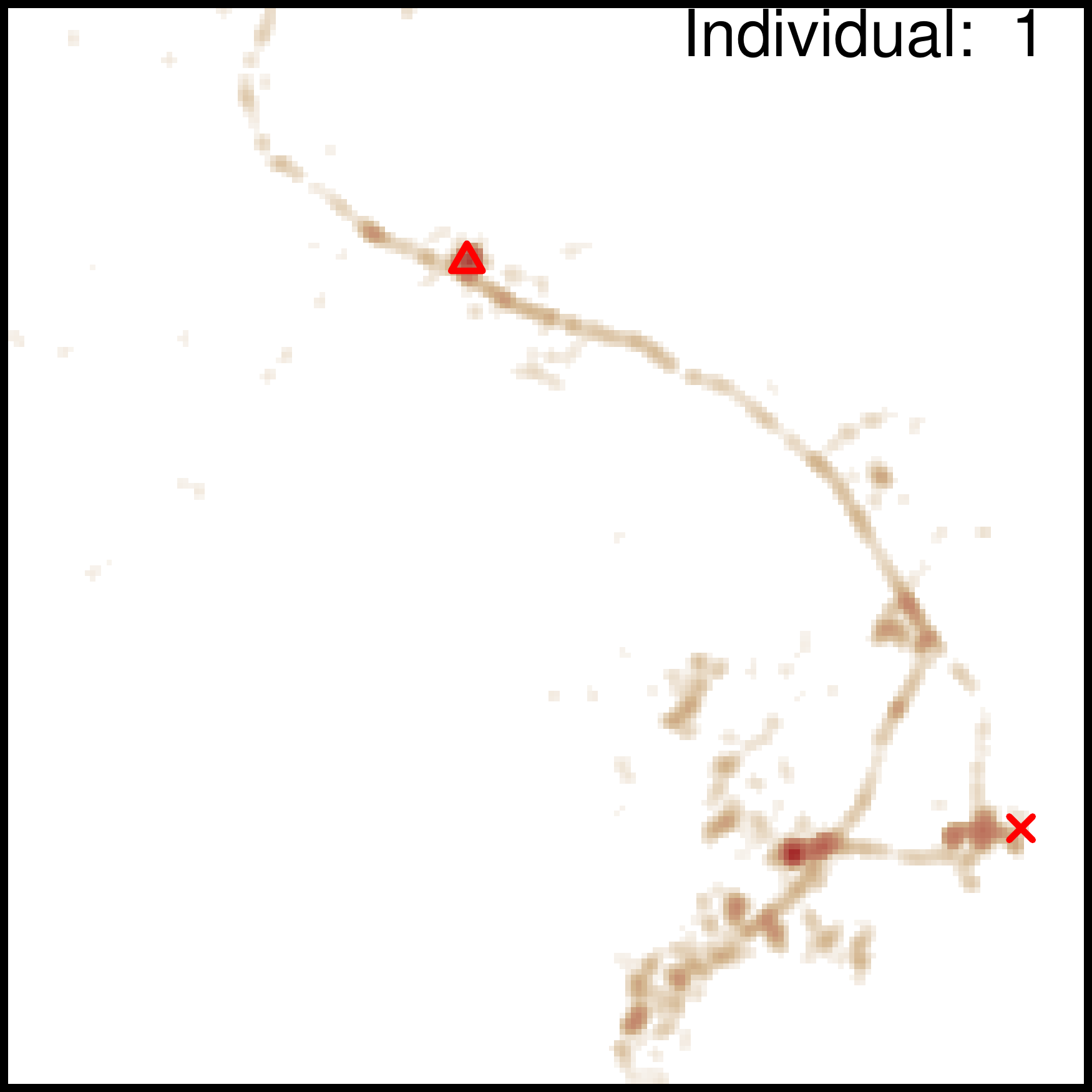}
\includegraphics[width=1.6in]{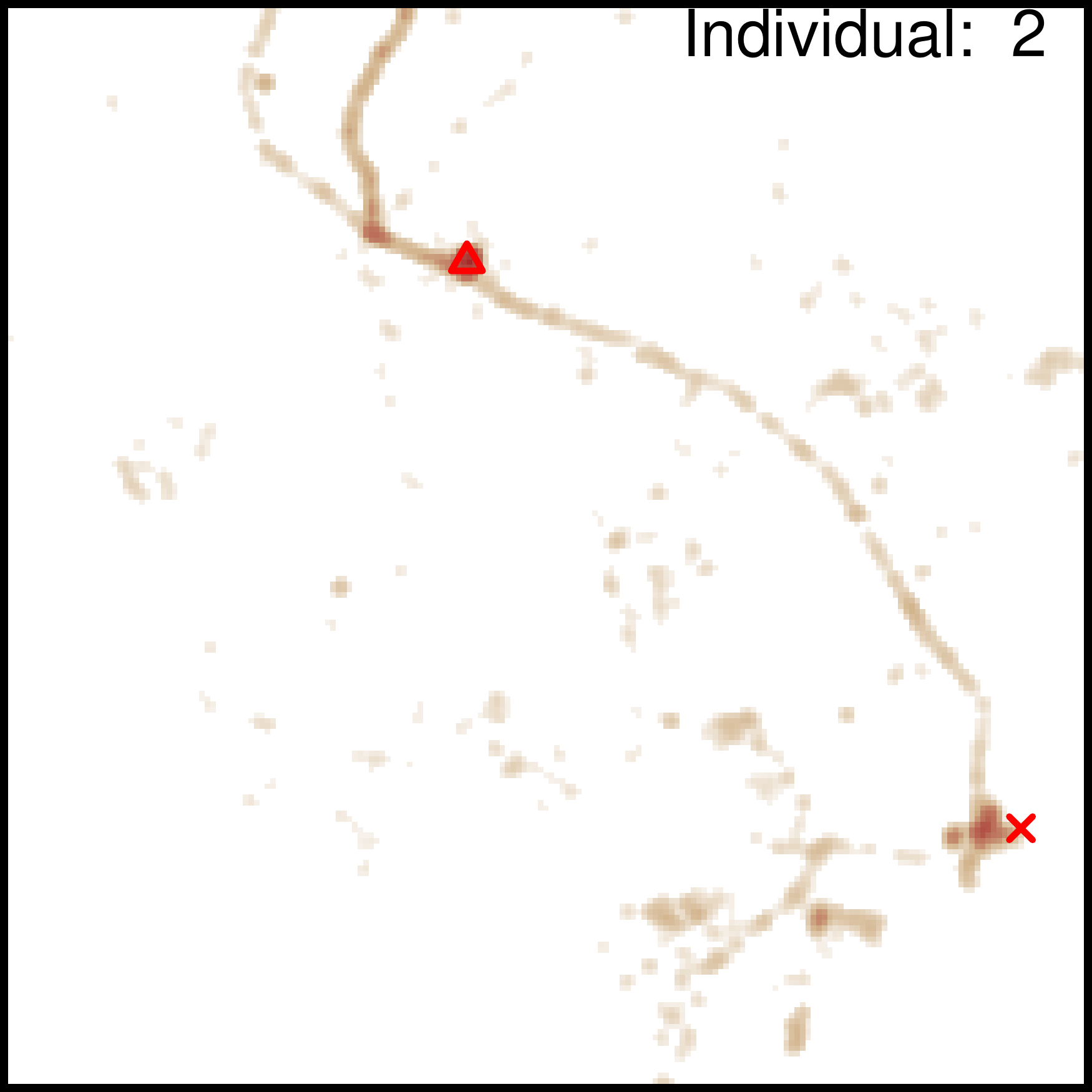}
\includegraphics[width=1.6in]{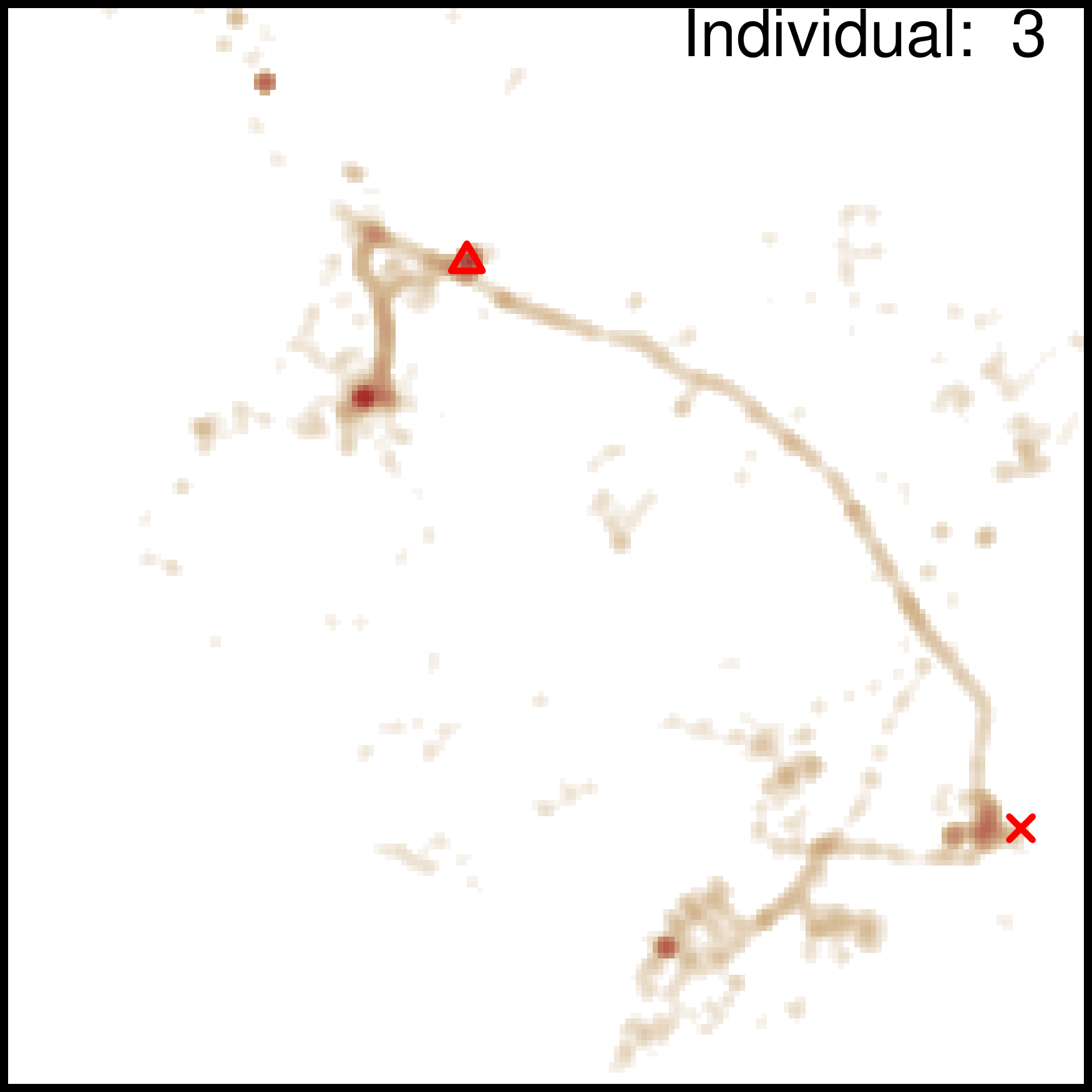}
\includegraphics[width=1.6in]{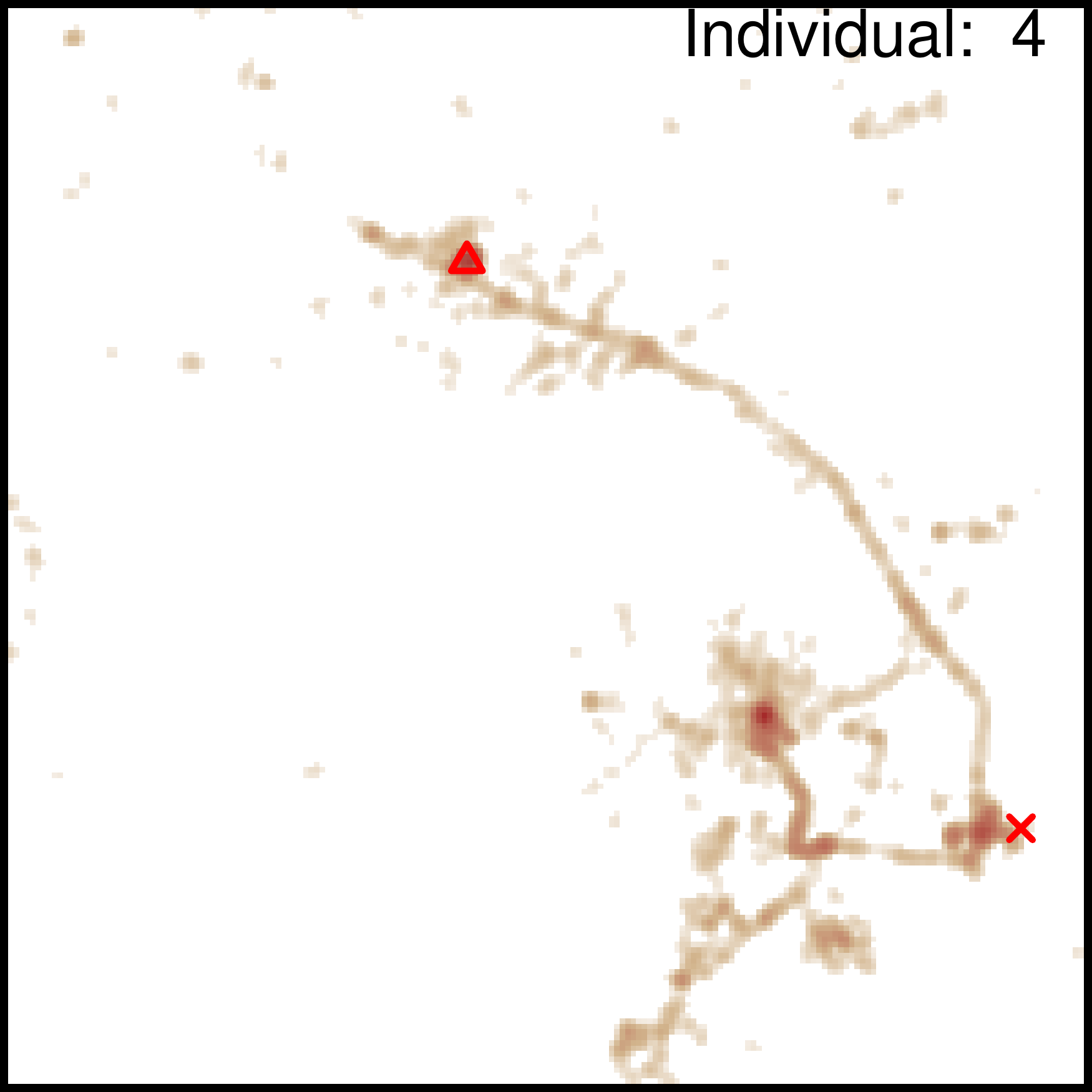}
\includegraphics[width=1.6in]{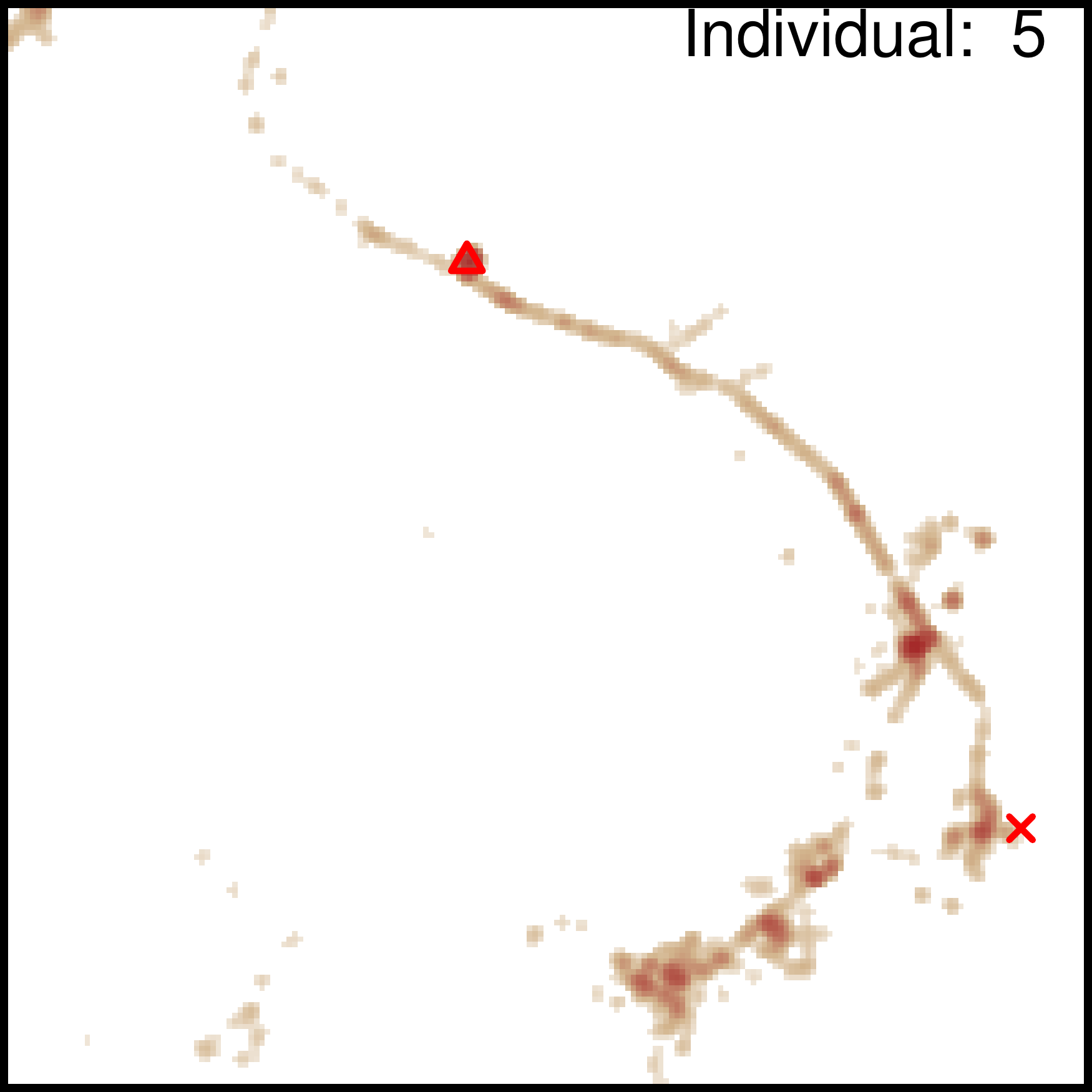}
\includegraphics[width=1.6in]{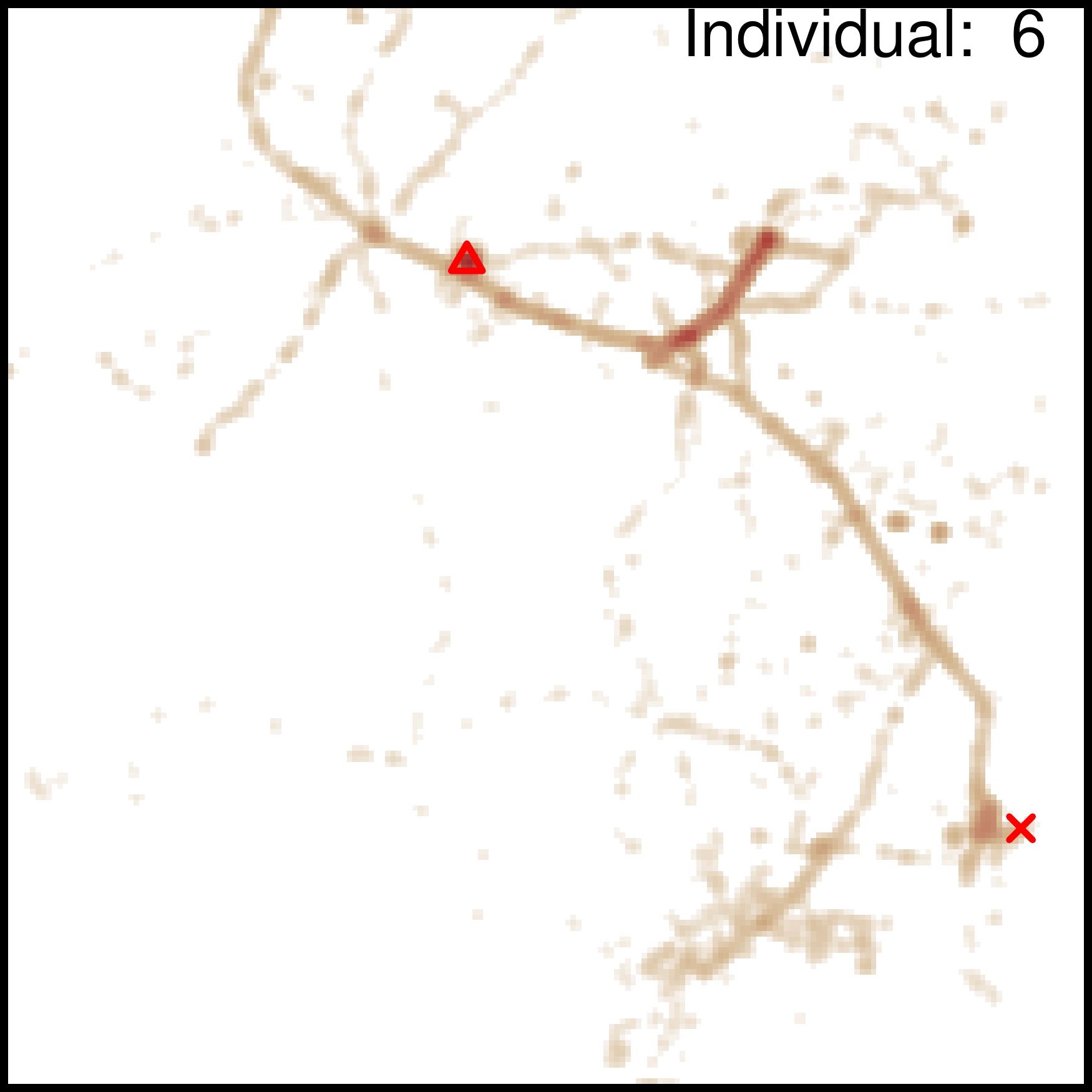}
\includegraphics[width=1.6in]{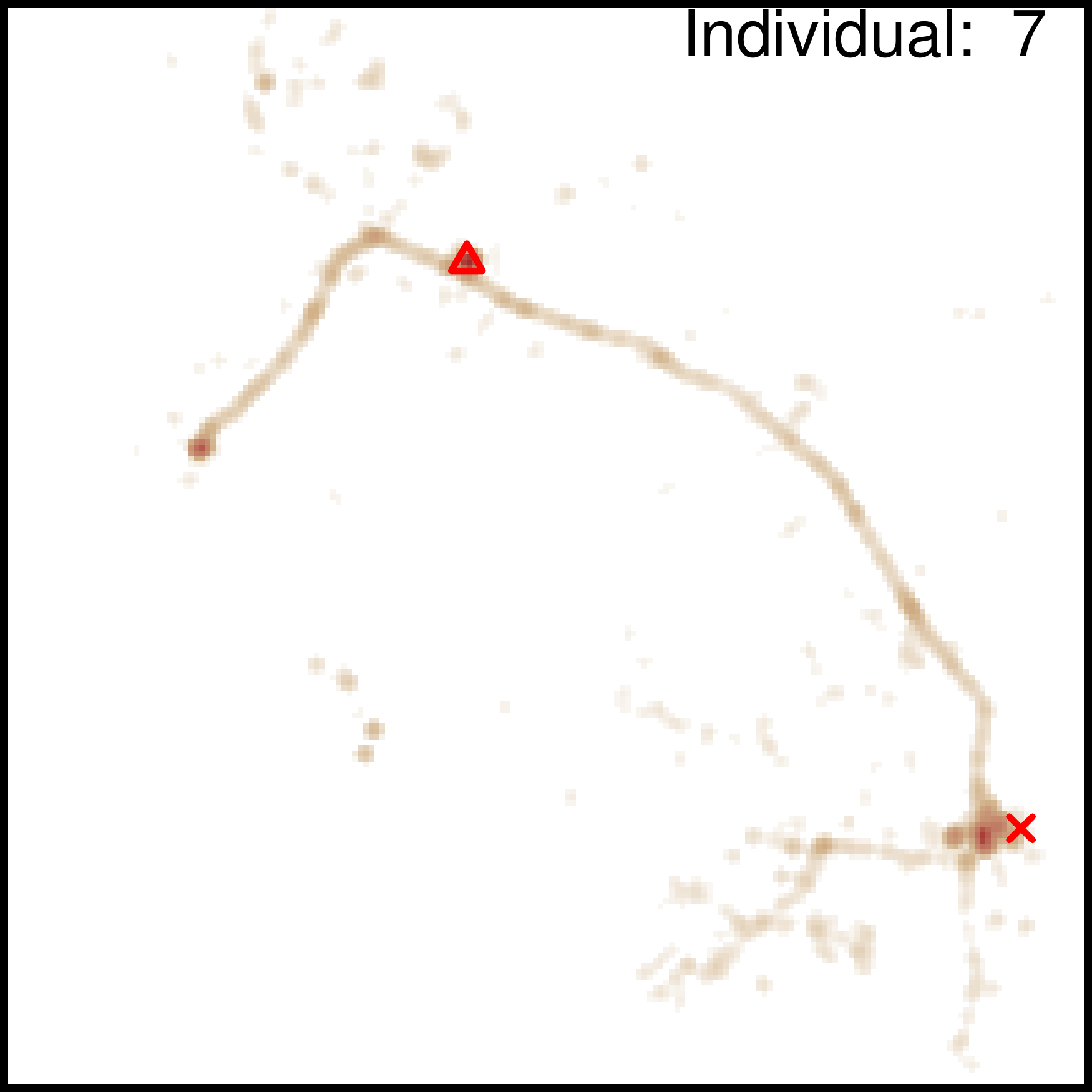}
\includegraphics[width=1.6in]{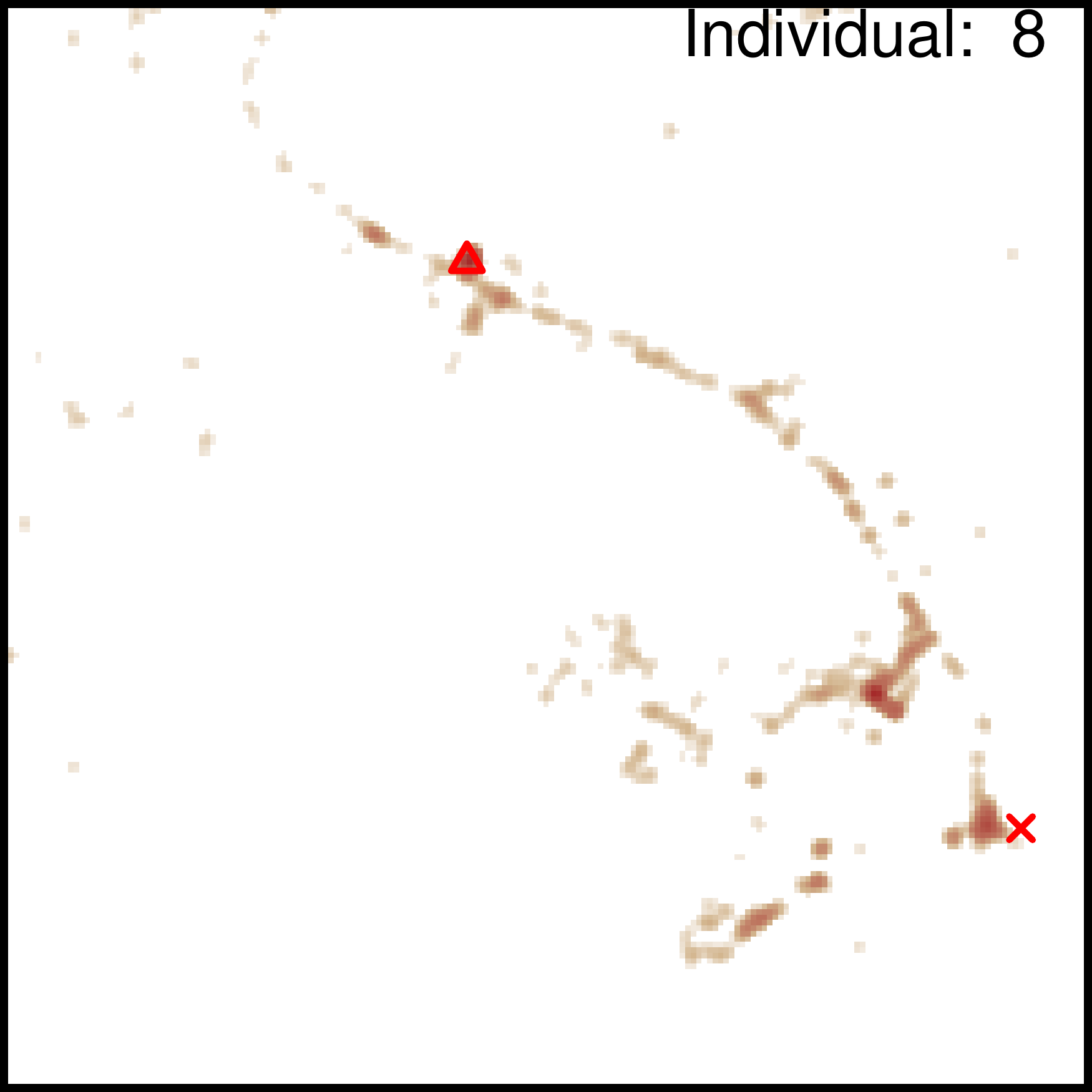}
\includegraphics[width=1.6in]{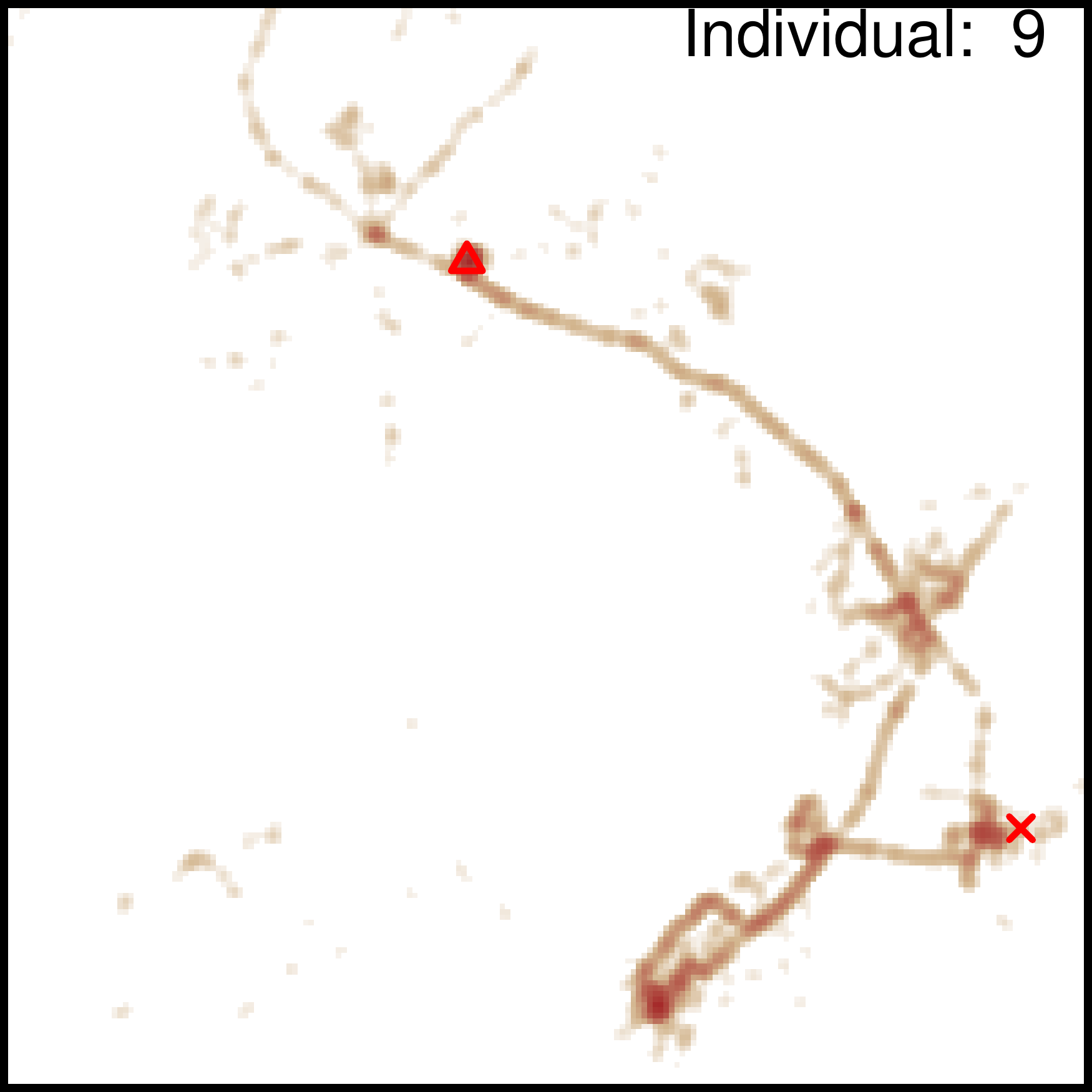}
\includegraphics[width=1.6in]{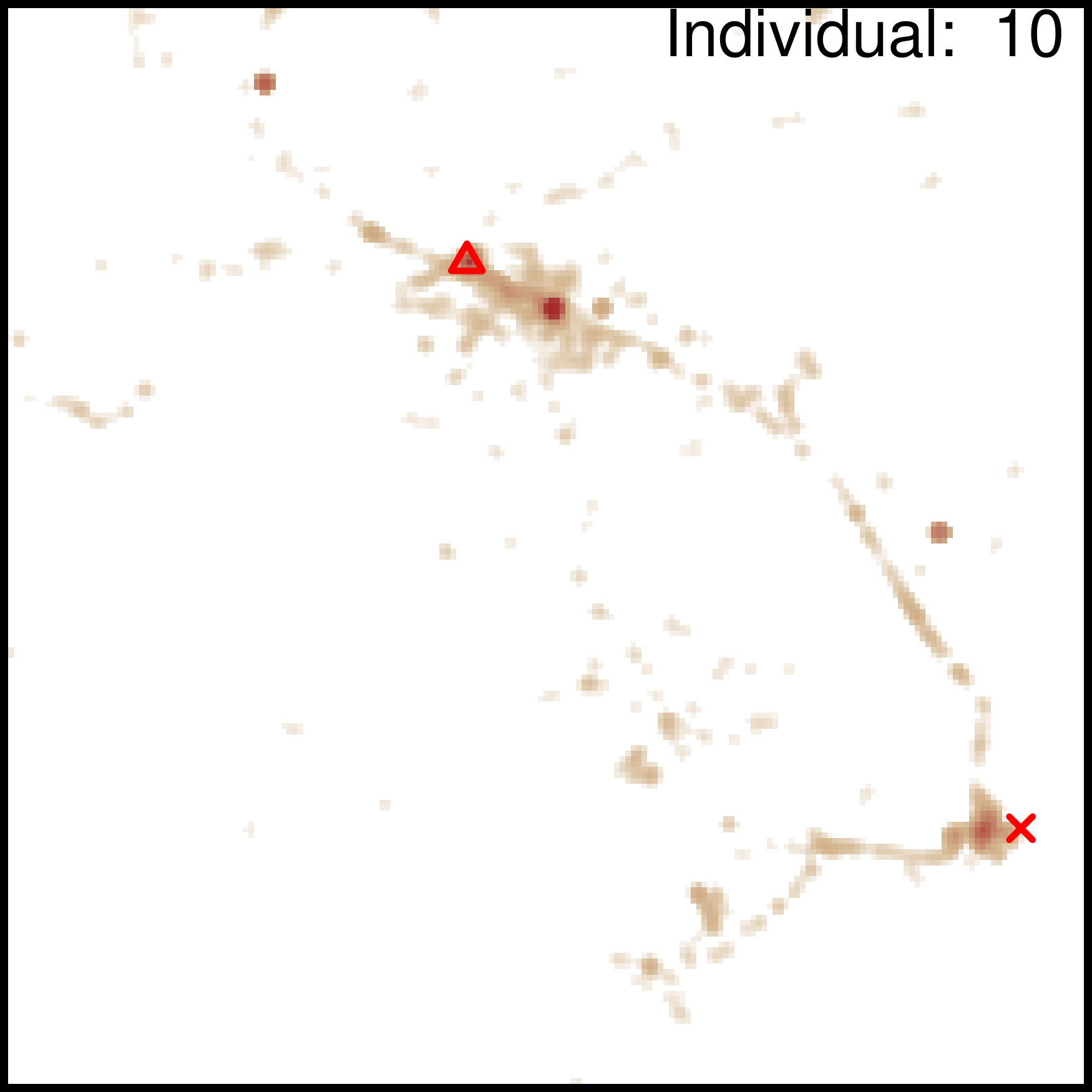}
\caption{
Maps showing the density ranking of the locations recorded for each of participant in the GPS pilot study. The location of the workplace of the study participants is marked with a red triangle, while the location of the center of the township in the study area is marked with a red cross.
}
\label{fig::17h200}
\end{figure}

In Figure~\ref{fig::17h200lv} we overlap the top activity spaces of the 10 individuals at three levels: $\gamma=0.2,0.5$, and $0.8$. The workplace was included in all top $20\%$ activity spaces (top left panel), while  the township was included in all top $50\%$ activity spaces (top right panel). Paths that follow several local roads are included in most of the top $80\%$ activity spaces (bottom panels). Except for the workplace and the township, the rest of the top $20\%$ activity spaces of the 10 individuals are not overlapping: these regions are probably indicative of the locations of their homes.

\begin{figure}[!ht]
\center
\includegraphics[width=2.4in]{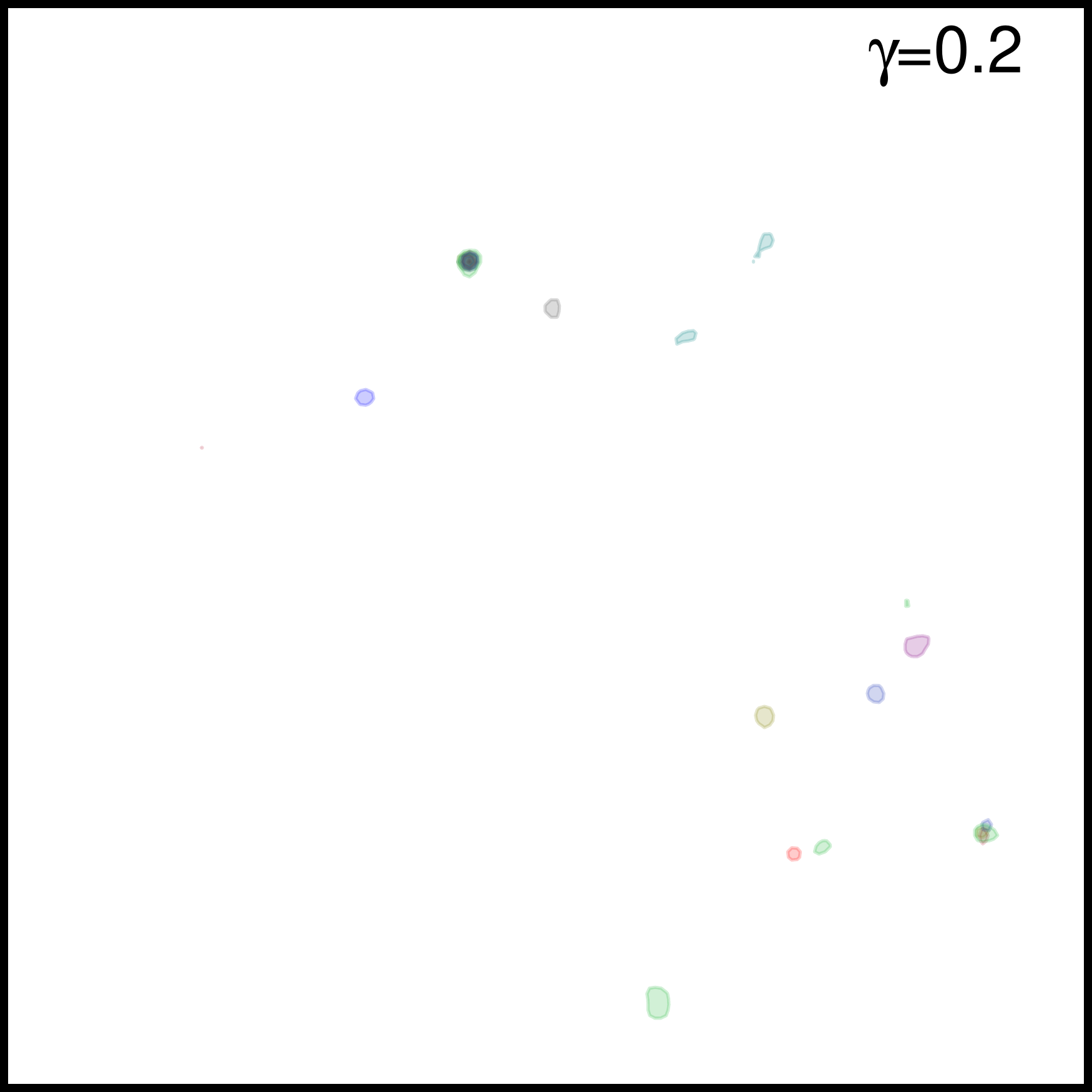}
\includegraphics[width=2.4in]{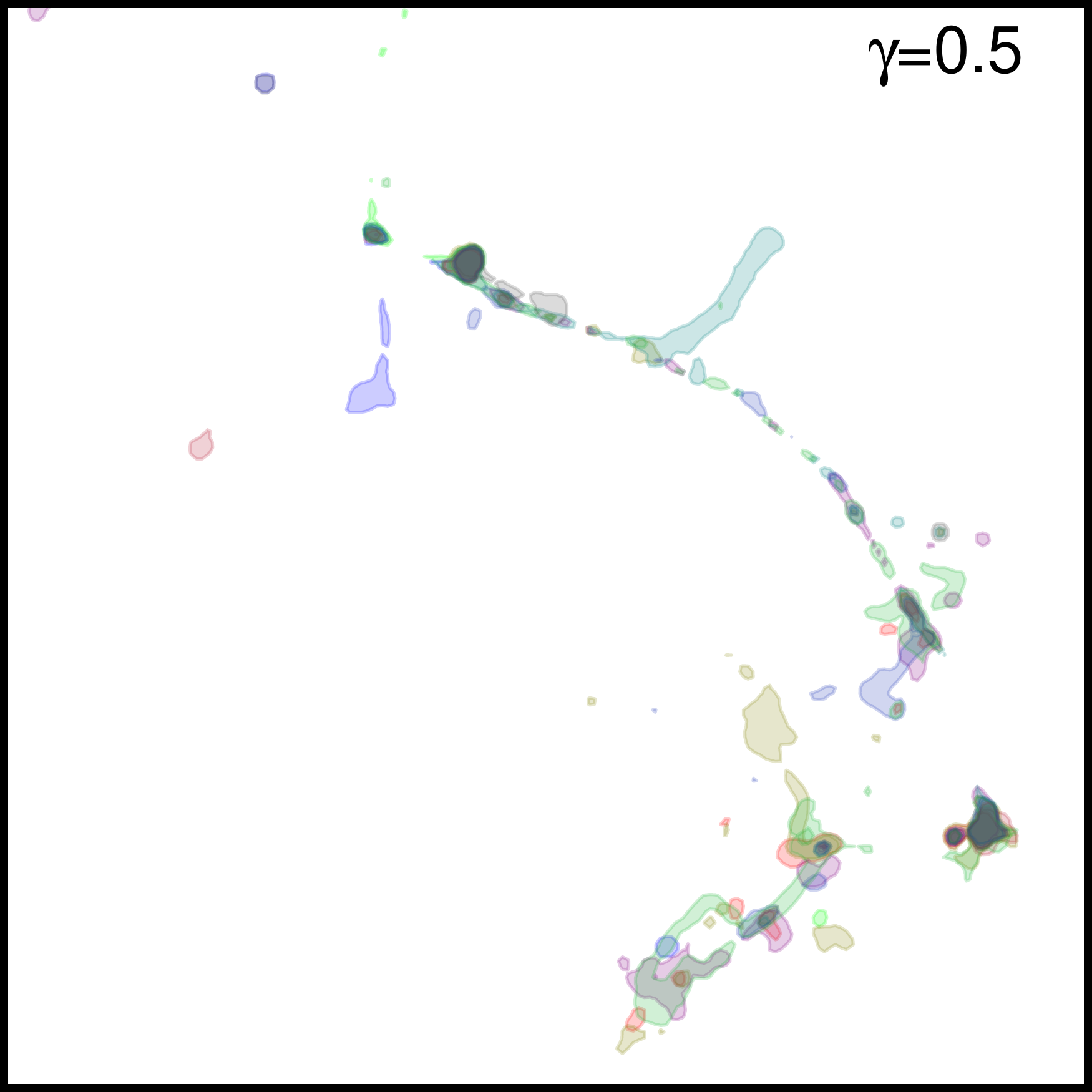}
\includegraphics[width=2.4in]{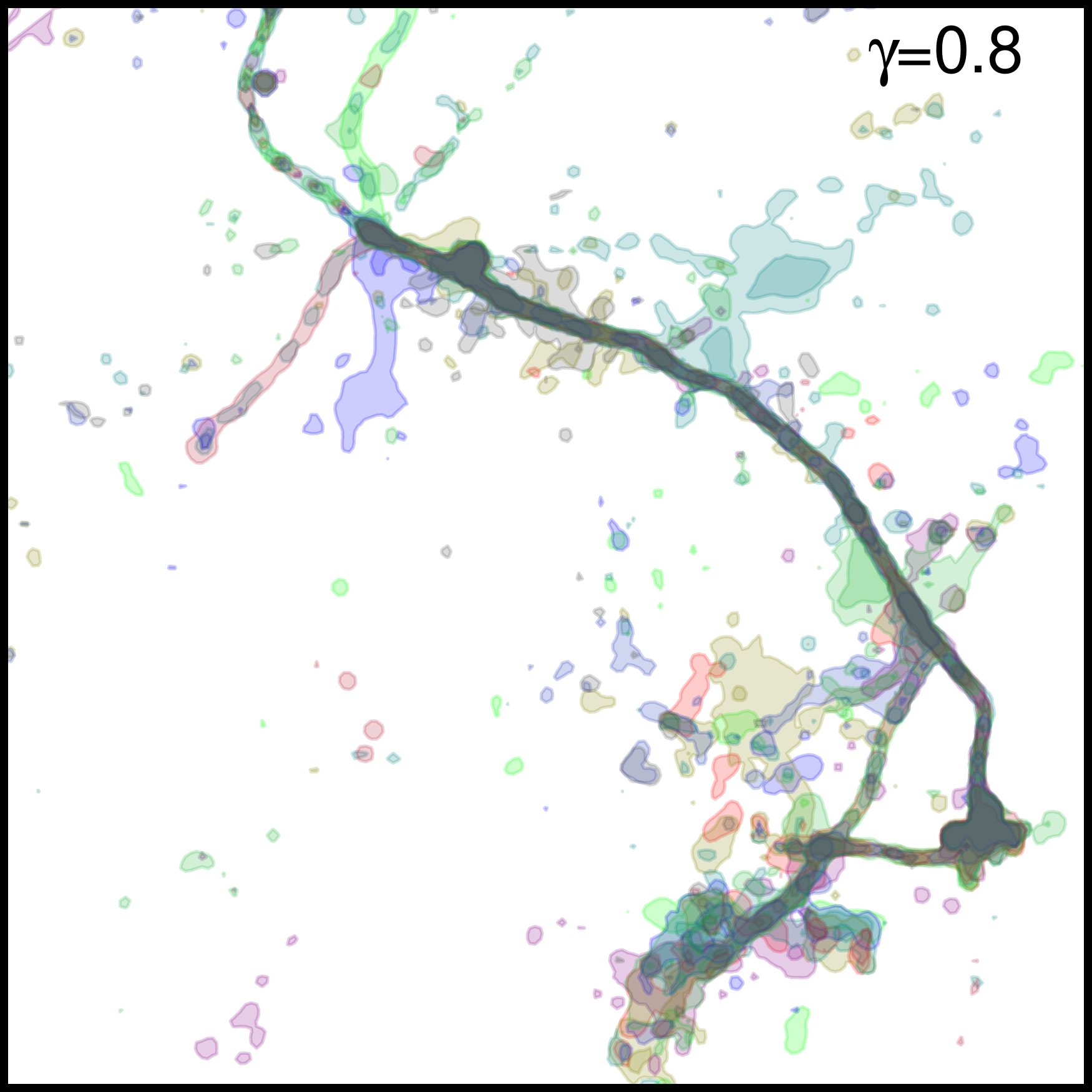}
\includegraphics[width=2.4in]{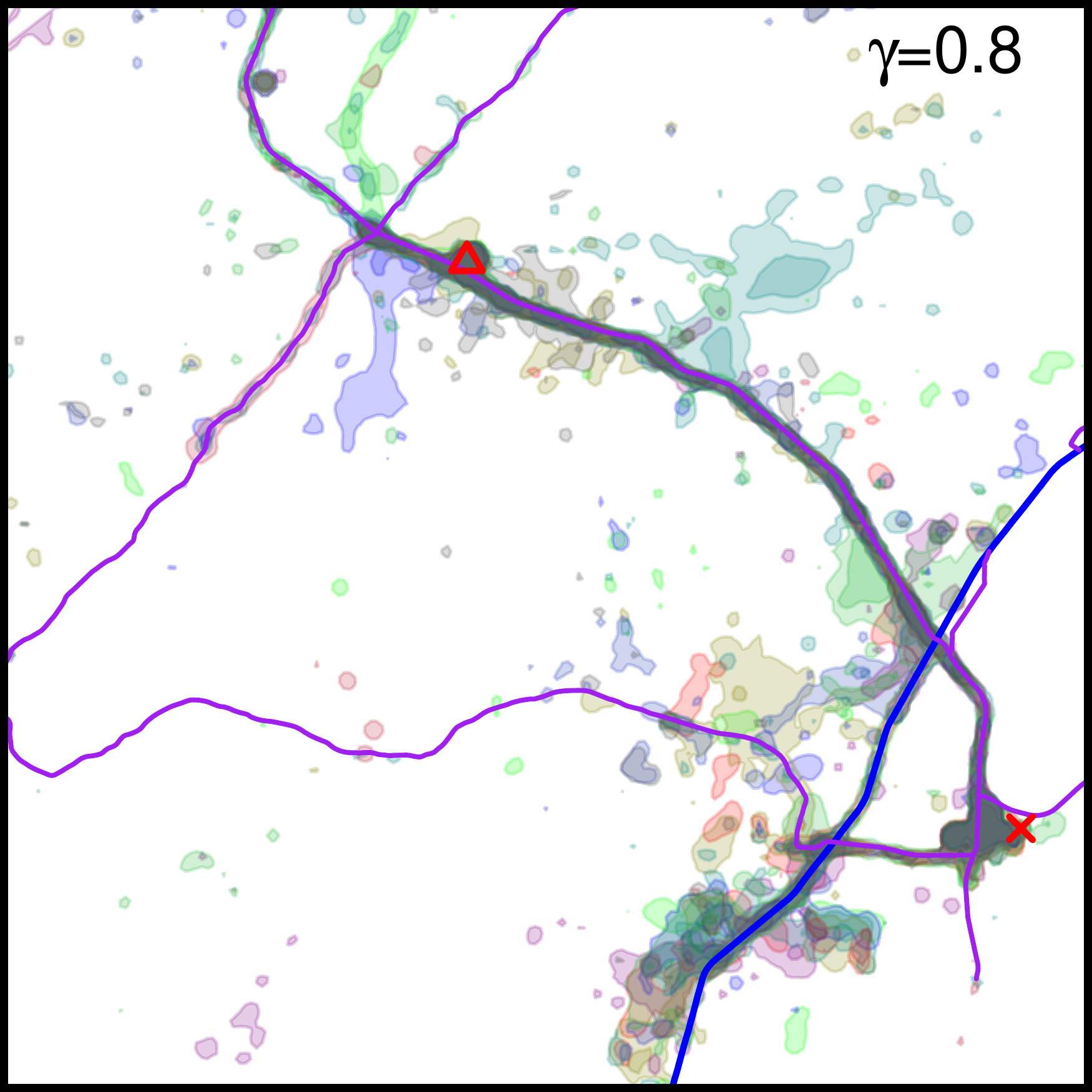}
\caption{
The top activity spaces at levels $\gamma=0.2,0.5,0.8$ from density ranking of the 10 study participants. 
Colors indicate different individuals. The bottom right panel also shows the primary and secondary roads. There are two key locations shared by all 10 individuals: the workplace (red triangle) and the location of the center of the township (red cross).}
\label{fig::17h200lv}
\end{figure}

\subsection{Mass-volume curve}
In Figure~\ref{fig::G15}, we give the mass-volume curves of the 10 study participants. We plot the function $\log V(\gamma)$ instead of $V(\gamma)$ since the size of activity space evolves rapidly when $\gamma$ changes. The gray curve which corresponds to individual 9 dominates the others in the range of $\gamma\in[0.1,0.7]$, while the purple curve which corresponds to individual 6 takes over when $\gamma>0.7$. This means that individual 9 has the highest degree of mobility when we consider the activity space of top 10\---70\% activities. Individual 6 has the highest degree of mobility in terms of the activity space of top 70\% or higher activities. The reason why these two curves dominate the others can be seen in Figure~\ref{fig::17h200}.  The regions $\hat{A}_\gamma$ for $\gamma\in[0.1,0.7]$ correspond to where the density ranking is between 0.3\---0.9 ($1-\gamma$), which is the region with a darker color.  The contours of individual 9 have a wider region with darker color compared to others.  When we consider regions with $\gamma>0.7$, we are looking at regions with a lighter color. In this case, we see that the density ranking of individual 6 spans a larger area compared to others.  The mass-volume curves flatten out when the log size of the area is roughly below $-2$.  This is due to the resolution of the raster grid of cells used to compute the size of the level sets. The corresponding calculations cannot be performed if the size of the level sets falls below the resolution of the grid.

\begin{figure}[!ht]
\includegraphics[width=2.4in]{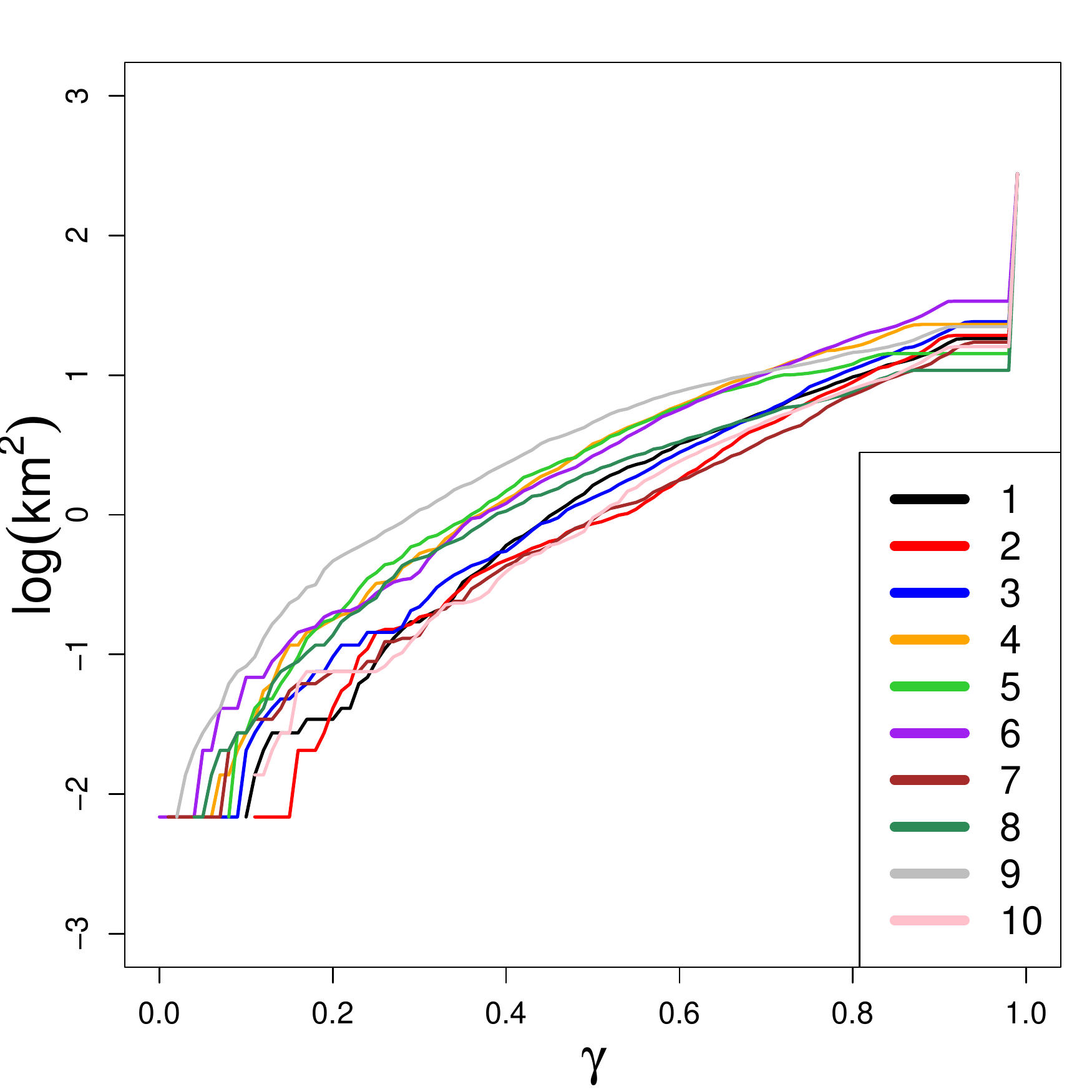}
\caption{
The mass-volume curve of the 10 individuals in the GPS pilot study measured on the log scale (log volume).}
\label{fig::G15}
\end{figure}

\subsection{Betti number curve}
Figure~\ref{fig::T15} displays the Betti number curve of every individual. There are three curves that dominate the others for different ranges of $\gamma$. When $\gamma<0.5$,  the gray curve (individual 9) dominates the others. When $0.5<\gamma<0.7$, the orange curve (individual 4) dominates.  When $0.7<\gamma$, the purple curve (individual 6) is the highest. This means that when we consider activity space of top $50\%$ activity (or an even higher level of activity),  the activity space of individual 9 has the largest number of connected components.  This can actually be seen in Figure~\ref{fig::17h200}: the darker regions in density ranking of individual 9 have more distinct connected components. The contours of density ranking of individuals 4 and 6 have many small bumps, resulting in a large number of connected components. The black and green curves associated with individuals 1 and 5 are smaller than 50. This is the result of their density ranking contours (Figure~\ref{fig::17h200}) being very concentrated.  Unlike the density ranking of individuals 4, 6 and 9 which have many little bumps, the contours of individuals 1 and 5 do not have a spurious distribution which keeps their Betti number curves at lower values. 

\begin{figure}[!ht]
\includegraphics[width=2.4in]{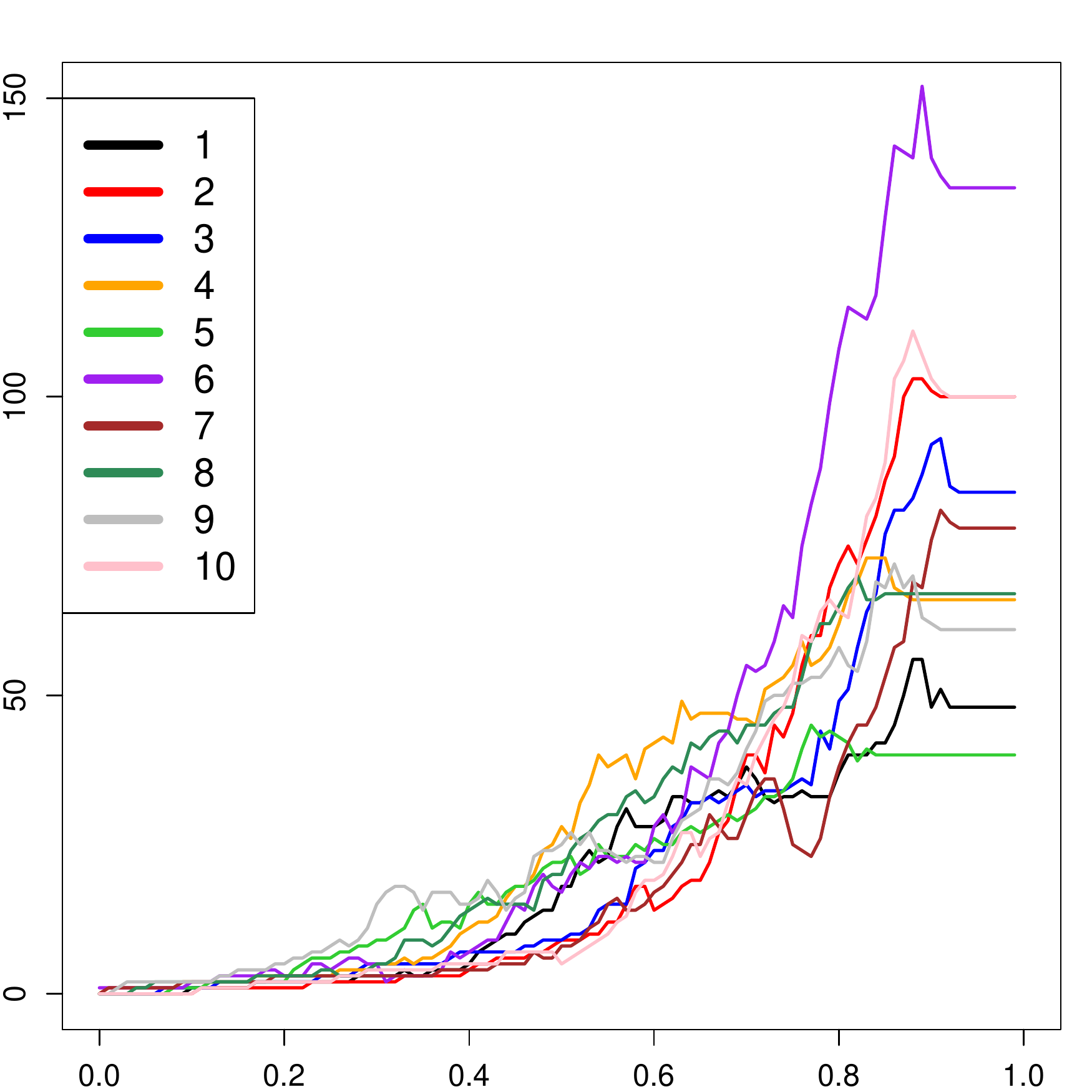}
\caption{
The Betti number curve of the 10 individuals in the GPS pilot study.
}
\label{fig::T15}
\end{figure}

Based on Figure~\ref{fig::T15}, we say that individuals 4, 6 and 9 have a higher degree of mobility, while individuals 1 and 5 have lower mobility in terms of the number of connected components of their activity spaces. Remark that the higher degree of mobility of individual 4 becomes apparent based on the Betti number curve, but is not evident based on the mass-volume curve.

\subsection{Persistence curve}

Figure~\ref{fig::P15} presents the persistence curves of the 10 study participants. The left panel displays the persistence curve in the full range, and the right panel is the zoom-in version with the range of y-axis restricted to the interval $[0,40]$. In the left panel, we see that the purple curve (individual 6) dominates the others when the range of persistence is within $[0,0.2]$.  This range corresponds to many small bumps in its distribution of density ranking \--- see Figure~\ref{fig::17h200}. These small bumps create several connected components with a short life span: they all merge with other connected components quickly, so they have small persistence. These connected components with short life spans contribute to the larger values of the persistence curve. The fact that individual 6 has many small and spurious bumps in their density ranking implies that this person repeatedly visits a larger number of locations. It is possible that this individual has a job that involves driving on a daily basis.  

\begin{figure}[!ht]
\includegraphics[width=2.4in]{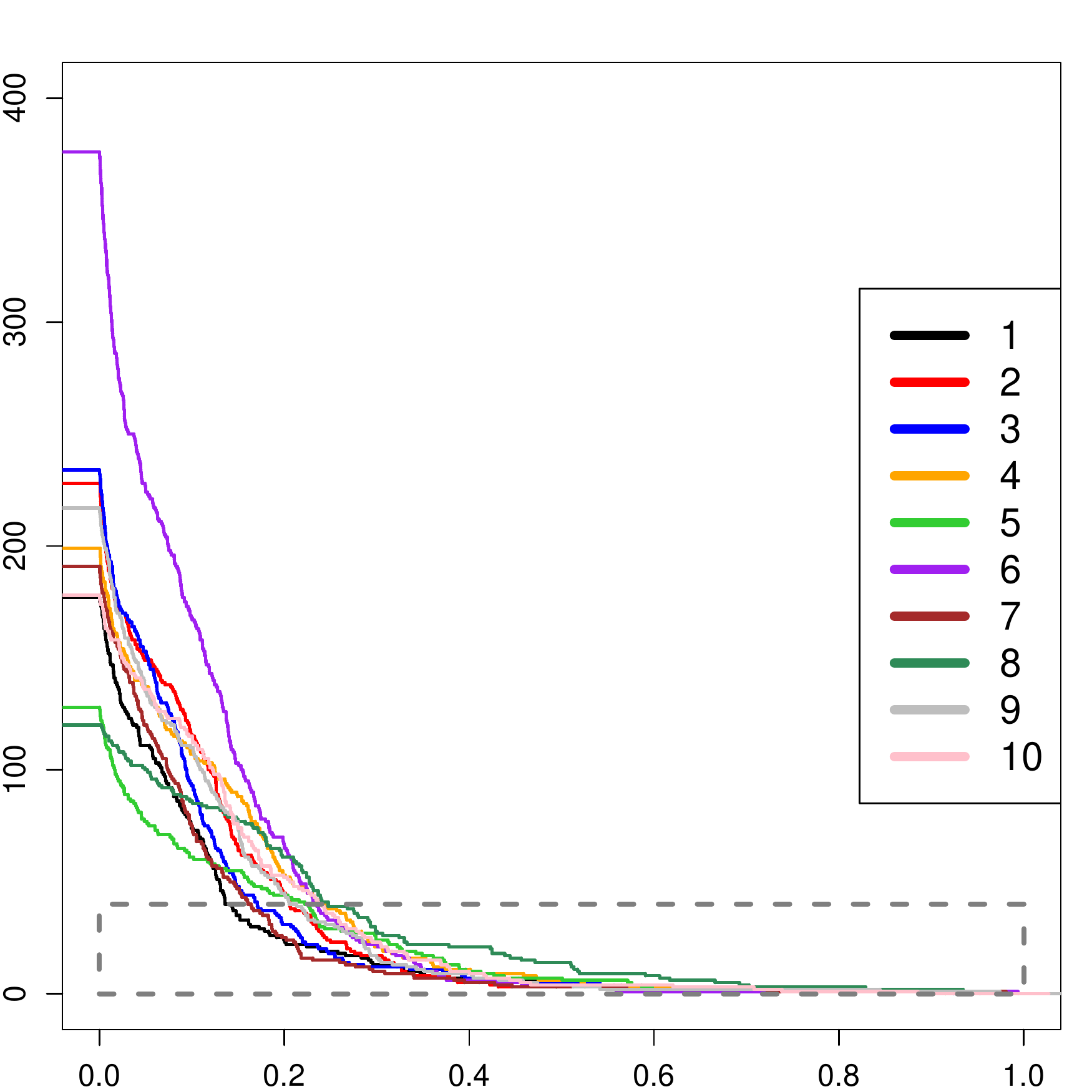}
\includegraphics[width=2.4in]{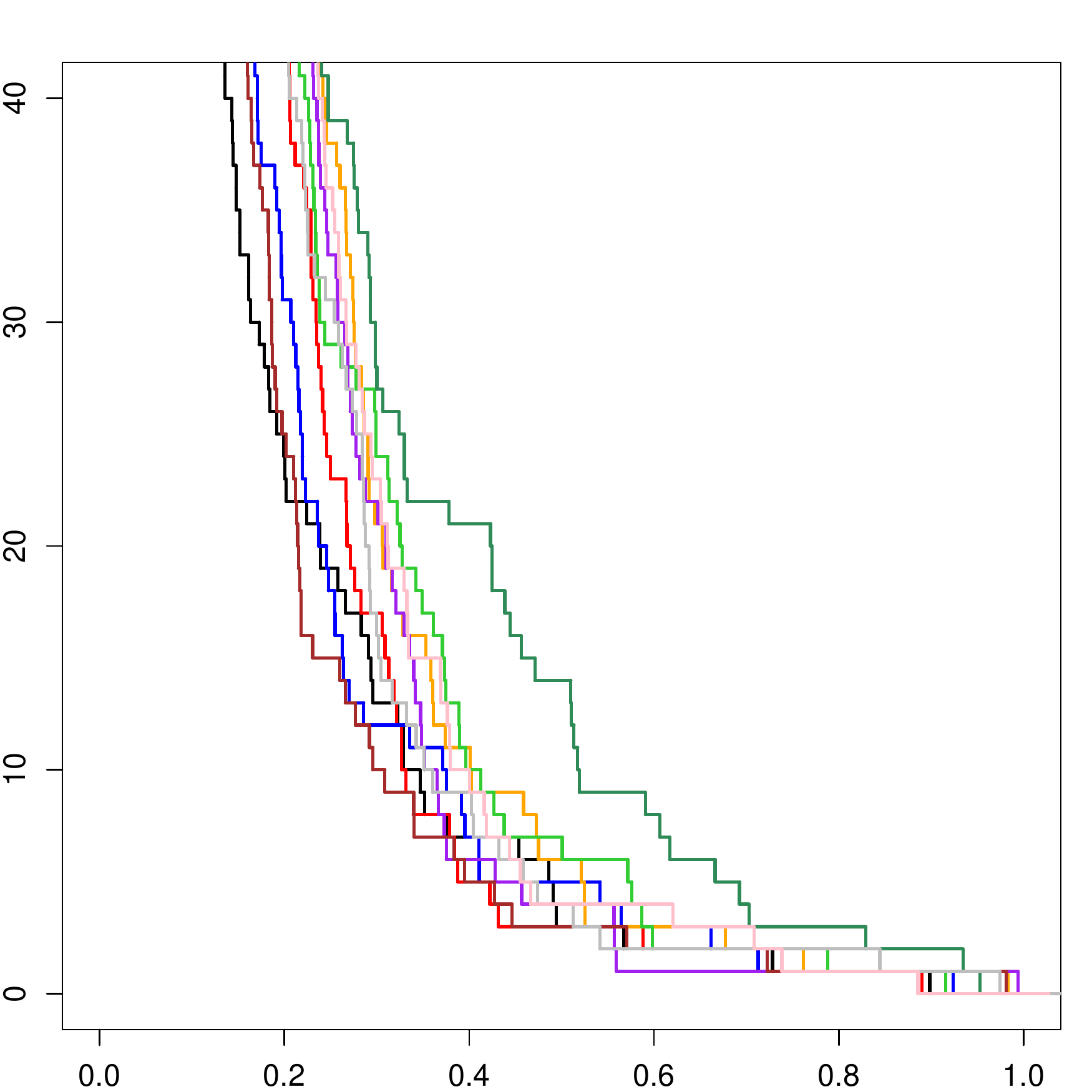}
\caption{
The persistence curve of the 10 individuals in the GPS pilot study. The left panel shows the full range of the persistence curve, and the right panel is the zoom-in version of the gray dashed box in the left panel.}
\label{fig::P15}
\end{figure}

In the right panel of Figure~\ref{fig::P15}, the dark green curve that corresponds to individual 8 stands out.  This means that individual 8 has more persistent connected components when we threshold on the persistence with a level above $0.4$. This also implies that the activities of individual 8 have several modes. From the density ranking distribution of individual 8 (Figure~\ref{fig::17h200}), we see that this individual has several distinct connected components isolated from each other, confirming that this individual's activities have several modes. This is not evident from the mass-volume curve and the Betti number curve. Therefore the persistence curve reveals key information about human mobility which complements the information provided by the other two types of curves we discussed.

\section{Discussion}	\label{sec::discussion}

In this paper we described the key elements of human activity spaces (anchor locations, roads and areas around anchor locations), and proposed a mixture model for representing these elements. We discussed density ranking as an alternative to KDE, presented three types of summary curves, and demonstrated their relevance for determining the geometry, size and structure of human activity spaces. We remark that these summary curves can also be calculated based on the KDE. However, using kernel density estimation instead of density ranking is not advisable since, as we proved in Section \ref{sec::model}, the KDE's expectation diverges at anchor locations and along road segments. 
Density ranking has a powerful property that guarantees its convergence even when the underlying distribution contains lower dimensional structures \citep{chen2016generalized}. For this reason, it is a more appropriate to employ density ranking as opposed to KDE in the determination and measurement of human activity spaces from GPS data. 

The collection of high resolution movement data of individuals over long periods of time is possible thanks to today's technological advances. Smartphones are an especially versatile device that an evergrowing proportion of people from most countries carry around every day. At the present time, GPS datasets collected from smartphones are recorded as part of federally funded studies from many research fields. This collection effort will without doubt continue to expand in the coming years, and will provide detailed information about where people spend their time. The methods we presented in this paper could constitute a key component of these studies that will help translate raw GPS locations into meaningful, easily interpretable information about individuals' daily selective mobility. We demonstrated that density ranking and summary curves have substantial advantages over existent methods for activity space determination since they are not constrained to a fixed geometrical shape, allow the determination of anchor locations and roads used for travel, are less influenced by outlier locations, and are not dependent on the availability of quality road network data.

Human activity spaces are fundamental for health research \citep{perchoux-et-2013}, and can be interpreted as indicators of social activity, self-confidence and knowledge about the physical environment. They capture the dynamics of the geographic context \citep{RN149} which is critical in assessing individuals' exposure to social and environmental risk factors over multiple neighborhoods that are visited during activities of daily living. In particular, they are one of the foundation constructs of contextual expology \citep{kwan-2009,chaix-et-2012}. This is a subdiscipline that focuses on modeling the individuals' spatiotemporal patterns of exposure, and on the derivation of related multiplace environmental exposure variables. The premise is that even individuals from the same residential community could spend different amounts of time away from their home, and travel to locations with different characteristics. This leads to various levels of exposure to spatially-varying risk factors. Contextual expology creates customized exposure measures based on the shape, spatial spread, and configuration of the activity space of each person by taking into account their spatial polygamy \citep{matthews-2008,matthews-2011}, i.e. the amount of time spent at, around or traveling between their anchor locations. Density ranking and summary curves could be used in developing much needed exposure measures to contextual or environmental influences that take into account the spatiotemporal patterns of human mobility \citep{RN147}.

An open research question relates to linking sociodemographic characteristics of individuals with their activity spaces, and studying the interactions that might exist between the characteristics of places and the characteristics of individuals that visit these places \citep{schonfelder-axhausen-2003}. Research on activity spaces could lead to more effective individual-tailored interventions that take into consideration multiple geographic contexts. Such interventions could provide customized information to individuals about sources of healthy food, outdoor places to walk or exercise, or local social events based on their own spatial mobility patterns.

\section{Acknowledgments}

Y.C. received partial support from the National Science Foundation Grant DMS-1810960
and National Institutes of Health Grant U01-AG016976. 
A.D. received partial support from the National Science Foundation Grant DMS/MPS-1737746. The funders had no role in the study design, data collection and analysis, decision to publish, or preparation of the manuscript.

\bibliographystyle{abbrvnat}
\bibliography{GPS-submission}

\begin{thebibliography}{80}
\providecommand{\natexlab}[1]{#1}
\providecommand{\url}[1]{\texttt{#1}}
\expandafter\ifx\csname urlstyle\endcsname\relax
  \providecommand{\doi}[1]{doi: #1}\else
  \providecommand{\doi}{doi: \begingroup \urlstyle{rm}\Url}\fi

\bibitem[Apostolopoulos and Sonmez(2007)]{RN156}
Y.~Apostolopoulos and S.~Sonmez.
\newblock \emph{Population Mobility and Infectious Disease}.
\newblock Springer US, 2007.

\bibitem[Basta et~al.(2010)Basta, Richmond, and Wiebe]{basta-et-2010}
L.~A. Basta, T.~S. Richmond, and D.~J. Wiebe.
\newblock Neighborhoods, daily activities, and measuring health risks
  experienced in urban environments.
\newblock \emph{Social Science $\&$ Medicine}, 71:\penalty0 1943--1950, 2010.

\bibitem[Berrigan et~al.(2015)Berrigan, Hipp, Hurvitz, James, Jankowska, Kerr,
  Laden, Leonard, McKinnon, Powell-Wiley, Tarlov, Zenk, and {The TREC Spatial
  and Contextual, Measures and Modeling Work Group}]{RN145}
D.~Berrigan, J.~A. Hipp, P.~M. Hurvitz, P.~James, M.~M. Jankowska, J.~Kerr,
  F.~Laden, T.~Leonard, R.~A. McKinnon, T.~M. Powell-Wiley, E.~Tarlov, S.~N.
  Zenk, and {The TREC Spatial and Contextual, Measures and Modeling Work
  Group}.
\newblock Geospatial and contextual approaches to energy balance and health.
\newblock \emph{Annals of {G}{I}{S}}, 21:\penalty0 157--168, 2015.

\bibitem[Berry et~al.(2018)Berry, Chen, Cisewski-Kehe, and
  Fasy]{berry2018functional}
E.~Berry, Y.-C. Chen, J.~Cisewski-Kehe, and B.~T. Fasy.
\newblock Functional summaries of persistence diagrams.
\newblock \emph{arXiv preprint arXiv:1804.01618}, 2018.

\bibitem[Biscio and M{\o}ller(2016)]{biscio2016accumulated}
C.~Biscio and J.~M{\o}ller.
\newblock The accumulated persistence function, a new useful functional summary
  statistic for topological data analysis, with a view to brain artery trees
  and spatial point process applications.
\newblock \emph{arXiv preprint arXiv:1611.00630}, 2016.

\bibitem[Buliung(2001)]{buliung-2001}
R.~N. Buliung.
\newblock Spatiotemporal patterns of employment and non-work activities in
  {Portland}, {O}regon.
\newblock In \emph{{E}{S}{R}{I} {I}nternational User Conference, San Diego,
  California}, 2001.

\bibitem[Buliung and Kanaroglou(2006)]{buliung-kanaroglou-2006}
R.~N. Buliung and P.~S. Kanaroglou.
\newblock Urban form and household activity-travel behavior.
\newblock \emph{Growth and Change}, 37:\penalty0 172--199, 2006.

\bibitem[Cadre et~al.(2013)Cadre, Pelletier, and Pudlo]{cadre2013estimation}
B.~Cadre, B.~Pelletier, and P.~Pudlo.
\newblock Estimation of density level sets with a given probability content.
\newblock \emph{Journal of Nonparametric Statistics}, 25\penalty0 (1):\penalty0
  261--272, 2013.

\bibitem[Carlsson(2009)]{carlsson2009topology}
G.~Carlsson.
\newblock Topology and data.
\newblock \emph{Bulletin of the American Mathematical Society}, 46:\penalty0
  255--308, 2009.

\bibitem[Chaix et~al.(2012)Chaix, Kestens, Perchoux, Karusisi, Merlo, and
  Labadi]{chaix-et-2012}
B.~Chaix, Y.~Kestens, C.~Perchoux, N.~Karusisi, J.~Merlo, and K.~Labadi.
\newblock An interactive mapping tool to assess individual mobility patterns in
  neighborhood studies.
\newblock \emph{American Journal of Preventive Medicine}, 43:\penalty0
  440--450, 2012.

\bibitem[Chazal and Michel(2017)]{chazal2017introduction}
F.~Chazal and B.~Michel.
\newblock An introduction to topological data analysis: fundamental and
  practical aspects for data scientists.
\newblock \emph{arXiv preprint arXiv:1710.04019}, 2017.

\bibitem[Chen et~al.(2016)Chen, Ma, Susilo, Liu, and Wang]{chen-et-2016}
C.~Chen, J.~Ma, Y.~Susilo, Y.~Liu, and M.~Wang.
\newblock The promises of big data and small data for travel behavior (aka
  human mobility) analysis.
\newblock \emph{Transportation Research Part C: Emerging Technologies},
  68:\penalty0 285--299, 2016.

\bibitem[Chen(2018)]{chen2016generalized}
Y.-C. Chen.
\newblock Generalized cluster trees and singular measures.
\newblock \emph{Annals of Statistics}, 2018.
\newblock Accepted for publication.

\bibitem[Chen et~al.(2015)Chen, Genovese, Wasserman,
  et~al.]{chen2015asymptotic}
Y.-C. Chen, C.~R. Genovese, L.~Wasserman, et~al.
\newblock Asymptotic theory for density ridges.
\newblock \emph{The Annals of Statistics}, 43\penalty0 (5):\penalty0
  1896--1928, 2015.

\bibitem[Christian(2012)]{christian-2012}
W.~J. Christian.
\newblock Using geospatial technologies to explore activity-based retail food
  environments.
\newblock \emph{Spatial and Spatio-temporal Epidemiology}, 3:\penalty0
  287--295, 2012.

\bibitem[Cl{\'e}men{\c{c}}on and Jakubowicz(2013)]{clemenccon2013scoring}
S.~Cl{\'e}men{\c{c}}on and J.~Jakubowicz.
\newblock Scoring anomalies: a m-estimation formulation.
\newblock In \emph{Artificial Intelligence and Statistics}, pages 659--667,
  2013.

\bibitem[Cl{\'e}men{\c{c}}on and Thomas(2017)]{clemenccon2017mass}
S.~Cl{\'e}men{\c{c}}on and A.~Thomas.
\newblock Mass volume curves and anomaly ranking.
\newblock \emph{arXiv preprint arXiv:1705.01305}, 2017.

\bibitem[Cummins et~al.(2007)Cummins, Curtis, Diez-Roux, and
  Macintyre]{cummins-et-2007}
S.~Cummins, S.~Curtis, A.~V. Diez-Roux, and S.~Macintyre.
\newblock Understanding and representing 'place' in health research: a
  relational approach.
\newblock \emph{Social Science $\&$ Medicine}, 65:\penalty0 1825--1838, 2007.

\bibitem[Dobra et~al.(2015)Dobra, Williams, and Eagle]{dobra-et-2015}
A.~Dobra, N.~E. Williams, and N.~Eagle.
\newblock Spatiotemporal detection of unusual human population behavior using
  mobile phone data.
\newblock \emph{PLOS ONE}, 10:\penalty0 1--20, 2015.

\bibitem[Dobra et~al.(2017)Dobra, B\"{a}rnighausen, Vandormael, and
  Tanser]{dobra-tanser-aids}
A.~Dobra, T.~B\"{a}rnighausen, A.~Vandormael, and F.~Tanser.
\newblock Space-time migration patterns and risk of {H}{I}{V} acquisition in
  rural {S}outh {A}frica.
\newblock \emph{AIDS}, 31:\penalty0 137--145, 2017.

\bibitem[Doss and Weng(2018)]{doss2018bandwidth}
C.~R. Doss and G.~Weng.
\newblock Bandwidth selection for kernel density estimators of multivariate
  level sets and highest density regions.
\newblock \emph{arXiv preprint arXiv:1806.00731}, 2018.

\bibitem[Edelsbrunner and Harer(2008)]{edelsbrunner2008persistent}
H.~Edelsbrunner and J.~Harer.
\newblock Persistent homology-a survey.
\newblock \emph{Contemporary mathematics}, 453:\penalty0 257--282, 2008.

\bibitem[Edelsbrunner and Harer(2010)]{edelsbrunner-harer-2009}
H.~Edelsbrunner and J.~Harer.
\newblock \emph{Computational topology: an introduction}.
\newblock American Mathematical Society, 2010.

\bibitem[Einmahl and Mason(2005)]{Einmahl2005}
U.~Einmahl and D.~M. Mason.
\newblock Uniform in bandwidth consistency of kernel-type function estimators.
\newblock \emph{The Annals of Statistics}, 33\penalty0 (3):\penalty0
  1380--1403, 2005.

\bibitem[Entwisle(2007)]{entwisle-2007}
B.~Entwisle.
\newblock Putting people into place.
\newblock \emph{Demography}, 44:\penalty0 687--703, 2007.

\bibitem[Fan and Khattak(2008)]{fan-et-2008}
Y.~Fan and A.~Khattak.
\newblock Urban form, individual spatial footprints, and travel: examination of
  space-use behavior.
\newblock \emph{Transportation Research Record: Journal of the Transportation
  Research Board}, 2082:\penalty0 98--106, 2008.

\bibitem[Garcia et~al.(2003)Garcia, Kutalik, Cho, and
  Wolkenhauer]{garcia2003level}
J.~N. Garcia, Z.~Kutalik, K.-H. Cho, and O.~Wolkenhauer.
\newblock Level sets and minimum volume sets of probability density functions.
\newblock \emph{International journal of approximate reasoning}, 34\penalty0
  (1):\penalty0 25--47, 2003.

\bibitem[Genovese et~al.(2014)Genovese, Perone-Pacifico, Verdinelli, and
  Wasserman]{genovese2014nonparametric}
C.~R. Genovese, M.~Perone-Pacifico, I.~Verdinelli, and L.~Wasserman.
\newblock Nonparametric ridge estimation.
\newblock \emph{The Annals of Statistics}, 42\penalty0 (4):\penalty0
  1511--1545, 2014.

\bibitem[Ghrist(2008)]{ghrist2008barcodes}
R.~Ghrist.
\newblock Barcodes: the persistent topology of data.
\newblock \emph{Bulletin of the American Mathematical Society}, 45\penalty0
  (1):\penalty0 61--75, 2008.

\bibitem[Ghrist(2014)]{ghrist-2014}
R.~Ghrist.
\newblock \emph{Elementary applied topology}.
\newblock CreateSpace Independent Publishing Platform, 2014.

\bibitem[Gin{\'e} and Guillou(2002)]{Gine2002}
E.~Gin{\'e} and A.~Guillou.
\newblock Rates of strong uniform consistency for multivariate kernel density
  estimators.
\newblock In \emph{Annales de l'Institut Henri Poincare (B) Probability and
  Statistics}, volume~38, pages 907--921. Elsevier, 2002.

\bibitem[Golledge(1999)]{golledge-1999}
R.~G. Golledge.
\newblock Human wayfinding and cognitive maps.
\newblock In R.~G. Golledge, editor, \emph{Wayfinding Behavior}, pages 5--45.
  The Johns Hopkins University Press, Baltimore, 1999.

\bibitem[Golledge and Stimson(1997)]{golledge-stimson-1997}
R.~G. Golledge and R.~J. Stimson.
\newblock \emph{Spatial Behavior}.
\newblock The Guildford Press, New York, 1997.

\bibitem[H\"{a}gerstrand(1963)]{hagerstrand-1963}
T.~H\"{a}gerstrand.
\newblock Geographic measurements of migration.
\newblock In J.~Sutter, editor, \emph{Human Displacements: Measurement
  Methodological Aspects}. Monaco, 1963.

\bibitem[H\"{a}gerstrand(1970)]{hagerstrand-1970}
T.~H\"{a}gerstrand.
\newblock What about people in regional science?
\newblock \emph{Papers of the Regional Science Association}, 24:\penalty0
  7--21, 1970.

\bibitem[Harding et~al.(2013)Harding, Patterson, and
  Miranda-Moreno]{harding-et-2013}
C.~Harding, Z.~Patterson, and L.~F. Miranda-Moreno.
\newblock Activity space geometry and its effect on mode choice.
\newblock In \emph{Transportation Research Board 92nd Annual Meeting,
  Washington DC}, 2013.

\bibitem[Hartigan and Hartigan(1975)]{hartigan1975clustering}
J.~A. Hartigan and J.~Hartigan.
\newblock \emph{Clustering algorithms}, volume 209.
\newblock Wiley New York, 1975.

\bibitem[Hirsch et~al.(2014)Hirsch, Winters, Clarke, and McKay]{hirsch-2014}
J.~A. Hirsch, M.~Winters, P.~Clarke, and H.~McKay.
\newblock Generating {G}{P}{S} activity spaces that shed light upon the
  mobility habits of older adults: a descriptive analysis.
\newblock \emph{International Journal of Health Geographics}, 13:\penalty0 51,
  2014.

\bibitem[Hurvitz et~al.(2014)Hurvitz, Moudon, Kang, Saelens, and
  Duncan]{hurvitz-2014}
P.~M. Hurvitz, A.~V. Moudon, B.~Kang, B.~E. Saelens, and G.~E. Duncan.
\newblock Emerging technologies for assessing physical activity behaviors in
  space and time.
\newblock \emph{Frontiers in Public Health}, 2:\penalty0 2, 2014.

\bibitem[Kaczynski et~al.(2004)Kaczynski, Mischaikow, and
  Mrozek]{kaczynski-et-2004}
T.~Kaczynski, K.~Mischaikow, and M.~Mrozek.
\newblock \emph{Computational Homology}.
\newblock Applied Mathematical Sciences (Book 157). Springer, 2004.

\bibitem[Kestens et~al.(2010)Kestens, Lebel, Daniel, Th\'{e}riault, and
  Pampalon]{kestens-et-2010}
Y.~Kestens, A.~Lebel, M.~Daniel, M.~Th\'{e}riault, and R.~Pampalon.
\newblock Using experienced activity spaces to measure foodscape exposure.
\newblock \emph{Health $\&$ Place}, 16:\penalty0 1094--1103, 2010.

\bibitem[Kim and Ulfarsson(2015)]{kim-et-2015}
S.~Kim and G.~F. Ulfarsson.
\newblock Activity space of older and working-age adults in the {P}uget {S}ound
  region.
\newblock In \emph{Transportation Research Board 94nd Annual Meeting,
  Washington DC}, 2015.

\bibitem[Kwan(2000)]{kwan-2000}
M.~P. Kwan.
\newblock Interactive geovisualization of activity-travel patterns using
  three-dimensional geographical information systems: a methodological
  exploration with a large data set.
\newblock \emph{Transportation Research}, 8C:\penalty0 185--203, 2000.

\bibitem[Kwan(2009)]{kwan-2009}
M.-P. Kwan.
\newblock From place-based to people-based exposure measures.
\newblock \emph{Social Science $\&$ Medicine}, 69:\penalty0 1311--1313, 2009.

\bibitem[Kwan(2012)]{RN149}
M.-P. Kwan.
\newblock The uncertain geographic context problem.
\newblock \emph{Annals of the Association of American Geographers},
  102:\penalty0 958--968, 2012.

\bibitem[Kwan(2013)]{RN147}
M.-P. Kwan.
\newblock Beyond space (as we knew it): Toward temporally integrated
  geographies of segregation, health, and accessibility.
\newblock \emph{Annals of the Association of American Geographers},
  103:\penalty0 1078--1086, 2013.

\bibitem[Lee et~al.(2016)Lee, Davis, Yoon, and Goulias]{lee-et-2016}
J.~H. Lee, A.~W. Davis, S.~Y. Yoon, and K.~G. Goulias.
\newblock Activity space estimation with longitudinal observations of social
  media data.
\newblock \emph{Transportation}, 43:\penalty0 955--977, 2016.

\bibitem[Lum et~al.(2013)Lum, Singh, Lehman, Ishkanov, Vejdemo-{J}ohansson,
  Alagappan, Carlsson, and Carlsson]{lum-et-2013}
P.~Y. Lum, G.~Singh, A.~Lehman, T.~Ishkanov, M.~Vejdemo-{J}ohansson,
  M.~Alagappan, J.~Carlsson, and G.~Carlsson.
\newblock Extracting insights from the shape of complex data using topology.
\newblock \emph{Scientific Reports}, 3:\penalty0 1236, 2013.

\bibitem[Manaugh and El-{G}eneidy(2012)]{manaugh-et-2012}
K.~Manaugh and A.~M. El-{G}eneidy.
\newblock What makes travel 'local': Defining and understanding local travel
  behavior.
\newblock \emph{The Journal of Transport and Land Use}, 5:\penalty0 12--27,
  2012.

\bibitem[Mason and Polonik(2009)]{mason2009asymptotic}
D.~M. Mason and W.~Polonik.
\newblock Asymptotic normality of plug-in level set estimates.
\newblock \emph{The Annals of Applied Probability}, 19\penalty0 (3):\penalty0
  1108--1142, 2009.

\bibitem[Matthews(2008)]{matthews-2008}
S.~A. Matthews.
\newblock The salience of neighborhood.
\newblock \emph{American Journal of Preventive Medicine}, 34:\penalty0
  257--259, 2008.

\bibitem[Matthews(2011)]{matthews-2011}
S.~A. Matthews.
\newblock Spatial polygamy and the heterogeneity of place: Studying people and
  place via egocentric methods.
\newblock In L.~M. Burton, S.~A. Matthews, M.~C. Leung, S.~P. Kemp, and D.~T.
  Takeuchi, editors, \emph{Communities, Neighborhoods, and Health: Expanding
  the Boundaries of Place}, pages 35--55. Springer New York, New York, NY,
  2011.

\bibitem[Matthews and Yang(2013)]{RN144}
S.~A. Matthews and T.-C. Yang.
\newblock Spatial polygamy and contextual exposures ({S}{P}{A}{C}{E}s):
  Promoting activity space approaches in research on place and health.
\newblock \emph{The American Behavioral Scientist}, 57:\penalty0 1057--1081,
  2013.

\bibitem[Mattila(1999)]{mattila1999geometry}
P.~Mattila.
\newblock \emph{Geometry of sets and measures in Euclidean spaces: fractals and
  rectifiability}, volume~44.
\newblock Cambridge university press, 1999.

\bibitem[Newsome et~al.(1998)Newsome, Walcott, and Smith]{newsome-et-1998}
T.~H. Newsome, W.~A. Walcott, and P.~D. Smith.
\newblock Urban activity spaces: Illustrations and application of a conceptual
  model for integrating the time and space dimensions.
\newblock \emph{Transportation}, 25:\penalty0 357--377, 1998.

\bibitem[Perchoux et~al.(2013{\natexlab{a}})Perchoux, Chaix, Cummins, and
  Kestens]{RN146}
C.~Perchoux, B.~Chaix, S.~Cummins, and Y.~Kestens.
\newblock Conceptualization and measurement of environmental exposure in
  epidemiology: Accounting for activity space related to daily mobility.
\newblock \emph{Health $\&$ Place}, 21:\penalty0 86--93, 2013{\natexlab{a}}.

\bibitem[Perchoux et~al.(2013{\natexlab{b}})Perchoux, Chaix, Cummins, and
  Kestens]{perchoux-et-2013}
C.~Perchoux, B.~Chaix, S.~Cummins, and Y.~Kestens.
\newblock Conceptualization and measurement of environmental exposure in
  epidemiology: Accounting for activity space related to daily mobility.
\newblock \emph{Health $\&$ Place}, 21:\penalty0 86--93, 2013{\natexlab{b}}.

\bibitem[Polonik(1997)]{polonik1997minimum}
W.~Polonik.
\newblock Minimum volume sets and generalized quantile processes.
\newblock \emph{Stochastic processes and their applications}, 69\penalty0
  (1):\penalty0 1--24, 1997.

\bibitem[Preiss(1987)]{preiss1987geometry}
D.~Preiss.
\newblock Geometry of measures in $\mathbb{R}^n$: distribution, rectifiability,
  and densities.
\newblock \emph{Annals of Mathematics}, 125\penalty0 (3):\penalty0 537--643,
  1987.

\bibitem[Qiao(2017)]{qiao2017asymptotics}
W.~Qiao.
\newblock Asymptotics and optimal bandwidth selection for nonparametric
  estimation of density level sets.
\newblock \emph{arXiv preprint arXiv:1707.09697}, 2017.

\bibitem[Rai et~al.(2007)Rai, Balmer, Rieser, Vaze, Sch\"{o}nfelder, and
  Axhausen]{rai-et-2007}
R.~K. Rai, M.~Balmer, M.~Rieser, V.~S. Vaze, S.~Sch\"{o}nfelder, and K.~W.
  Axhausen.
\newblock Capturing human activity spaces: new geometries.
\newblock \emph{Transportation Research Record: Journal of the Transportation
  Research Board}, 2021:\penalty0 70--80, 2007.

\bibitem[Richardson et~al.(2013)Richardson, Volkow, Kwan, Kaplan, Goodchild,
  and Croyle]{RN143}
D.~B. Richardson, N.~D. Volkow, M.-P. Kwan, R.~M. Kaplan, M.~F. Goodchild, and
  R.~T. Croyle.
\newblock Spatial turn in health research.
\newblock \emph{Science}, 339:\penalty0 1390--1392, 2013.

\bibitem[Rigollet and Vert(2009)]{rigollet2009optimal}
P.~Rigollet and R.~Vert.
\newblock Optimal rates for plug-in estimators of density level sets.
\newblock \emph{Bernoulli}, 15\penalty0 (4):\penalty0 1154--1178, 2009.

\bibitem[Rinaldo and Wasserman(2010)]{rinaldo2010generalized}
A.~Rinaldo and L.~Wasserman.
\newblock Generalized density clustering.
\newblock \emph{The Annals of Statistics}, pages 2678--2722, 2010.

\bibitem[Sch\"{o}nfelder and Axhausen(2003)]{schonfelder-axhausen-2003}
S.~Sch\"{o}nfelder and K.~W. Axhausen.
\newblock Activity spaces: measures of social exclusion?
\newblock \emph{Transport Policy}, 10:\penalty0 273--286, 2003.

\bibitem[Sch\"{o}nfelder and Axhausen(2004)]{schonfelder-axhausen-2004}
S.~Sch\"{o}nfelder and K.~W. Axhausen.
\newblock On the variability of human activity spaces.
\newblock In M.~Koll-Schretzenmayr, M.~Keiner, and G.~Nussbaumer, editors,
  \emph{The Real and Virtual Worlds of Spatial Planning}, pages 237--262.
  Springer Verlag, Berlin, Heidelberg, 2004.

\bibitem[Scott and Nowak(2006)]{scott2006learning}
C.~D. Scott and R.~D. Nowak.
\newblock Learning minimum volume sets.
\newblock \emph{Journal of Machine Learning Research}, 7\penalty0
  (Apr):\penalty0 665--704, 2006.

\bibitem[Scott(2015)]{scott2015multivariate}
D.~W. Scott.
\newblock \emph{Multivariate density estimation: theory, practice, and
  visualization}.
\newblock John Wiley \& Sons, 2015.

\bibitem[Sherman et~al.(2005)Sherman, Spencer, Preisser, Gesler, and
  Arcury]{sherman-et-2005}
J.~E. Sherman, J.~Spencer, J.~S. Preisser, W.~M. Gesler, and T.~A. Arcury.
\newblock A suite of methods for representing activity space in a healthcare
  accessibility study.
\newblock \emph{International Journal of Health Geographics}, 4:\penalty0
  24--24, 2005.

\bibitem[Silverman(1986)]{silverman-1986}
B.~W. Silverman.
\newblock \emph{Density Estimation for Statistics and Data Analysis}.
\newblock Monographs on Statistics and Applied Probability. Chapman and Hall,
  New York, 1986.

\bibitem[Wasserman(2006)]{wasserman2006all}
L.~Wasserman.
\newblock \emph{All of nonparametric statistics}.
\newblock Springer Science \& Business Media, 2006.

\bibitem[Wasserman(2016)]{wasserman2016topological}
L.~Wasserman.
\newblock Topological data analysis.
\newblock \emph{arXiv preprint arXiv:1609.08227}, 2016.

\bibitem[Wasserman(2018)]{wasserman-2018}
L.~Wasserman.
\newblock Topological data analysis.
\newblock \emph{Annual Review of Statistics and Its Application}, 5:\penalty0
  501--532, 2018.

\bibitem[Wiehe et~al.(2013)Wiehe, Kwan, Wilson, and Fortenberry]{RN148}
S.~Wiehe, M.-P. Kwan, J.~Wilson, and J.~Fortenberry.
\newblock Adolescent health-risk behavior and community disorder.
\newblock \emph{PLOS ONE}, 8:\penalty0 e77667, 2013.

\bibitem[Wiehe et~al.(2008)Wiehe, Carroll, Liu, Haberkorn, Hoch, Wilson, and
  Fortenberry]{RN153}
S.~E. Wiehe, A.~E. Carroll, G.~C. Liu, K.~L. Haberkorn, S.~C. Hoch, J.~S.
  Wilson, and J.~D. Fortenberry.
\newblock Using gps-enabled cell phones to track the travel patterns of
  adolescents.
\newblock \emph{International Journal of Health Geographics}, 7:\penalty0
  22--22, 2008.

\bibitem[Williams et~al.(2015)Williams, Thomas, Dunbar, Eagle, and
  Dobra]{williams-et-2015}
N.~E. Williams, T.~A. Thomas, M.~Dunbar, N.~Eagle, and A.~Dobra.
\newblock Measures of human mobility using mobile phone records enhanced with
  gis data.
\newblock \emph{PLOS ONE}, 10:\penalty0 1--16, 2015.

\bibitem[Wong and Shaw(2011)]{wong-shaw-2011}
D.~W.~S. Wong and S.-L. Shaw.
\newblock Measuring segregation: an activity space approach.
\newblock \emph{Journal of Geographical Systems}, 13:\penalty0 127--145, 2011.

\bibitem[Worton(1987)]{worton1987}
B.~J. Worton.
\newblock A review of models of home range for animal movement.
\newblock \emph{Ecological Modelling}, 38:\penalty0 277--298, 1987.

\bibitem[Zenk et~al.(2011)Zenk, Schulz, Matthews, Odoms-Young, Wilbur, Wegrzyn,
  Gibbs, Braunschweig, and Stokes]{RN151}
S.~N. Zenk, A.~J. Schulz, S.~A. Matthews, A.~Odoms-Young, J.~Wilbur,
  L.~Wegrzyn, K.~Gibbs, C.~Braunschweig, and C.~Stokes.
\newblock Activity space environment and dietary and physical activity
  behaviors: A pilot study.
\newblock \emph{Health $\&$ Place}, 17:\penalty0 1150--1161, 2011.

\bibitem[Zenk et~al.(2012)Zenk, Schulz, Odoms-Young, Wilbur, Matthews, Gamboa,
  Wegrzyn, Hobson, and Stokes]{RN155}
S.~N. Zenk, A.~J. Schulz, A.~Odoms-Young, J.~Wilbur, S.~A. Matthews, C.~Gamboa,
  L.~R. Wegrzyn, S.~Hobson, and C.~Stokes.
\newblock Feasibility of using global positioning systems ({G}{P}{S}) with
  diverse urban adults: Before and after data on perceived acceptability,
  barriers, and ease of use.
\newblock \emph{Journal of physical activity $\&$ health}, 9:\penalty0
  924--934, 2012.

\end{thebibliography}

\appendix

\section{The effect of varying the smoothing bandwidth} \label{sec:app}

\begin{figure}[!ht]
\center
\includegraphics[width=1.6in]{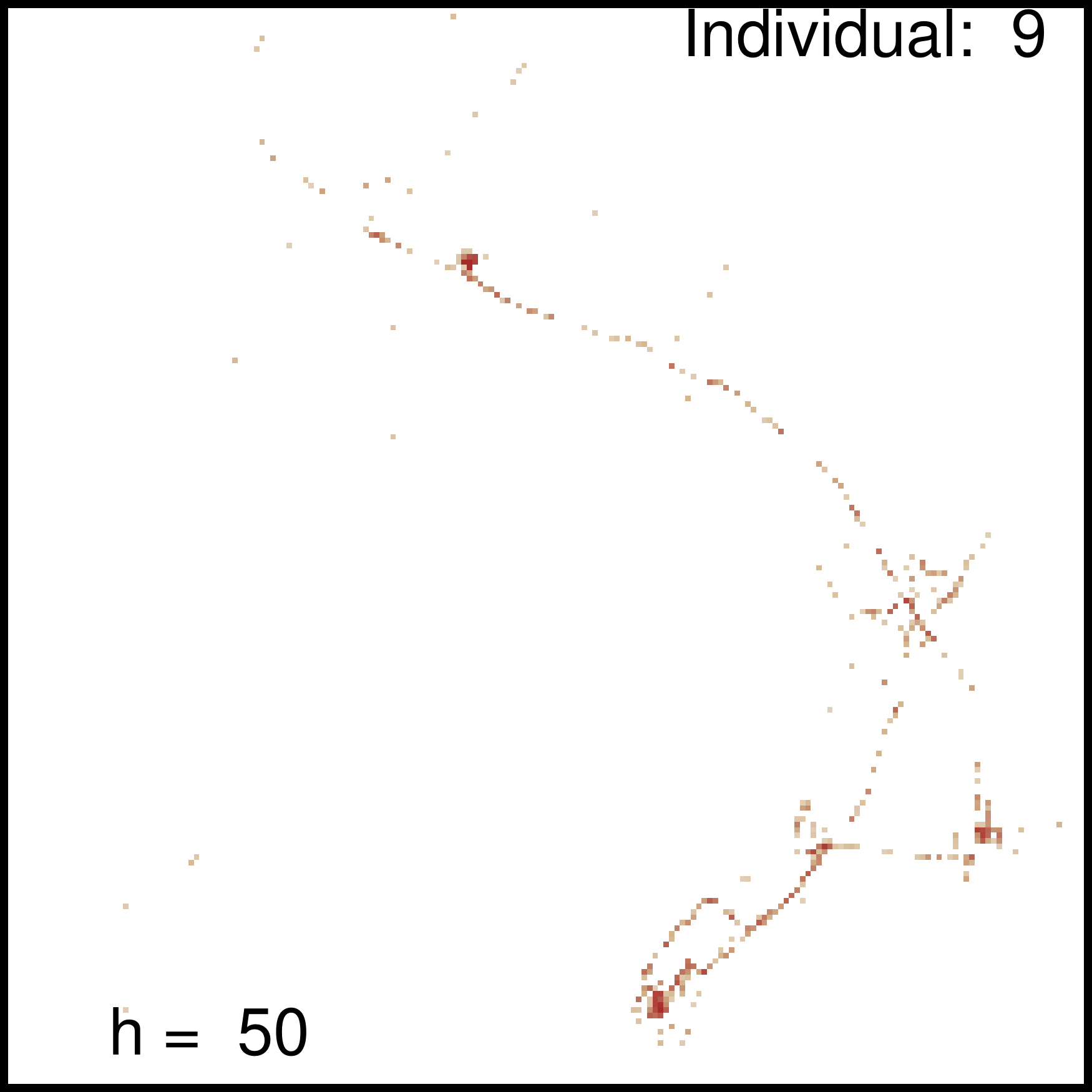}
\includegraphics[width=1.6in]{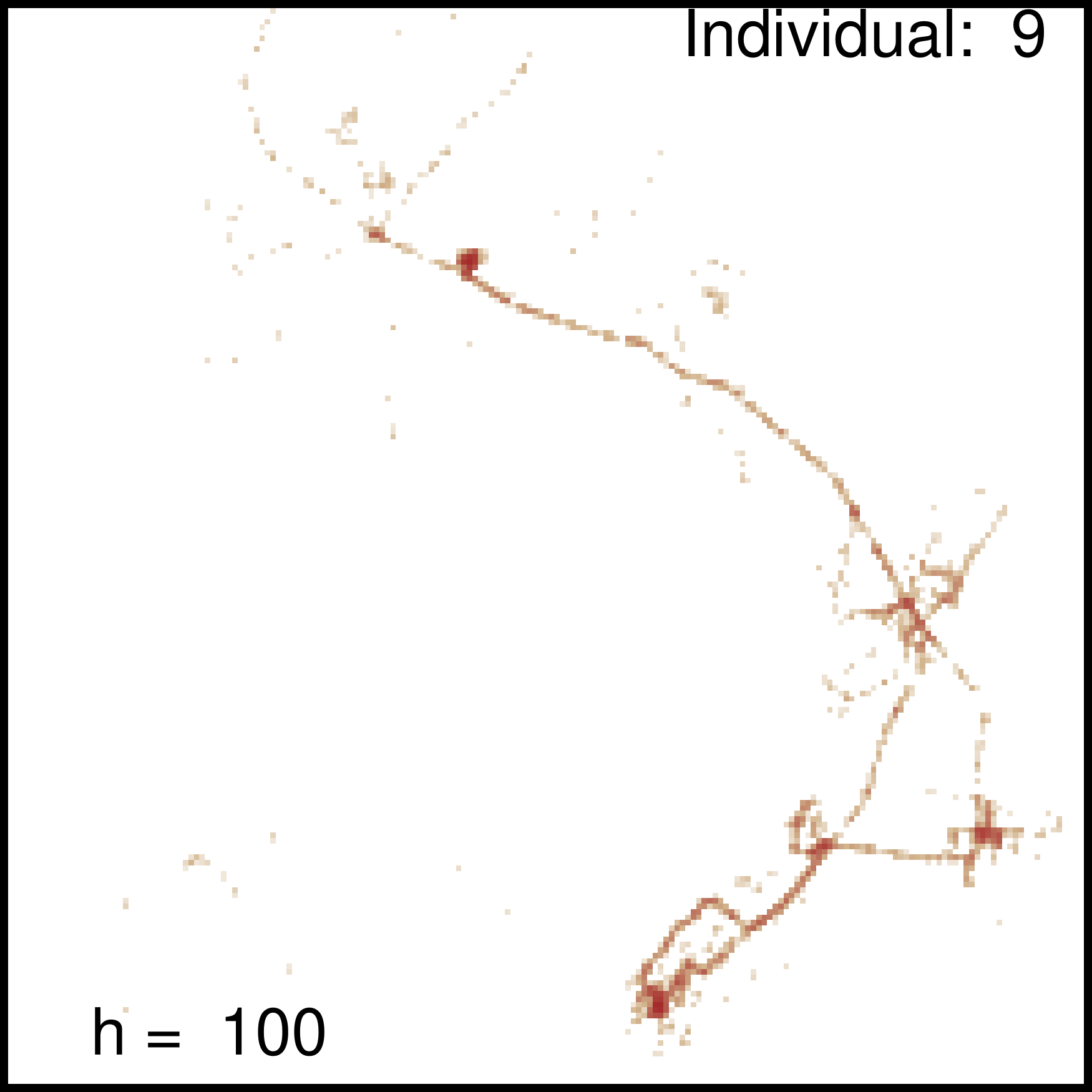}
\includegraphics[width=1.6in]{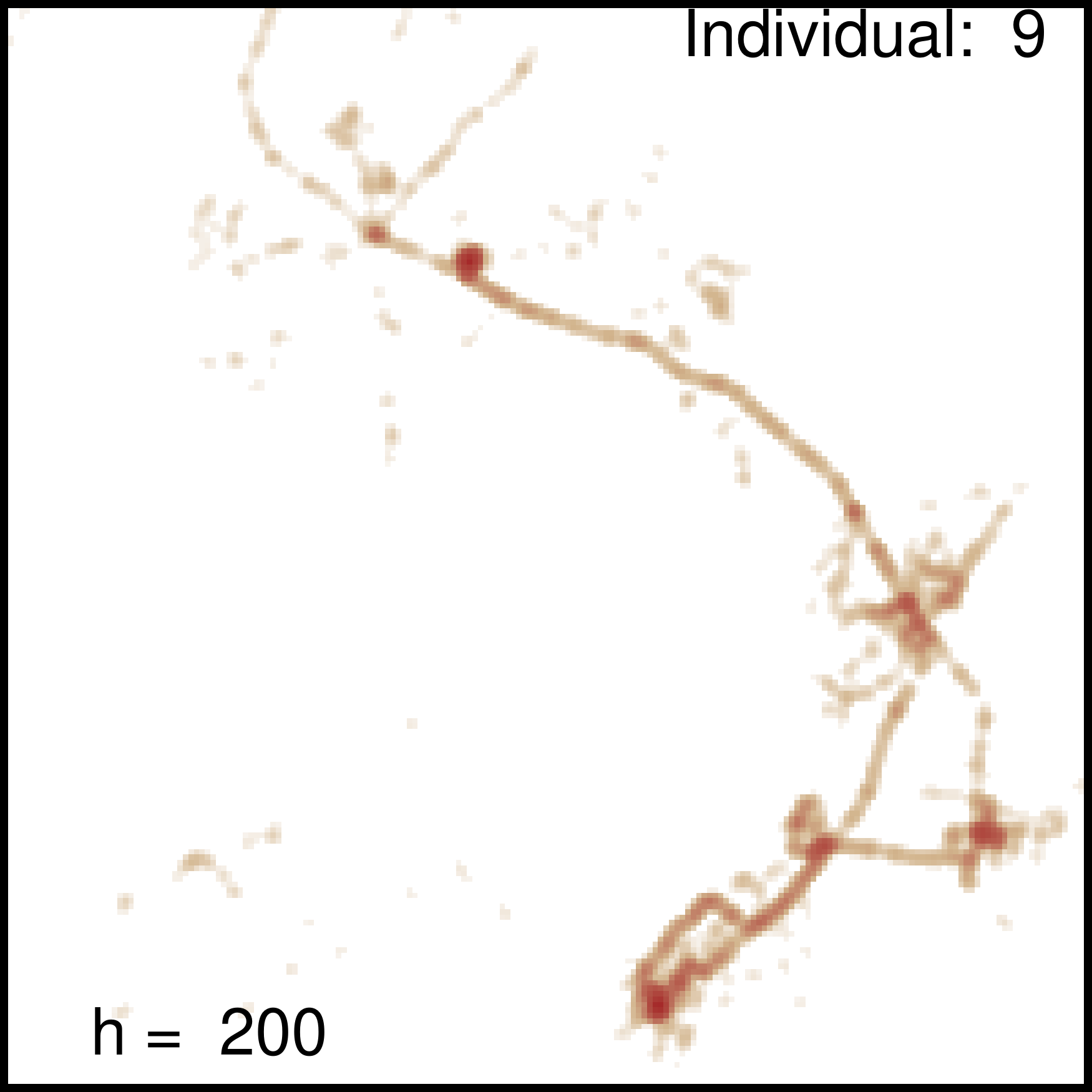}
\includegraphics[width=1.6in]{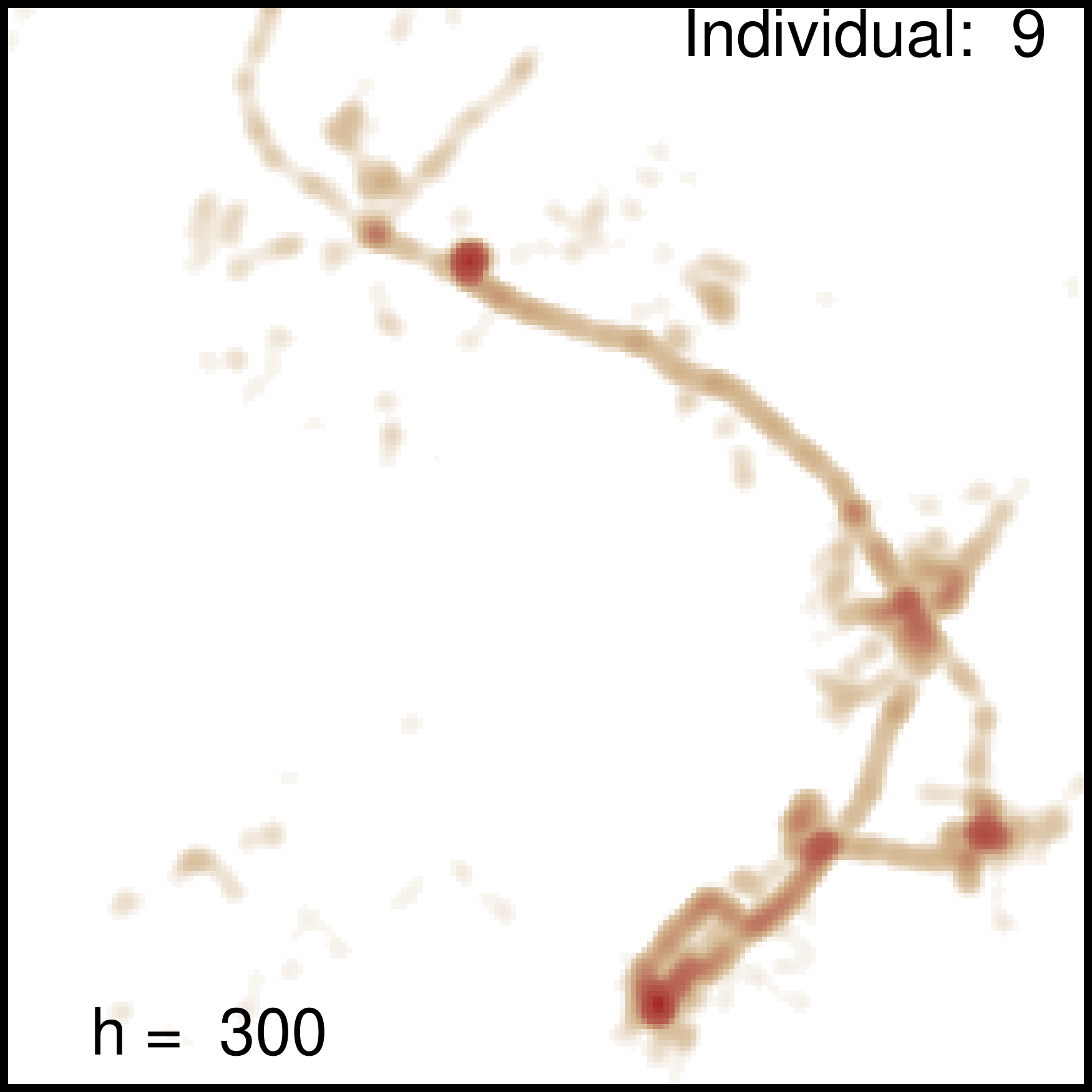}
\includegraphics[width=1.6in]{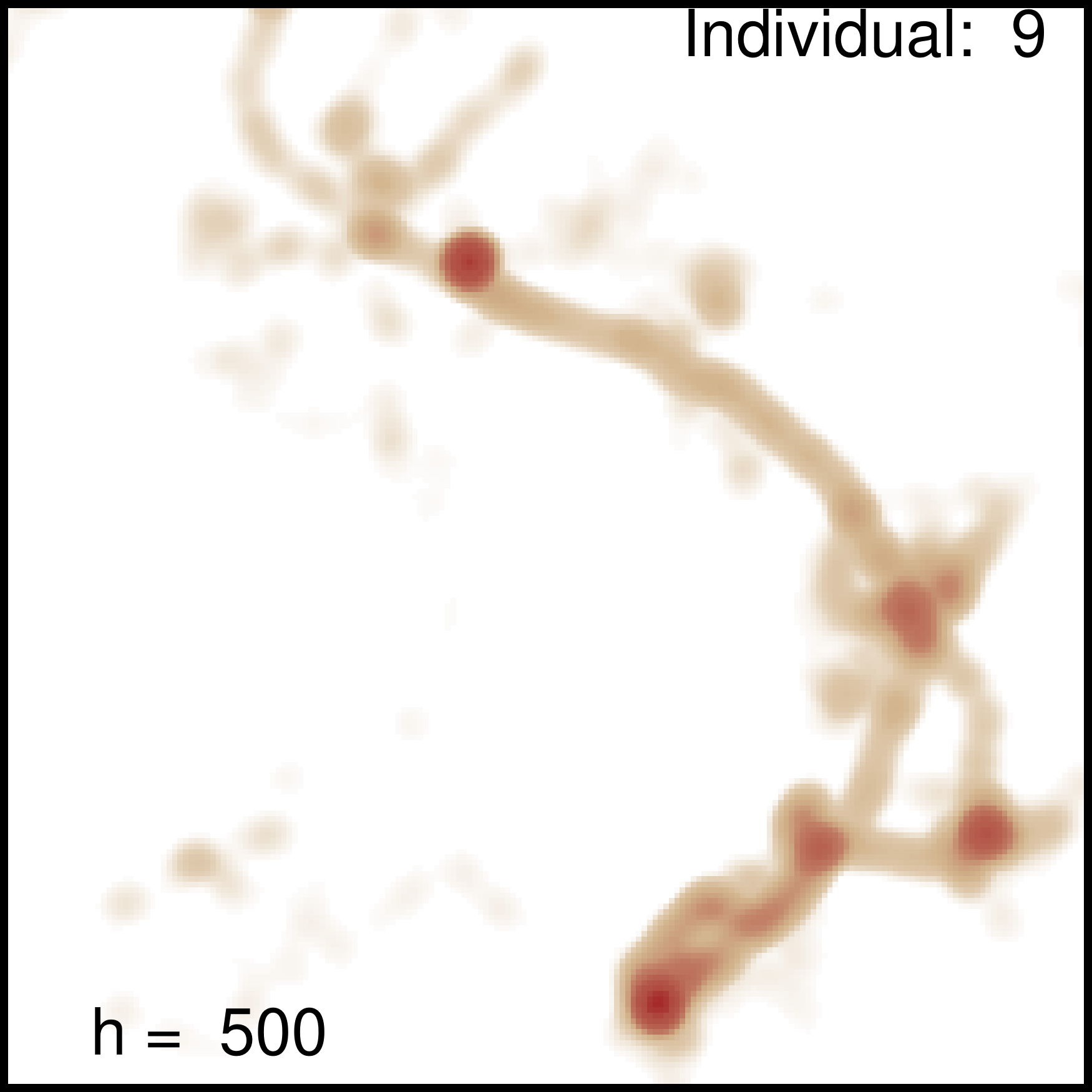}
\includegraphics[width=1.6in]{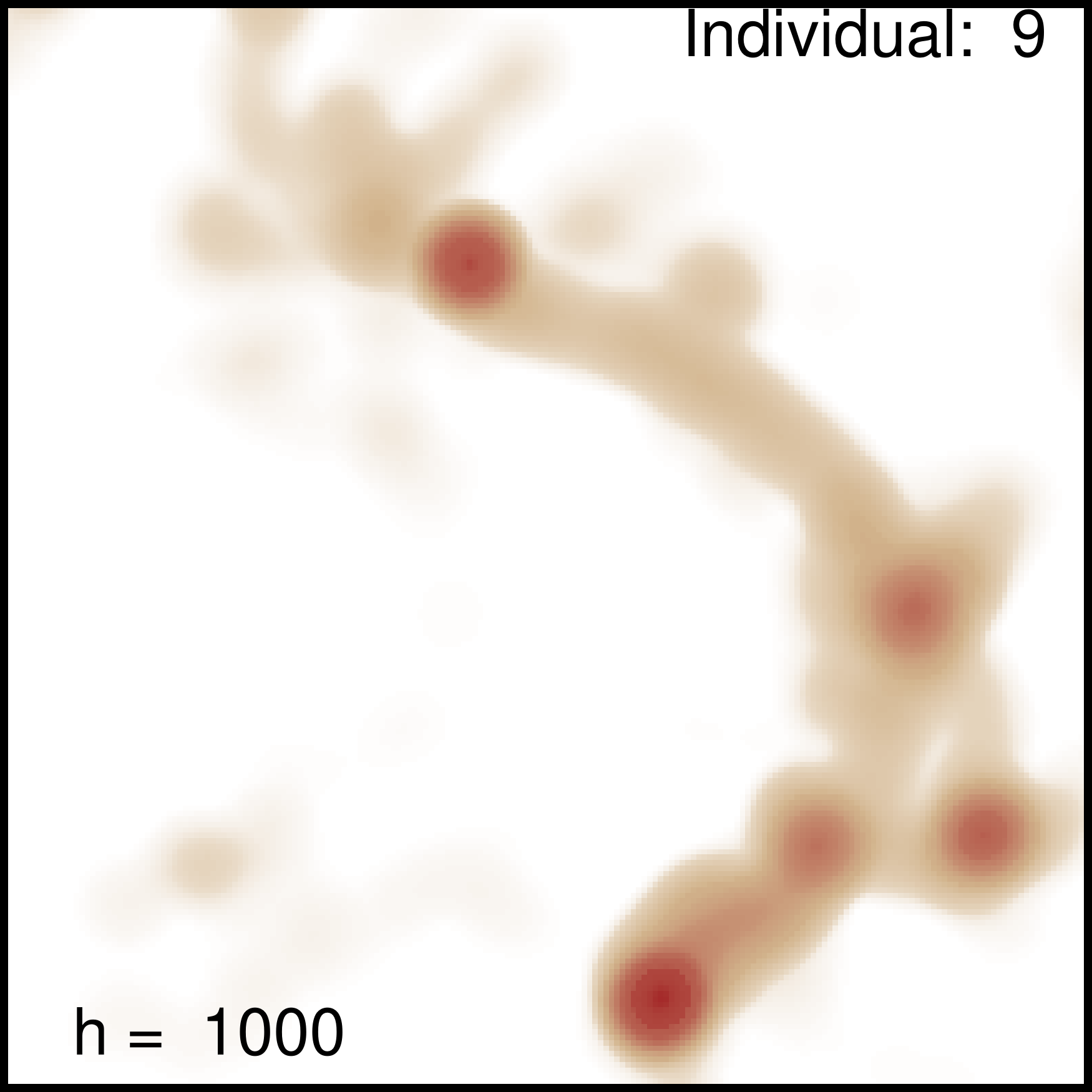}
\caption{
The effect of smoothing bandwidth on density ranking.
}
\label{fig::h_vary}
\end{figure}


We explore the sensitivity of our methodology with respect to the choice of smoothing bandwidth. In Figure~\ref{fig::h_vary} we display the density ranking of locations observed for individual 9 in the GPS pilot study under different smoothing bandwidths $h$. The top left ($h=50$) and top middle ($h=100$) panels show under-smoothing: paths that connect anchor locations are separated into disjoint pieces. On the other hand, the bottom middle ($h=500$) and bottom right ($h=1000$) panels show over-smoothing: although the paths connecting anchor locations were recovered, many fine structures appear blurred due to excessive smoothing. The upper right ($h=200$) and the bottom left ($h=300$) panels seem to show an appropriate level of smoothing. We picked $h=200$ for the analysis presented in this paper.

We remark that, when evaluating density ranking, the resolution of the underlying grid of cells employed is important.  Grids with higher resolution are always preferred if the computational cost is not an issue: when the resolution of the grid is decreased, the sensitivity of density ranking to the choice of smoothing bandwidth is increased. However, one has to keep in mind that when the resolution of the grids improves, the computational cost
also increases. So, in practice, one has to balance between the quality of resolution and the computational burden.

\section{Theoretical assumptions} \label{sec::assumption}

We denote by $f^{(\ell)}(x)$ the $\ell$-th derivative of a function $f(x)$. A function is called a Morse function if all its critical points are non-degenerated.  In the development of our theoretical results, we make the following assumptions:
\begin{itemize}
\item[\bf(K1)] $K(x)$ has compact support and is non-increasing on $[0,1]$ and has at least second-order bounded derivative
and
$$
\int x^2 K^{(\beta)}(x) dx <\infty, \qquad \int  \left(K^{(\beta)}(x)\right)^2 dx <\infty
$$
for $\beta \leq 2$ and $K^{(2)}(0) <0$
and $K^{(2)}(0)\geq k_2 >0$
for some constant $k_2$.

\item[\bf(K2)] Let 
\begin{align*}
\mathcal{K}_\beta &= \left\{y\mapsto K^{(\beta)}\left(\frac{x-y}{h}\right): x\in\mathbb{R}, \bar{h}>h>0\right\},
\end{align*}
be a collection of $\beta$-th derivatives of kernel functions,
where $\bar{h}$ is some positive number. Let $\mathcal{K}^*_2 = \bigcup_{r=0}^2 \mathcal{K}_r$.

We assume that $\mathcal{K}^*_2$ is a VC-type class. i.e. 
there exists constants $A,v$ and a constant envelope $b_0$ such that
\begin{equation}
\sup_{Q} N(\mathcal{K}^*_2, \cL^2(Q), b_0\epsilon)\leq \left(\frac{A}{\epsilon}\right)^v,
\label{eq::VC}
\end{equation}
where $N(T,d_T,\epsilon)$ is the $\epsilon$-covering number for a
semi-metric set $T$ with metric $d_T$ and $\cL^2(Q)$ is the $L_2$ norm
with respect to the probability measure $Q$.


\item[\bf (S)]
$\mathcal{A}$ contains a finite number of points. $\mathcal{R}$ is the union of a finite number of smooth curves, and each of these curves is a closed set such that $\mathcal{A}\subset \mathcal{R}$. Furthermore, we assume that these curves intersect each other in a finite number of points. $\mathcal{O}$ is a compact and smooth set.
Moreover, there exists positive constants $a_0,A_0$ such that $a_0\leq \p_0(x)\leq A_0$ for 
$x\in \mathcal{A}$, and $a_0\leq \p_1(x)\leq A_0$ for 
$x\in \mathcal{R}$, and $a_0\leq \p_2(x)\leq A_0$ for 
$x\in \mathcal{O}$.

\item[\bf (P1)] 
The one dimensional density function $\p_1(x)$ is a Morse function on $\mathcal{R}$, and has bounded fourth order derivatives.
\item[\bf (P2)]
The two dimensional density function $\p_2(x)$ is a Morse function on $\mathcal{O}$, and has bounded fourth order derivatives.

\end{itemize}

Assumption (K1) is a common condition on kernel functions \citep{wasserman2006all,scott2015multivariate} to control the bias and variance of the KDE for both density and density derivative estimation.  Assumption (K2) regularizes the complexity of kernel functions so  
we have the uniform convergence of the KDE \citep{Gine2002,Einmahl2005, genovese2014nonparametric,
chen2015asymptotic} and its derivatives when the probability density function exists and is smooth.
Note that most common kernel functions, such as the Gaussian kernel, quartic kernel, or any compact support kernel, satisfy both assumptions (K1) and (K2).  The purpose of assumption (S) is to  regularize the behavior of the supports. The first part requires that every anchor points is connected to a road. The later part of the assumption assumes the densities are bounded from both the above and from $0$. This means that the activity pattern on an anchor point, a road, or an open space has to be different. Both assumptions are very reasonable for GPS data. Note that assumption (S) implies that each curve (path) is a one dimensional smooth manifold. 

%

Assumptions (P1) and (P2) require that the critical points of $\p_1$ and $\p_2$
are well-defined. 
Although these assumptions seem to be technical, they are quite reasonable because the actual (realized) speed of travel on a road is often location-specific: there will be regions with higher driving speeds, and regions with lower driving speeds due to legal speed limits, intersections, built environment or natural obstacles. 


\section{Proofs of theoretical results} \label{sec::proofs}

For any set $A$ and a positive number $r_0$, we define the set
$$
A\oplus r_0 = \{x: d(x,A)\leq r_0\},
$$
where $d(x,A) = \inf_{y\in A}\|x-y\|$ is the shortest distance from point $x$ to $A$.

\begin{proof}[Proof of Theorem~\ref{lem::equiv}]

{\bf Case of $\omega(x)=0$.} By definition, we have $\omega(x)=0\Leftrightarrow x\in\mathcal{A}\Leftrightarrow \p_0(x)>0$.
Thus, all we need to prove is the equivalence to the last definition. 
When $\p_0(x)>0$ and $\pi_0>0$, for any positive integer $s$, 
$$
\frac{\P_{\sf GPS}(B(x,r))}{r^s} \geq \frac{\pi_0 \P_{0}(B(x,r))}{r^s}  
$$
will diverge when $r\rightarrow 0$. 
Thus, 
$\max \{s: \mathcal{H}_s(x)<\infty\} = 0$. 
This proves that $\omega(x)=0$ implies $\max \{s: \mathcal{H}_s(x)<\infty\} = 0$. 
On the other hand, $\max \{s: \mathcal{H}_s(x)<\infty\} = 0$ implies 
that there is a point mass at $x$, so $\p_0(x) > 0$ and thus, $\omega(x) = 0$. 
This proves the case of $\omega(x)= 0$.

{\bf Case of $\omega(x)=1$.} Recall that, by definition, $\omega(x)=1\Leftrightarrow x\in\mathcal{R}\backslash\mathcal{A}$.
Using the definition of the support $\mathcal{A}$ and $\mathcal{R}$, $\omega(x)=1$ implies that $\p_0(x)=0$ ($x\notin \mathcal{A}$), $\p_1(x)>0$ ($x\in \mathcal{R}$), and $p_0(x) = 0$ with $p_1(x)>0$ implies $\omega(x)=1$. Thus we proved the equivalence of the first and the third definition. 

To show the equivalence to the second definition, the condition that $\p_1(x) \leq A_0$ for every $x\in \mathcal{R}$ implies that 
$\p_1(x)>0\Longrightarrow \mathcal{H}_1(x)<\infty$. 
And $\frac{\P_{\sf GPS}(B(x,r))}{r^2}\geq \frac{\pi_1\P_1(B(x,r))}{r^2}$
will diverge when $r\rightarrow 0$ so $\mathcal{H}_2(x)= \infty$.
Therefore, $\omega(x) = 1\Rightarrow \max \{s: \mathcal{H}_s(x)<\infty\} = 1$.

On the other hand, $\max \{s: \mathcal{H}_s(x)<\infty\} = 1$ implies $\mathcal{H}_2(x)= \infty$ and $\mathcal{H}_1(x)<\infty$.  We have $\mathcal{H}_2(x)= \infty\Rightarrow x\in \mathcal{A}\cup \mathcal{R}$, and also 
\begin{eqnarray*}
\mathcal{H}_1(x) & = & \lim_{r\rightarrow 0}\frac{P_{\sf GPS}(B(x,r))}{C_1 r},\\
 & = & \lim_{r\rightarrow 0}\frac{\pi_0 P_0(B(x,r)) + \pi_1 P_1(B(x,r))+\pi_2 P_2(B(x,r))}{C_1 r}.
\end{eqnarray*}
\noindent Thus we must have $\p_0(x) = 0$ because otherwise the first term will diverge. Moreover, $\p_0(x) = 0\Rightarrow x\notin \mathcal{A}$. Thus, $x\in \mathcal{R}\backslash \mathcal{A} \Leftrightarrow \omega(x)=1$.

{\bf Case of $\omega(x)=2$}. Because $\omega(x)=2\Leftrightarrow x\notin\mathcal{A}\cup\mathcal{R}$, $\p_0(x)=\p_1(x)= 0$. Thus, $\p_2(x) \geq 0$ so this is equivalent to the third definition. Now we derive the equivalence to the second definition. Assumption (S) implies that $\mathcal{A}\cup \mathcal{R}$ is a closed set. Thus, for a point $x\notin \mathcal{A}\cup \mathcal{R}$, there exists a constant $r_0>0$ such that 
$B(x,r_0)\cap (\mathcal{A}\cup \mathcal{R}) = \emptyset$. It follows that, when $r<r_0$, we have
$$
\frac{\P_{\sf GPS}(B(x,r))}{C_2 r^2} = \frac{\pi_2 \P_{2}(B(x,r))}{C_2 r^2}.
$$
Therefore
$$
\mathcal{H}_2(x) = \lim_{r\rightarrow 0}\frac{\P_{\sf GPS}(B(x,r))}{C_2 r^2}
= \lim_{r\rightarrow 0}\frac{\pi_2 \P_{2}(B(x,r))}{C_2 r^2} = \pi_2\p_2(x)<\infty
$$
which proves that $\omega(x)=2\Rightarrow \max\{s: \mathcal{H}_s(x)<\infty, s=0,1,2\}=2$.

To prove the other direction, we assume that $\mathcal{H}_2(x)<\infty$, and try to prove that $\p_0(x)= \p_1(x) = 0$ (this is equivalent to $\omega(x) = 2$).  We proceed by proof by contradiction.  Assume that $\p_0(x)$ or $\p_1(x)$ are positive.
By definition,
$$
\mathcal{H}_2(x) = \lim_{r\rightarrow 0}\frac{\P_{\sf GPS}(B(x,r))}{C_2 r^2}\geq 
\lim_{r\rightarrow 0}\frac{\P_{j}(B(x,r))}{C_2 r^2}
$$
for $j=0$ and $1$. 
Because $\p_0(x)$ or $\p_1(x)$ are positive, 
$$
\lim_{r\rightarrow 0}\frac{\P_{j}(B(x,r))}{C_2 r^2}
$$
will diverge, which implies $\mathcal{H}_2(x)= \infty$, a contradiction.
Thus, we conclude that $\p_0(x)=\p_1(x) = 0$, which completes the proof.


%

\end{proof}

\begin{proof}[Proof of Theorem~\ref{thm::alpha}]
By Theorem~\ref{lem::equiv}, we have
$$
\omega(x) = \max \{s: \mathcal{H}_s(x)<\infty, s=0,1,2\},
$$
hence $\alpha(x)$ is equivalent to the one defined in \cite{chen2016generalized}.

We will prove this theorem using Theorem 10 of \cite{chen2016generalized}. Note that Theorem 10 of \cite{chen2016generalized} requires four assumptions, two assumptions on kernel functions that are the same as ours. The other two assumptions of Theorem 10 of \cite{chen2016generalized} include a manifold assumption denoted as (S') and a density assumption denoted as (P'). 

To apply their result, we need to verify that our assumptions (S) and (P1-2) are sufficient to satisfy their assumptions (S') and (P').  Assumption (P') requires that $\p_1$ and $\p_2$ are Morse functions with bounded continuous fourth-order derivative, and each $\p_j(x)$ is uniformly bounded from the above and from $0$. Thus, our assumption (S) and (P1-2) are sufficient to the assumption (P').  Assumption (S') requires that the set $\mathcal{R}$ is a smooth manifold. Our assumption (S) only requires that $\mathcal{R}$ is the union of finite number of smooth curves ($1D$ manifolds), which is weaker than assumption (S'). 

We will argue that their results still apply. Recall that  assumption (S) requires that curves of $\mathcal{R}$ only intersect on a finite number of points.
Let $\mathbb{I}$ be the collection of these intersections. Let $r_n\rightarrow0$ be a sequence of positive numbers, and restrict our attention to the data within $(\mathbb{I}\oplus r_n)^C= \{x: d(x, \mathbb{I})\geq r_n\}$. Then under assumption (S), $\p_1$ within $(\mathbb{I}\oplus r_n)^C$
has a support that is a smooth manifold, which satisfies assumption (S'). Thus, Theorem 10 of \cite{chen2016generalized} applies and we obtain the desired result on $(\mathbb{I}\oplus r_n)^C$. Note that this introduce an error of $\P_{\sf GPS}(\mathbb{I}\oplus r_n)  = O(r_n)$, which shrinks to $0$ when $r_n\rightarrow0$. Therefore, asymptotically we obtain the desired result. 

Note that Theorem~\ref{thm::conv_set} is substantially a different result compared to the results presented in \citet{chen2016generalized}. Their results focus on the convergence of a function estimator rather than a level set estimator. Convergence of a function estimator does not imply the corresponding level sets converge. 
\end{proof}

Before proving Theorem~\ref{thm::conv_set}, we first define some useful notations. Under assumption (P2), there is a global minimum of $\p_1(x)$ on $\mathcal{R}$. Let $m_0$ be this global minimum. 

We define a sequence of sets 
\begin{equation*}
W_n = (\mathcal{A}\cup \mathcal{R}) \backslash (m_0 \oplus h^{-\frac{1}{4}}).
\end{equation*}
This sequence will be very useful when proving the second assertion of Theorem~\ref{thm::conv_set}. 

For any set $A$, we define
$\P_{\sf GPS}(A) = \P(X\in A)$ such that $X$ has a distribution function $\P_{\sf GPS}$ and $\hat{\P}_n(A) = \frac{1}{n}\sum_{i=1}^n I(X_i \in A)$.
Note that $\hat{\P}_n(\hat{A}_\gamma) = \gamma + O(n^{-1})$ by construction. 

Finally, we prove two useful lemmas. 
\begin{lem}
Given assumptions (K1-2) and (S) and (P1), we have
\begin{align*}
\P\left(\hat{A}_{\pi_0}\subset (\mathcal{A}\oplus h)\right)&\rightarrow 1.
\end{align*}
If we further assume (P2), we have
$$
\P\left(\hat{A}_{\pi_0+\pi_1}\subset ((\mathcal{A}\cup\mathcal{R})\oplus h)\right)\rightarrow 1.
$$
\label{lem::inside}
\end{lem}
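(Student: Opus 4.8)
The plan is to show that points far from $\mathcal{A}$ (respectively, far from $\mathcal{A}\cup\mathcal{R}$) cannot have density ranking as large as $1-\pi_0$ (respectively, $1-\pi_0-\pi_1$), with probability tending to one. For the first assertion, fix a point $x$ with $d(x,\mathcal{A})>h$. Because the quartic (compact-support) kernel vanishes outside a ball of radius $h$, the atomic component $\P_0$ contributes nothing to $\hat{\sf p}(x)$: only sample points within distance $h$ of $x$ matter, and those come only from the one- and two-dimensional components. Under assumptions (K1--2) and (S), a standard uniform-convergence argument for the KDE away from $\mathcal{A}$ — using that $\mathcal{K}^*_2$ is VC-type — gives that $\sup_{x:\,d(x,\mathcal{A})>h}\hat{\sf p}(x)$ is bounded above by something of order $h^{-1}$ (coming from the one-dimensional part $\p_1$), uniformly, with high probability. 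On the other hand, at the $\lceil (1-\pi_0) n\rceil$-th ranked sample point, which by definition sits on the boundary of $\hat{A}_{\pi_0}$ since $\hat{\P}_n(\hat{A}_{\pi_0})=\pi_0+O(n^{-1})$, the value of $\hat{\sf p}$ diverges: a positive fraction $\pi_0$ of the sample falls exactly on the finitely many atoms of $\mathcal{A}$ (or within distance $h$ of them), so at least one atom carries mass at least $\pi_0 a_0/|\mathcal{A}|$ asymptotically, and the KDE there is of order $h^{-2}$. Comparing the two rates $h^{-2}\gg h^{-1}$ forces the threshold defining $\hat{A}_{\pi_0}$ to exceed $\sup_{d(x,\mathcal{A})>h}\hat{\sf p}(x)$ eventually, hence $\hat{A}_{\pi_0}\subset(\mathcal{A}\oplus h)$ with probability tending to one.

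For the second assertion I would argue analogously one dimension down. Fix $x$ with $d(x,\mathcal{A}\cup\mathcal{R})>h$; then within distance $h$ of $x$ only the two-dimensional component $\P_2$ contributes, so $\hat{\sf p}(x)$ is of order $1$ (bounded by $A_0$ up to constants and uniform error), uniformly over such $x$ by the same VC-type empirical-process bound. Meanwhile the threshold defining $\hat{A}_{\pi_0+\pi_1}$ is the $\lceil(1-\pi_0-\pi_1)n\rceil$-th order statistic of $\{\hat{\sf p}(X_i)\}$; a fraction $\pi_0+\pi_1$ of the sample lies on $\mathcal{A}\cup\mathcal{R}$, and at a typical point of $\mathcal{R}$ the KDE is of order $h^{-1}$ (the one-dimensional rate), which dominates the $O(1)$ bound on the complement of $(\mathcal{A}\cup\mathcal{R})\oplus h$. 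So again the threshold separates $\mathcal{R}\cup\mathcal{A}$ from its $h$-complement with probability tending to one. The only delicate point here is to check that enough mass on $\mathcal{R}$ achieves the full $h^{-1}$ rate and is not, say, concentrated near the global minimum $m_0$ of $\p_1$ where the one-dimensional density could be small; but since $\p_1\geq a_0>0$ on all of $\mathcal{R}$ by assumption (S), every point of $\mathcal{R}$ has KDE of order at least $a_0 h^{-1}$, which suffices.

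The main obstacle I anticipate is making the ``threshold separates the rates'' step rigorous and uniform: one needs (i) a uniform upper bound on $\hat{\sf p}$ over the relevant complement set that holds with high probability — this is where assumption (K2), the bandwidth condition $nh^2/\log n\to\infty$, and the uniform KDE convergence results cited in the discussion of (K2) enter — and (ii) a matching high-probability lower bound on the appropriate order statistic of $\{\hat{\sf p}(X_i)\}$, which requires controlling how many sample points land within $h$ of $\mathcal{A}$ (respectively $\mathcal{R}$) and showing the KDE genuinely blows up at the $h^{-2}$ (respectively $h^{-1}$) rate there rather than merely at the complement's rate. Both are concentration statements: a Bernstein/Talagrand bound for the empirical measure of the shrinking neighborhoods $\mathcal{A}\oplus h$ and $\mathcal{R}\oplus h$, combined with the deterministic kernel-localization fact that the compact support of $K$ decouples the three mixture components once the evaluation point is at distance $>h$ from the lower-dimensional supports. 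Once those two bounds are in hand, the inclusion follows by simply comparing the separated rates, so the bulk of the work is in the empirical-process estimates rather than in any geometric subtlety.
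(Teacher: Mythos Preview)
Your proposal is correct and rests on the same two pillars as the paper's proof: (i) a rate separation for the KDE between a neighborhood of the lower-dimensional support and its complement, and (ii) an empirical-process/VC concentration bound to transfer population mass statements to empirical ones. The organization differs slightly. The paper argues by contrapositive and mass counting: it first asserts that $\min_{x\in\mathcal{A}\oplus h/2}\hat{\sf p}(x)$ eventually dominates $\sup_{x\notin\mathcal{A}\oplus h}\hat{\sf p}(x)$, so if $\hat{A}_{\pi_0}$ contained any point outside $\mathcal{A}\oplus h$ it would have to contain \emph{all} of $\mathcal{A}\oplus h/2$; but $\P_{\sf GPS}(\mathcal{A}\oplus h/2)=\pi_0+O(h)$, so by a VC bound on $\sup_{r>0}|\hat{\P}_n(\mathcal{A}\oplus r)-\P_{\sf GPS}(\mathcal{A}\oplus r)|$ the empirical mass of $\mathcal{A}\oplus h/2$ already exceeds $\pi_0$, contradicting $\hat{\P}_n(\hat{A}_{\pi_0})=\pi_0+O(n^{-1})$. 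You instead bound the threshold directly from below by the $h^{-2}$ rate and the complement's supremum from above by the $h^{-1}$ rate. The paper's route is a bit cleaner because it never needs a quantitative lower bound on the threshold itself---only the ordering between the two regions plus the mass argument---whereas your version has to be careful that the $\lceil(1-\pi_0)n\rceil$-th order statistic really sits at the $h^{-2}$ scale (which, as you note, requires the halo $\mathcal{A}\oplus h/2$ and not just the atoms, since the empirical atomic mass can undershoot $\pi_0$). Once you make that precise you recover exactly the paper's VC step, so the two arguments converge.
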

\begin{proof}
{\bf First assertion.}
Let $\mathbb{K}_0(h) = (\mathcal{A}\oplus h)^C$ be the complement of $(\mathcal{A}\oplus h)$.
Note that 
$$
\hat{A}_{\pi_0}\subset (\mathcal{A}\oplus h) \Leftrightarrow \hat{A}_{\pi_0}\cap \mathbb{K}_0(h) = \emptyset.
$$
We will prove the result by showing that 
$\P(\hat{A}_{\pi_0}\cap \mathbb{K}_0(h) \neq \emptyset)\rightarrow 0$.

Because $\P_{\sf GPS}$ has point mass on $\mathcal{A}$, $\hat{\sf p}$ diverges much faster than
other regions. Moreover, because the kernel function $K(x)$ is smooth due to assumption (K1),
points around $\mathcal{A}$ also get smoothing effect from $\mathcal{A}$. As a result, any point within $\mathcal{A}\oplus \frac{h}{2}$ will gain the smoothing effect from $\mathcal{A}$.
Therefore, when $h\rightarrow0$, 
$$
\P\left(\sup_{x\in \mathbb{K}_0(h)} \hat{\sf p}(x) \leq \min_{x\in \mathcal{A}\oplus \frac{h}{2}} \hat{\sf p}(x)\right)\rightarrow 1. 
$$

Thus, if $\hat{A}_{\pi_0}$ contains any point in $\mathbb{K}_0(h)$, 
with a probability tending to 1, $\hat{A}_{\pi_0}$ must contain $\mathcal{A}\oplus \frac{h}{2}$. 
Because $\P_{\sf GPS}(\mathcal{A}\oplus \frac{h}{2}) = \pi_0 + O(h)$, 
$\hat{\P}(\mathcal{A}\oplus \frac{h}{2}) \geq \pi_0 + O(h)-\Delta_n$,
where $\Delta_n = \sup_{r>0} |\hat{\P}(\mathcal{A}\oplus r)-\P_{\sf GPS}(\mathcal{A}\oplus r)|$.
As a result, a necessary condition of $\hat{A}_{\pi_0}\cap \mathbb{K}_0(h) \neq \emptyset$
is $\hat{\P}(\mathcal{A}\oplus \frac{h}{2})< \pi_0$, which requires
$$
\pi_0 + O(h)-\Delta_n \leq \hat{\P}\left(\mathcal{A}\oplus \frac{h}{2}\right)< \pi_0.
$$
Thus, we need $\Delta_n > O(h)$. 

Because the set $\{\mathcal{A}\oplus r: r>0\}$ has a VC dimension $1$,
due to VC theory (e.g., Theorem 2.43 of \citealt{wasserman2006all}), 
$\P(\Delta_n >\epsilon) \leq a_0 n e^{-8n\epsilon^2}$ for some constant $a_0$.
Under the assumption that $\frac{nh^2}{\log n}\rightarrow \infty$,
we have 
$\P(\Delta_n >O(h)) \rightarrow 0$,
and this implies that $\P(\hat{A}_{\pi_0}\cap \mathbb{K}_0(h) \neq \emptyset) \rightarrow 0$,
the desired result.

{\bf Second assertion.}
The proof of the second assertion follows the same way as the first assertion.
The key is replacing $\mathcal{A}$ by $\mathcal{A}\cup \mathcal{R}$
and use the fact that any point within $(\mathcal{A}\cup \mathcal{R})\oplus \frac{h}{2}$
diverges faster than any point outside $(\mathcal{A}\cup \mathcal{R})\oplus h$. 
Thus, we omit the proof.

%

\end{proof}

\begin{lem}
We assume (K1-2) and (S) and (P0-1). Given the above notations, we have
$$
\P\left(W_n\subset \hat{A}_{\pi_0+\pi_1}\right)\rightarrow 1.
$$
\label{lem::Wn}
\end{lem}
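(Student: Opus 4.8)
The plan is to reduce the inclusion $W_n\subset\hat{A}_{\pi_0+\pi_1}$ to a uniform lower bound on $\hat{\alpha}$ over $W_n$ carrying an explicit (vanishing) margin, and then to transport that margin from the population level. Since $\hat{A}_{\pi_0+\pi_1}=\{x:\hat{\alpha}(x)\ge\pi_2\}$ and $\hat{\alpha}(x)=\tfrac1n\#\{i:\hat{\sf p}(X_i)\le\hat{\sf p}(x)\}$, it suffices to show that, on an event of probability tending to one, $\#\{i:\hat{\sf p}(X_i)\le\hat{\sf p}(x)\}\ge\pi_2 n$ for every $x\in W_n$ simultaneously. At the population level one computes, for $x\in\mathcal{R}\setminus\mathcal{A}$, $\alpha(x)=\pi_2+\pi_1\,\P_1(\{y:\p_1(y)\le\p_1(x)\})$: the $\omega=2$ mass is \emph{exactly} $\pi_2$, since $\P_2$ is absolutely continuous and $\P_0,\P_1$ charge no two-dimensional set, so every point of $\mathcal{R}$ ranks above it (this uses the definitions of $\succeq$ and $\omega$, cf.\ Theorem~\ref{lem::equiv}, and assumption (S)). Removing a neighbourhood of the global minimiser $m_0$ of $\p_1$ of radius $\asymp h^{1/4}$ in the definition of $W_n$ forces $\p_1(x)-\p_1(m_0)\gtrsim h^{1/2}$ on $W_n\cap(\mathcal{R}\setminus\mathcal{A})$ by the Morse property (P1); hence $\P_1(\{\p_1\le\p_1(x)\})\ge c_2 h^{1/4}$, because the sublevel set of $\p_1$ at height $\eta$ above its minimum has arc length $\asymp\sqrt{\eta}$ and $\p_1\ge a_0$ by (S). Thus $\alpha\ge\pi_2+\pi_1 c_2 h^{1/4}$ on $W_n\cap(\mathcal{R}\setminus\mathcal{A})$, while $\alpha\equiv1$ on $\mathcal{A}$.

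The second step controls $\hat{\sf p}$ on the singular set $\mathcal{A}\cup\mathcal{R}$. Let $\mathbb{I}$ denote the finite set of endpoints and mutual intersections of the curves forming $\mathcal{R}$. Using (K1--2), (S) and (P1), one shows that on a single event of probability tending to one: (i) $\inf_{x\in\mathcal{R}}\hat{\sf p}(x)\gtrsim h^{-1}$, since any point of $\mathcal{R}$ sees an arc of road of length $\asymp h$ with density $\ge a_0$, and extra branches near $\mathbb{I}$ only help; (ii) $\hat{\sf p}(x)\gtrsim h^{-2}$ for $x\in\mathcal{A}$, from the atom at $x$ (of multiplicity $\approx n\pi_0\p_0(x)$); (iii) $\sup\{\hat{\sf p}(z):z\notin(\mathcal{A}\cup\mathcal{R})\oplus h\}\le M$ for a constant $M$, by ordinary two-dimensional KDE bounds using $\p_2\le A_0$; and (iv) uniformly over $x\in\mathcal{R}$ outside $(\mathcal{A}\cup\mathbb{I})\oplus h$, $\bigl|\,h\,\hat{\sf p}(x)-c_K\pi_1\p_1(x)\,\bigr|\le\epsilon_n$ with $\epsilon_n=O(h)+O_P\bigl(\sqrt{\log n/(nh)}\bigr)$ and $c_K=\int_{\R}K(|s|)\,ds$ (the kernel has compact support, so the anchor atoms contribute nothing to $\hat{\sf p}(x)$ once $x$ is at distance $>h$ from $\mathcal{A}$). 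The hypothesis $nh^2/\log n\to\infty$ is \emph{exactly} what makes $\epsilon_n=o(h^{1/2})$. Consequently, on that event, (a) every two-dimensional point at distance $>h$ from $\mathcal{A}\cup\mathcal{R}$ has $\hat{\sf p}$-value $\le M<\hat{\sf p}(x)$ for all $x\in\mathcal{R}$, and (b) two road points $X_i,x$ both outside $(\mathcal{A}\cup\mathbb{I})\oplus h$ satisfy $\hat{\sf p}(X_i)\le\hat{\sf p}(x)$ whenever $\p_1(X_i)\le\p_1(x)-\epsilon_n$.

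Combining, for $x\in W_n\cap(\mathcal{R}\setminus\mathcal{A})$ I would lower-bound $\#\{i:\hat{\sf p}(X_i)\le\hat{\sf p}(x)\}$ by the sizes of two ``safe'' index sets: the two-dimensional points at distance $>h$ from $\mathcal{A}\cup\mathcal{R}$ (all counted, by (a)), and the road points outside $(\mathcal{A}\cup\mathbb{I})\oplus h$ with $\p_1(X_i)\le\p_1(x)-\epsilon_n$ (all counted, by (b)). By VC concentration --- the relevant classes are indexed by the scalar $\p_1(x)$, with empirical fluctuation $O_P(\sqrt{n\log n})=o_P(nh^{1/4})$, again using $nh^2/\log n\to\infty$ --- the first set has size $\ge n\pi_2\bigl(1-O(h)\bigr)-o_P(nh^{1/4})$ because $\P_2\bigl((\mathcal{A}\cup\mathcal{R})\oplus h\bigr)=O(h)$; and, feeding in the population margin established above together with $\epsilon_n=o(h^{1/2})$ and $\P_1\bigl((\mathcal{A}\cup\mathbb{I})\oplus h\bigr)=O(h)$, the second has size $\ge n\pi_1\bigl(c_2 h^{1/4}-O(h)\bigr)-o_P(nh^{1/4})$. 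Summing, $\#\{i:\hat{\sf p}(X_i)\le\hat{\sf p}(x)\}\ge n\pi_2+\Theta(nh^{1/4})-O(nh)-o_P(nh^{1/4})>n\pi_2$ for $n$ large, uniformly on $W_n\cap(\mathcal{R}\setminus\mathcal{A})$. For $x\in W_n\cap\mathcal{A}$ the argument is easier, with a constant margin: by (ii)--(iii), every $X_i$ with $\hat{\sf p}(X_i)>\hat{\sf p}(x)$ lies in $\mathcal{A}\cup\bigl((\mathcal{A}\oplus h)\setminus\mathcal{A}\bigr)$, so their number is $\le n\pi_0+n\,\P_{\sf GPS}\bigl((\mathcal{A}\oplus h)\setminus\mathcal{A}\bigr)+o_P(n)=n\pi_0+O(nh)+o_P(n)<n(1-\pi_2)$, i.e.\ $\hat{\alpha}(x)\ge\pi_2$ (we assume $\pi_1>0$; if $\pi_1=0$ then $\mathcal{R}=\mathcal{A}$ and the claim already follows from Lemma~\ref{lem::inside}). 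Intersecting the finitely many high-probability events above gives $\P(W_n\subset\hat{A}_{\pi_0+\pi_1})\to1$.

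I expect the main obstacle to be item (iv): the \emph{uniform}, resolution-$h^{1/2}$ control of $\hat{\sf p}$ along the singular lower-dimensional set $\mathcal{R}$ --- showing that the bias and the stochastic fluctuation of $h\,\hat{\sf p}$ on $\mathcal{R}$ are both $o(h^{1/2})$ uniformly, so that the ordering induced by $\hat{\sf p}$ agrees with that of $\p_1$ down to density gaps of size $h^{1/2}$. This is where the bandwidth condition $nh^2/\log n\to\infty$ enters and is essentially sharp, and it has to be dovetailed with two pieces of bookkeeping of comparable delicacy: the exceptional tubes around $\mathcal{A}$, $\mathbb{I}$ and $\mathcal{R}$ cost $O(nh)$ indices, which must remain $o(nh^{1/4})$; while the gain from the road points near $m_0$ must be $\Theta(nh^{1/4})$ --- precisely what the quadratic behaviour of the Morse function $\p_1$ at its minimum supplies once the $h^{1/4}$-neighbourhood is removed in the definition of $W_n$ (its sublevel set at height $\eta$ above the minimum having arc length $\asymp\sqrt{\eta}$, with $\eta\asymp h^{1/2}$). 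Matching the three scales $h$, $h^{1/2}$, $h^{1/4}$ against the bandwidth condition is the crux.
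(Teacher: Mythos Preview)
Your argument is correct and rests on the same three ingredients as the paper's: the Morse property of $\p_1$ at its global minimiser $m_0$ supplying a density gap $\p_1(x)-\p_1(m_0)\gtrsim\sqrt{h}$ on $W_n$ (hence a $\P_1$-margin of order $h^{1/4}$), the uniform control $\sup_{x\in\mathcal{R}\setminus(\mathcal{A}\oplus h)}|C^\dagger_1 h\,\hat{\sf p}(x)-\p_1(x)|=O(h)+O_P\bigl(\sqrt{\log n/(nh)}\bigr)$ (your item (iv); the paper invokes Theorem~8 of \cite{chen2016generalized}), and VC concentration under $nh^2/\log n\to\infty$. The organisational difference is that the paper argues by contradiction through the density threshold $\hat{\p}_*$: if some $x_0\in W_n$ were excluded from $\hat{A}_{\pi_0+\pi_1}$, then the entire sublevel set $\Gamma=\{x\in\mathcal{R}:\p_1(x)\le\p_1(m_0)+\tfrac13 c_1\sqrt{h}\}$ would also be excluded, forcing $\hat{A}_{\pi_0+\pi_1}\subset\bigl((\mathcal{A}\cup\mathcal{R})\oplus h\bigr)\setminus\Gamma$, a set whose $\P_{\sf GPS}$-mass falls short of $\pi_0+\pi_1$ by $\P_{\sf GPS}(\Gamma)\gtrsim h^{1/4}$. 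You instead count directly: every $x\in W_n$ outranks both the two-dimensional points outside $(\mathcal{A}\cup\mathcal{R})\oplus h$ and the road points in (essentially) $\Gamma$, and those two groups already exceed $n\pi_2$. Your route makes the source of the margin more transparent and handles $x\in\mathcal{A}$ explicitly; the paper's threshold contradiction is a bit shorter and avoids case-splitting on $x$. One small point to tighten: your item (iv) excludes $x$ within distance $h$ of $\mathbb{I}$ or $\mathcal{A}$, but there the additional branches or atoms only \emph{increase} $\hat{\sf p}(x)$, so the one-sided bound $h\,\hat{\sf p}(x)\ge c_K\pi_1\p_1(x)-\epsilon_n$ still holds, and that is all your comparison (b) requires.
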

\begin{proof}

Recall that $W_n = (\mathcal{A}\cup \mathcal{R}) \backslash (m_0\oplus h^{-1/4})$.
Because $m_0$ is the global minimum of $\p_1(x)$ on $\mathcal{R}$, and $\p_1(x)$ is a smooth function along $\mathcal{R}$, 
we can assume that every point of $W_n$ has a one dimensional density that is above or equal to $\p_1(m_0)+ c_1 \sqrt{h}$
for some constant $c_1$.
Namely,
$$
\inf_{x\in W_n} \p_1(x) \geq p_1(m_0) + c_1 \sqrt{h},
$$
where $c_1$ is a constant related to the second derivative of $\p_1(x)$. 

We will derive the probability by considering the complement event, i.e.,
$W_n\not\subset \hat{A}_{\pi_0+\pi_1}$,
and then show that such an event occurs with a probability tending to $0$. 
If $W_n\not\subset \hat{A}_{\pi_0+\pi_1}$, 
we can then find a point $x_0\in \hat{A}_{\pi_0+\pi_1}$ but $x_0\notin W_n$. 
Therefore, $\p_1(x_0) \geq \p_1(m_0) + c_1 \sqrt{h}$. 

Let $\hat{\p}_*$ be the density threshold used for constructing $\hat{A}_{\pi_0+\pi_1}$, i.e.,
$$
\hat{A}_{\pi_0+\pi_1} =  \{x: \hat{\alpha}(x)\leq \pi_0+\pi_1\} = \{x: \hat{\sf p}(x)\leq \hat{\p}_*\}. 
$$
Because $x_0\notin \hat{A}_{\pi_0+\pi_1}$, $\hat{\sf p}(x_0)< \hat{\p}_*$.

By Theorem~8 of \cite{chen2016generalized}, $\hat{\sf p}$ will be a consistent estimator of $\p_1$
after rescaling. Specifically, we have
$$
\Delta_{1,n} = \sup_{x\in \mathcal{R}\backslash (\mathcal{A}\oplus h)}|C^\dagger_1 \cdot h\cdot \hat{\sf p}(x)- \p_1(x)|= O(h) +O_P\left(\sqrt{\frac{\log n}{nh}}\right),
$$
where $C^\dagger_1$ is a constant depending only on the kernel function.
This implies that 
$$
C^\dagger_1 \cdot h\cdot\hat{\sf p}(x_0) \geq \p_1(x_0) - \Delta_{1,n} \geq \p_1(m_0)+ c_1 \sqrt{h} -\Delta_{1,n}.
$$
Thus,
\begin{equation}
C^\dagger_1 \cdot h\cdot \hat{\p}_* > \p_1(m_0)+ c_1 \sqrt{h} -\Delta_{1,n}.
\end{equation}

Consider a set 
$$
\Gamma = \left\{x\in\mathcal{R}: \p_1(x) \leq \p_1(m_0)+ \frac{1}{3}c_1 \sqrt{h}\right\}.
$$
Whenever $\Delta_{1,n} < \frac{1}{3}c_1 \sqrt{h}$, uniformly for every $x_1 \in \Gamma$, we have
\begin{align*}
C^\dagger_1 \cdot h\cdot \hat{\sf p}(x_1) &\leq  \p_1(x_1) +\Delta_{1,n}\\
&< \p_1(x_1)+\frac{1}{3}c_1 \sqrt{h} \\
&\leq  \p_1(m_0)+\frac{2}{3}c_1 \sqrt{h}\\
&\leq \p_1(m_0) + c_1h -\Delta_{1,n}\\
&<  C^\dagger_1 \cdot h\cdot \hat{\p}_* .
\end{align*}
Thus, $\Gamma \cap \hat{A}_{\pi_0+\pi_1} = \emptyset.$
When $h\rightarrow 0$ and $\frac{nh^2}{\log n}\rightarrow \infty$, 
$$
\P\left(\Delta_{1,n} < \frac{1}{3}c_1 h\right) \rightarrow 1
$$
because $\Delta_{1,n} = O(h) +O_P\left(\sqrt{\frac{\log n}{nh}}\right)$. 
Therefore, 
when $W_n\not\subset \hat{A}_{\pi_0+\pi_1}$, 
with a probability tending to $1$, 
$\Gamma \cap \hat{A}_{\pi_0+\pi_1} = \emptyset.$

Next, we will show that it is very unlikely that the set $\hat{A}_{\pi_0+\pi_1}$ does not contain the set $\Gamma$. By Lemma~\ref{lem::inside}, 
$\hat{A}_{\pi_0+\pi_1}\subset (\mathcal{A}\cup\mathcal{R})\oplus h$ 
with a probability tending to $1$.
In the event $W_n\not\subset \hat{A}_{\pi_0+\pi_1}$, we can assume $\Gamma \cap \hat{A}_{\pi_0+\pi_1} = \emptyset$
because it happens with a probability tending to $1$.
Then 
$$
\hat{A}_{\pi_0+\pi_1}  = \hat{A}_{\pi_0+\pi_1}\backslash \Gamma \subset
\left((\mathcal{A}\cup\mathcal{R})\oplus h\right)\backslash \Gamma.
$$
Let 
$$
\Delta_{2,n} = \sup_{r>0}|\hat{\P}_n(\left((\mathcal{A}\cup\mathcal{R})\oplus r\right)\backslash \Gamma)-
\P_{\sf GPS}(\left((\mathcal{A}\cup\mathcal{R})\oplus r\right)\backslash \Gamma)|.
$$
Then
\begin{align*}
\pi_0+\pi_1 = \hat{\P}_n(\hat{A}_{\pi_0+\pi_1}) &\leq \hat{\P}_n(\left((\mathcal{A}\cup\mathcal{R})\oplus h\right)\backslash \Gamma)\\
&\leq \P_{\sf GPS}(\left((\mathcal{A}\cup\mathcal{R})\oplus h\right)\backslash \Gamma) +\Delta_{2,n} \\
&\leq \pi_0+\pi_1 + O(h^2) +\Delta_{2,n} - \P_{\sf GPS}(\Gamma).
\end{align*}
Therefore,
we need $O(h^2) +\Delta_{2,n} - \P_{\sf GPS}(\Gamma)>0$
so that there is no contradiction within the inequalities. 
Because $\Gamma$ is the collection of points on $\mathcal{R}$ around $m_0$, $\P_{\sf GPS}(\Gamma) \geq c_2 h^{-1/4}$,
where $c_2$ is some constant that is related to the lower bound on the second derivative of $p_1$
around $m_0$. 
Thus,  a necessary condition (ignoring $O(h^2)$ since it is of a smaller order) is 
$\Delta_{2,n}> c_2 h^{-1/4}$, which occurs with a probability
$$
\P(\Delta_{2,n}> c_2 h^{-1/4}) \rightarrow 0,
$$
because the VC theory implies that $\Delta_{2,n}= O_P\left(\frac{\log n}{n}\right)$
and we require $\frac{nh^2}{\log n}\rightarrow \infty$. Thus a necessary condition to $W_n\not\subset \hat{A}_{\pi_0+\pi_1}$
is $\Delta_{2,n}> c_2 h^{-1/4}$, which occurs with a probability tending to $0$.
So we conclude that 
$$\P(W_n\not\subset \hat{A}_{\pi_0+\pi_1}) \rightarrow 0,$$
which implies
$$
\P(W_n\subset \hat{A}_{\pi_0+\pi_1})\rightarrow 1.
$$
\end{proof}

\begin{proof}[Proof of Theorem~\ref{thm::conv_set}]
{\bf Part I: recovering $\mathcal{A}$.}
Because of Lemma~\ref{lem::inside}, we assume that $\hat{A}_{\pi_0}\subset (\mathcal{A}\oplus h)$.
We will first prove that $\P\left(\mathcal{A}\subset \hat{A}_{\pi_0}\right)\rightarrow 1$,
and then show that $\P_{\sf GPS}(\hat{A}_{\pi_0}\backslash \mathcal{A})\overset{P}{\rightarrow} 0$.

Let $E_1 = \{\mathcal{A}\not\subset \hat{A}_{\pi_0}\} = \{\exists s_0\in \mathcal{A}: s_0\notin \hat{A}_{\pi_0}\}$. 
We now assume $E_1$ is true and study the probability $\P(E_1)$. Let $s_0\in\mathcal{A}$ and $s_0\notin \hat{A}_{\pi_0}$. 
Using Lemma~\ref{lem::inside}, $\hat{A}_{\pi_0}\subset (\mathcal{A}\oplus h)$ so
$$
\P_{\sf GPS}(\hat{A}_{\pi_0}) \leq \P_{\sf GPS}(\mathcal{A}\oplus h).
$$
Because $s_0\notin \hat{A}_{\pi_0}$, we can rewrite the above inequality as
\begin{equation}
\P_{\sf GPS}(\hat{A}_{\pi_0}) \leq \P_{\sf GPS}((\mathcal{A}\oplus h)\backslash s_0) \leq \pi_0 + O(h) - \p_0(s_0).
\label{eq::S0::1}
\end{equation}
Using the property that $\hat{\P}_n(\hat{A}_{\pi_0}) = \pi_0 + O(n^{-1})$, we obtain
\begin{equation}
\pi_0 + O(n^{-1}) = \hat{\P}_n(\hat{A}_{\pi_0}) \leq \hat{\P}_n((\mathcal{A}\oplus h)\backslash s_0)
\leq \P_{\sf GPS}((\mathcal{A}\oplus h)\backslash s_0) + \Delta'_{n},
\label{eq::S0::2}
\end{equation}
where $\Delta'_n  = \sup_{r>0}|\hat{\P}_n((\mathcal{A}\oplus r)\backslash s_0)-\P_{\sf GPS}((\mathcal{A}\oplus r)\backslash s_0)|$. 

When $E_1$ is true, both equations \eqref{eq::S0::1} and \eqref{eq::S0::2} must hold, which requires
$$
\pi_0 + O(n^{-1}) - \Delta'_{n} \leq \P_{\sf GPS}((\mathcal{A}\oplus h)\backslash s_0)\leq \pi_0 + O(h) - \p_0(s_0). 
$$
Namely, we need 
$$
\Delta'_n \geq \p_0(s_0) -O(h) + O(n^{-1}). 
$$
Again, using the fact that the set $\{(\mathcal{A}\oplus r)\backslash s_0: r>0\}$
has a VC dimension $1$, the VC theory implies
$$
\P(\Delta'_n>\epsilon) \leq c_0 n e^{-8 \cdot n\epsilon^2}. 
$$
Thus, 
$$
\P(E_1) \leq \P(\Delta'_n \geq \p_0(s_0) -O(h) + O(n^{-1})) \leq O(c_0 n e^{-8 \cdot n\p_0(s_0)^2}) \rightarrow 0.
$$
We conclude that
\begin{equation}
\P(\mathcal{A}\subset \hat{A}_{\pi_0})\rightarrow 1.
\label{eq::S0::p11}
\end{equation}

Now we prove the other result. 
Because $\hat{A}_{\pi_0}\subset (\mathcal{A}\oplus h)$ and the fact that
$$
\P_{\sf GPS}( (\mathcal{A}\oplus h)\backslash \mathcal{A}) = O(h),
$$
it follows that
$$
\P_{\sf GPS}(\hat{A}_{\pi_0}\backslash \mathcal{A}) \leq \P_{\sf GPS}( (\mathcal{A}\oplus h)\backslash \mathcal{A}) = O(h)
\rightarrow 0
$$
so $\P_{\sf GPS}(\hat{A}_{\pi_0}\backslash \mathcal{A})$ is a random variable with a bound shrinking at rate $O(h)$,
which implies
$\P_{\sf GPS}(\hat{A}_{\pi_0}\backslash \mathcal{A})\overset{P}{\rightarrow} 0 $.

Putting it altogether,
$$
\P_{\sf GPS}(\hat{A}_{\pi_0}\triangle \mathcal{A}) = \P_{\sf GPS}(\hat{A}_{\pi_0}\backslash \mathcal{A}) + \P(\mathcal{A}\backslash\hat{A}_{\pi_0} ),
$$
where the first quantity $\P_{\sf GPS}(\hat{A}_{\pi_0}\backslash \mathcal{A})\overset{P}{\rightarrow}  0$ as we have demonstrated in the above
and the second quantity 
$$
\P_{\sf GPS}(\mathcal{A}\backslash\hat{A}_{\pi_0} )
\begin{cases}
& =0,\quad \mbox{with a probability $\rightarrow 1$},\\
& \leq 1, \quad \mbox{with a probability $\rightarrow 0$}.
\end{cases}
$$
Hence $\P_{\sf GPS}(\hat{A}_{\pi_0} \triangle \mathcal{A})\overset{P}{\rightarrow}0$,
which proves the first assertion.

{\bf Part II: recovering $\mathcal{A}\cup\mathcal{R}$.} Again due to Lemma~\ref{lem::inside} we will assume 
$\hat{A}_{\pi_0+\pi_1}\subset ((\mathcal{A}\cup\mathcal{R})\oplus h)$.

We will make use of the set $W_n$ because by construction
$$
W_n\subset (\mathcal{A}\cup\mathcal{R}).
$$
Lemma~\ref{lem::Wn} states that with a probability tending to $1$,
$$
W_n \subset \hat{A}_{\pi_0+\pi_1}.
$$
Because $W_n$ is a subset of both $(\mathcal{A}\cup\mathcal{R})$ and $\hat{A}_{\pi_0+\pi_1}$,
\begin{align*}
(\mathcal{A}\cup\mathcal{R})\backslash \hat{A}_{\pi_0+\pi_1} &\subset (\mathcal{A}\cup\mathcal{R})\backslash W_n,\\
\hat{A}_{\pi_0+\pi_1}\backslash(\mathcal{A}\cup\mathcal{R}) &\subset \hat{A}_{\pi_0+\pi_1}\backslash W_n.
\end{align*}
Thus,
\begin{align*}
\P_{\sf GPS}((\mathcal{A}\cup\mathcal{R})\backslash \hat{A}_{\pi_0+\pi_1})&\leq
\P_{\sf GPS}((\mathcal{A}\cup\mathcal{R})\backslash W_n)\\
& \leq \P_{\sf GPS}((\mathcal{A}\cup\mathcal{R})\cap (m_0\oplus \sqrt{h}))\\
& = O(\sqrt{h}).
\end{align*}
Moreover, using the fact that $\hat{A}_{\pi_0+\pi_1}\subset ((\mathcal{A}\cup\mathcal{R})\oplus h)$,
\begin{align*}
\P_{\sf GPS}(\hat{A}_{\pi_0+\pi_1}\backslash(\mathcal{A}\cup\mathcal{R}))&\leq
\P_{\sf GPS}(\hat{A}_{\pi_0+\pi_1}\backslash W_n)\\
&\leq \P_{\sf GPS}(((\mathcal{A}\cup\mathcal{R})\oplus h)\backslash W_n)\\
& = O(h^2) + O(\sqrt{h})\\
& = O(\sqrt{h}).
\end{align*}
The above inequalities show that the two probabilities are bounded random variables
with a bound shrinking at rate $O(\sqrt{h})$,
so $\P_{\sf GPS}(\hat{A}_{\pi_0+\pi_1}\backslash(\mathcal{A}\cup\mathcal{R})) = O_P(\sqrt{h})$
and $\P_{\sf GPS}((\mathcal{A}\cup\mathcal{R})\backslash \hat{A}_{\pi_0+\pi_1}) = O_P(\sqrt{h})$.
When $h\rightarrow 0$, we conclude
$$
\P_{\sf GPS}(\hat{A}_{\pi_0+\pi_1}\triangle(\mathcal{A}\cup\mathcal{R}) ) \overset{P}{\rightarrow}0,
$$
which completes the proof.

\end{proof}

\begin{proof}[Proof of Theorem~\ref{thm::kde}]
Due to assumption (K1),
the kernel function is non-increasing between $x\in[0,1]$
so
$$
K(x)\geq \begin{cases}
K\left(\frac{1}{2}\right),\quad \mbox{if $x\leq1/2$},\\
0,\quad \mbox{if $x>1/2$}.
\end{cases}
$$
That is, $K(x)\geq K\left(\frac{1}{2}\right) I\left(x\leq\frac{1}{2} \right)$. Using this inequality, the KDE
\begin{align*}
\hat{\sf p}(x) &= \frac{1}{nh^2}\sum_{i=1}^n K\left(\frac{d(x,X_i)}{h}\right),\\
&\geq  \frac{1}{nh^2}\sum_{i=1}^n K\left(\frac{1}{2}\right) I\left(\frac{d(x,X_i)}{h}\leq\frac{1}{2} \right),\\
&= \frac{K\left(\frac{1}{2}\right)}{4} \frac{1}{n(h/2)^2} I\left(d(x,X_i)\leq\frac{h}{2} \right).
\end{align*}
Thus,
\begin{align*}
\E(\hat{\sf p}(x)) &\geq \frac{K\left(\frac{1}{2}\right)}{4}\frac{1}{(h/2)^2} P\left(d(x,X_i)\leq\frac{h}{2} \right)\\
& = \frac{\pi\cdot K\left(\frac{1}{2}\right)}{4} \frac{1}{\pi (h/2)^2} \P_{\sf GPS}\left(B\left(x,\frac{h}{2}\right)\right).
\end{align*}

Because $x\in\mathcal{A}\cup \mathcal{R}$, 
$\frac{1}{\pi (h/2)^2} \P_{\sf GPS}\left(B\left(x,\frac{h}{2}\right)\right) \rightarrow \infty$ when $h\rightarrow0$.
Therefore,
 $$
 \E(\hat{\sf p}(x))\rightarrow \infty
 $$
when $h\rightarrow0$, which completes the proof. 


\end{proof}

\end{document}